\keywords{rendez-vous protocols, cut-off problem, Petri nets}
\lstdefinelanguage{pseudo}{morekeywords={init,with,or,if,then,else,fi,and,not,while,do,od,distinct,
    case, goto,local,algorithm, function, for, each, times, from, to,
    variables, procedure, recursive, return},
  morecomment=[l]{//}, morecomment=[s]{/*}{*/},
  mathescape=true,escapechar={@},
  basicstyle=\sffamily\small,
  commentstyle=\itshape\rmfamily\small,
  keywordstyle=\sffamily\bfseries\small
}
\definecolor{processblue}{cmyk}{0.96,0,0,0}
\newcommand{\nn}{\mathbb{N}}
\newcommand{\zn}{\mathbb{Z}}
\newcommand{\qn}{\mathbb{Q}}
\newcommand{\qnz}{\mathbb{Q}_{\ge 0}}
\newcommand{\be}{\begin{enumerate}}
\newcommand{\ee}{\end{enumerate}}
\newcommand{\bc}{\begin{center}}
\newcommand{\ec}{\end{center}}
\newcommand{\bi}{\begin{itemize}}
\newcommand{\ei}{\end{itemize}}
\newcommand{\act}{\xrightarrow}
\newcommand{\gr}{\mathcal{G}}
\newcommand{\lead}{\texttt{lead}}
\newcommand{\fin}{\mathit{fin}}
\newcommand{\RatZero}{{\mathbb Q}_{\ge 0}}
\newcommand{\prot}{\mathcal{P}}
\newcommand{\configs}{\mathcal{C}}
\newcommand{\bx}{\mathbf{x}}
\newcommand{\boldy}{\mathbf{y}}
\newcommand{\net}{\mathcal{N}}
\newcommand{\pre}[1]{{}^{\overset{\bullet}{}} #1}
\newcommand{\post}[1]{#1^{\overset{\bullet}{}}}
\newcommand{\vect}{\mathbf{v}}
\newcommand{\init}{\mathit{init}}
\newcommand{\parikh}[1]{\overrightarrow{#1}}
\newcommand{\incidence}{\mathcal{A}}
\newcommand{\weight}{\mathit{w}}
\newcommand{\fromstate}{{\mathit{from}}}
\newcommand{\tostate}{{\mathit{to}}}
\newcommand{\lset}{\mathtt{lset}}
\newcommand\slice[2]{#1{\raise-.5ex\hbox{\ensuremath|}}_{#2}}
\newcommand{\graph}{\mathcal{G}}
\newcommand{\N}{\mathbb{N}}                    
\newcommand{\Z}{\mathbb{Z}}                    
\newcommand{\Q}{\mathbb{Q}}                    
\newcommand{\multiset}[1]{\Lbag#1\Rbag}        
\newcommand{\supp}[1]{{\llbracket#1\rrbracket}}  
\newcommand{\norm}[1]{\lVert#1\rVert}          
\newcommand{\Pre}{\mathit{Pre}} 
\newcommand{\Post}{\mathit{Post}} 
\renewcommand{\norm}[1]{\| {#1} \|}
\newcommand{\reachN}[1]{\xrightarrow{#1}}
\newcommand{\reachQ}[1]{\mathrel{\eqxrightarrow{\Q}{#1}}}
\newcommand{\eqxrightarrow}[2]{%
  \mathop{%
    \vtop{%
      \m@th 
      \offinterlineskip 
      \ialign{%
        \hfil##\hfil\cr
        \rightarrowfill\cr
        \hphantom{$\scriptstyle\mskip8mu{#2}\mskip8mu$}\cr
        \vrule height0pt width 1.5em\cr
        $\scriptscriptstyle {#1}$\cr
      }%
    }%
  }\limits^{#2}%
}
\begin{document}

\title{Finding Cut-Offs in Leaderless\texorpdfstring{\\}{ }Rendez-Vous Protocols is Easy}
\titlecomment{This project has received funding from the European Research Council (ERC) under the European Union’s Horizon 2020 research and innovation programme under grant agreement No 787367 (PaVeS)}

\author[A. R. Balasubramanian]{A. R. Balasubramanian\lmcsorcid{0000-0002-7258-5445}}[a]
\author[J. Esparza]{Javier Esparza\lmcsorcid{0000-0001-9862-4919}}[a]
\author[M. Raskin]{Mikhail Raskin\lmcsorcid{0000-0002-6660-5673}}[b]\thanks{The last author was affiliated with Technische Universit\"at M\"unchen when this work was done.}

\address{Technische Universit\"at M\"unchen, Munich, Germany}
\email{bala.ayikudi@tum.de, esparza@in.tum.de}

\address{LaBRI, University of Bordeaux, Talence, France}
\email{mikhail.raskin@u-bordeaux.fr}


\begin{abstract}
In rendez-vous protocols an arbitrarily large number of indistinguishable finite-state agents interact in pairs.
The cut-off problem asks if there exists a number $B$ such that all initial configurations of the protocol with at least $B$ agents in a given initial state can reach a final configuration with all agents in a given final state. In a recent paper~\cite{HornS20},  Horn and Sangnier proved that the cut-off problem is decidable (and at least as hard as the Petri net reachability problem) for protocols with a leader, and in \EXPSPACE\ for leaderless protocols. Further, for the special class of symmetric protocols they reduce these bounds to \PSPACE \ and \NP, respectively. The problem of lowering these upper bounds or finding matching lower bounds was left open. 
We show that the cut-off problem is \P-complete for leaderless protocols and in \NC \ for leaderless symmetric protocols. Further, we also 
consider a variant of the cut-off problem suggested in~\cite{HornS20}, which we call the bounded-loss cut-off problem  and prove that this problem is \P-complete for leaderless protocols
and \NL-complete for leaderless symmetric protocols. Finally, by reusing some of the techniques applied for the analysis of leaderless protocols, we show that the cut-off problem for symmetric protocols with a leader is \NP-complete, thereby improving upon all the elementary upper bounds of~\cite{HornS20}.
\end{abstract}

\maketitle

\section{Introduction}\label{sec:intro}
Distributed systems are often designed for an unbounded number of participating agents. 
Therefore, they are not just one system, but an infinite family of systems, one for each
number of  agents. Parameterized verification addresses the problem of checking that all systems in
the family satisfy a given specification. 

In many application areas, agents are indistinguishable. This is the case in computational biology, where cells or molecules have no identities; in some security applications, where the agents' identities should stay private; or in applications where the identities can be abstracted away, like certain classes of multithreaded programs \cite{GS92,AADFP06,SoloveichikCWB08,BaslerMWK09,KaiserKW10,NB15}. Following \cite{BaslerMWK09,KaiserKW10}, we use the term \emph{replicated systems} for distributed systems with indistinguishable agents. Replicated systems include population protocols, broadcast protocols, threshold automata, and many other models \cite{GS92,AADFP06,EsparzaFM99,DelzannoSTZ12,GmeinerKSVW14}. They also arise after applying a \emph{counter abstraction} \cite{PnueliXZ02,BaslerMWK09}. In finite-state replicated systems the global state of the system is determined by the function (usually called a \emph{configuration}) that assigns to each state the number of agents that currently occupy it.
This feature makes many verification problems decidable \cite{BJKK+15,Esparza16}. 

Surprisingly,  there is no a priori relation between the complexity of a parameterized verification question (i.e., whether a given property holds for all initial configurations, or, equivalently, whether its negation holds for some configuration),  and the complexity of its corresponding single-instance question (whether the property holds for a fixed initial configuration).  Consider replicated systems where agents interact in pairs \cite{GS92,HornS20,AADFP06}.  The complexity of single-instance questions is very robust. Indeed, checking most properties, including all properties expressible in LTL and CTL, is \PSPACE-complete \cite{Esparza96}. On the contrary, the complexity of parameterized questions is very fragile, as exemplified by the following example. While the existence of a reachable configuration that populates a given state with \emph{at least} one agent is in \P, and so well below \PSPACE, the existence of a reachable configuration that populates a given state with \emph{exactly} one agent is as hard as the reachability problem for Petri nets, and so non-primitive recursive \cite{Leroux21,Lasota22,CO21}. This fragility makes the analysis of parameterized questions very interesting, but also much harder.

Work on parameterized verification has concentrated on whether every initial configuration satisfies 
a given property (see e.g.\ \cite{GS92,EsparzaFM99,BaslerMWK09,KaiserKW10,DelzannoSTZ12}). However, applications often lead to questions of the form ``do all initial configurations \emph{in a given set} satisfy the property?'', ``do infinitely many initial configurations satisfy the property?'', or ``do all but finitely many initial configurations satisfy the property?''. An example of the first kind is proving correctness of population protocols, where the specification requires that for a given partition $\mathcal{I}_0$, $\mathcal{I}_1$ of the set  of initial configurations, and a partition $Q_0, Q_1$ of the set  of states, runs starting from $\mathcal{I}_0$ eventually trap all agents within $Q_0$, and similarly for $\mathcal {I}_1$ and $Q_1$ \cite{EGLM17}. An example of the third kind is the existence of \emph{cut-offs}; cut-off properties state the existence of an initial configuration such that for all larger initial configurations some given property holds \cite{EmersonK02,BJKK+15}. A systematic study of the complexity of these questions is still out of reach, but first results are appearing. In particular, Horn and Sangnier have recently studied the complexity of the \emph{cut-off problem} for parameterized rendez-vous networks \cite{HornS20}. The problem takes as input a network with one single initial state  $\init$ and one single final state $\fin$, and asks whether there exists a cut-off $B$ such that for every number of agents $n \geq B$, the final configuration in which all agents are in state $\fin$ is reachable from the initial configuration in which all agents are in state $\init$. 

Horn and Sangnier study two versions of the cut-off problem, for leaderless networks and networks with a leader. Intuitively, a leader is a distinguished agent with its own set of states.  
They show that in the presence of a leader the cut-off problem is decidable and at least as hard as reachability for Petri nets, which shows that the cut-off problem is \textsf{Ackermann}-hard and therefore not primitive recursive~\cite{Leroux21,Lasota22,CO21}. For the leaderless case, they show that the problem is in \EXPSPACE. Further, they also consider the special case of symmetric networks, for which they obtain better upper bounds: \PSPACE \ for the case of a leader, and \NP \ in the leaderless case. These results are summarized at the top of Table \ref{table:results}.

\begin{table}[t]
\begin{tabular}{m{4.0cm}|c|c}
\textbf{Cut-off -~\cite{HornS20}} & Asymmetric rendez-vous &  Symmetric rendez-vous \\ \hline
Presence of a leader & Decidable and \textsf{Ackermann}-hard  & \PSPACE \\
Absence of a leader & \EXPSPACE &   \NP
\end{tabular}

\bigskip
\begin{tabular}{m{5.0cm}|c|c}
\textbf{Cut-off - This paper} & Asymmetric rendez-vous &  Symmetric rendez-vous \\ \hline
Presence of a leader &  & \NP-complete  \\
Absence of a leader & \P-complete &   \NC
\end{tabular}

\bigskip
\caption{Summary of the results for the cut-off problem by~\cite{HornS20} and this paper.}
\label{table:results}
\end{table}

\begin{table}[t]
	\begin{tabular}{m{5.0cm}|c|c}
		\textbf{Bounded-loss - This paper} & Asymmetric rendez-vous &  Symmetric rendez-vous \\ \hline
		Absence of a leader & \P-complete &   \NL-complete
	\end{tabular}
\bigskip
\caption{Summary of the results for the bounded-loss cut-off problem}
\end{table}

In \cite{HornS20} the question of improving the upper bounds or finding matching lower bounds is left open. In this paper we close it with a surprising answer: All elementary upper bounds of \cite{HornS20} can be dramatically improved. In particular, our main result shows that the \EXPSPACE \ bound for the  leaderless case can be brought down to \P. Further, the \PSPACE \ and \NP \ bounds of the symmetric case can be lowered to \NP \ and \NC, respectively, as shown at the bottom of Table \ref{table:results}. We also obtain matching lower bounds. Finally, we provide almost tight upper bounds for the size of the cut-off $B$; more precisely, we show that if $B$ exists, then $B \in 2^{n^{O(1)}}$ for a protocol of size $n$. 

Our results follow from two lemmas, called the Scaling and Insertion Lemmas, that
connect the \emph{continuous semantics}  for Petri nets to their
standard semantics. In the continuous semantics of Petri nets transition
firings can be scaled by a positive rational factor; for example, a transition can fire
with factor $1/3$, taking ``$1/3$ of a token'' from its input places. The continuous semantics is a relaxation of the standard one, and its associated reachability problem is much simpler (polynomial~\cite{FracaH15,Blondin20} instead of non-primitive recursive).  The Scaling Lemma
states that given two markings $M, M'$ of a Petri net,  if $M'$ is reachable from $M$ in the continuous semantics, then $nM'$ is reachable from $nM$ in the standard semantics for some $n \in 2^{m^{O(1)}}$, where $m$ is the total size of the net and the markings. This lemma is implicitly proved in \cite{FracaH15}, but the bound on the size of $n$ is hidden in the details of the proof, and we make it explicit here. 
The Insertion Lemma states that, given four markings $M,M',L,L'$, if $M'$ is reachable from $M$ in the continuous semantics and the \emph{marking equation} $L' = L + \incidence \bx$ has a solution $\bx \in \zn^T$ (observe that $\bx$ can have negative components), then $n M'+ L'$ is reachable from $n M + L$ in the standard semantics for some $n \in 2^{m^{O(1)}}$. We think that these lemmas can be of independent interest.

Further we also consider the following question which was proposed in~\cite{HornS20} as a variant of the cut-off problem: 
Given a network with initial state $\init$ and final state $\fin$, decide if there is a bound $B$ such that for any number of agents,
the initial configuration in which all agents are in the state $\init$ can reach a configuration in which at most $B$ agents \emph{are not}
in the final state $\fin$. We call this the \emph{bounded-loss cut-off problem}. 
Intuitively, in the cut-off problem, we ask if for any sufficiently
large population size, all agents can be transferred from the state $\init$ to the state $\fin$. In the bounded-loss cut-off problem,
we ask if it is always possible to ``leave out'' a bounded number of agents, and transfer everybody else from the state $\init$ to the state $\fin$. 
By adapting the techniques developed for the cut-off problem, we prove that the bounded-loss cut-off problem is \P-complete for 
leaderless networks and \NL-complete for symmetric leaderless networks.

This paper is an extended version of the conference paper~\cite{Self-conference} published at FoSSaCS 2021. Compared to the conference version,
this paper contains full proofs of our results. Moreover all the results pertaining to the bounded-loss cut-off problem are new.

The paper is organized as follows. Section \ref{sec:prelim} contains preliminaries; in particular, it defines the cut-off problem for
rendez-vous networks and reduces it to the cut-off problem for Petri nets. Section \ref{sec:acyclic} gives a polynomial time algorithm
for the leaderless cut-off problem for \emph{acyclic} Petri nets. Section \ref{sec:lemmas} introduces the Scaling and Insertion Lemmas, and Section \ref{sec:general} presents the novel polynomial time algorithm for the cut-off problem for \emph{general} Petri nets. Section \ref{sec:bounded-loss} presents the polynomial
time algorithm for the bounded-loss cut-off problem for rendez-vous protocols.
Sections \ref{sec:symmetric} and \ref{sec:symmetric-leader} present the results for symmetric networks, for the cases without and with a leader, respectively.

\section{Preliminaries}\label{sec:prelim}
\subsubsection*{Multisets} Let $E$ be a finite set. For a semi-ring $(S, +, \cdot)$, a vector from $E$ to $S$ is a function $v : E \to S$. 
The set of all vectors from $E$ to $S$ will be denoted by $S^E$.
Given a vector $v \in S^E$ and an element $\alpha \in S$, we let
$\alpha \cdot v$ be the vector given by $(\alpha \cdot v)(e) = \alpha \cdot v(e)$ for all $e \in E$.
For the sake of brevity, whenever there is no confusion, we sometimes abbreviate $\alpha \cdot v$ as $\alpha v$.
In this paper, the semi-rings we will be concerned with are 
the natural numbers $(\nn, +, \cdot)$, the integers $(\zn, +, \cdot)$ and the non-negative rationals $(\RatZero, +, \cdot)$.
The \emph{support} of a vector $v$ is the set $\supp{v} := \{e : M(e) \neq 0\}$
and its \emph{size} is the number $\norm{v} = \sum_{e \in \supp{M}} abs(M(e))$ where $abs(x)$ denotes the absolute value of $x$. 
Vectors from $E$ to $\nn$ are also called discrete multisets (or just multisets) and vectors from $E$ to $\RatZero$ are called
continuous multisets.

Given two vectors $v, v'$ (over either $\nn, \ \zn$ or $\RatZero$) we say that $v \le v'$ if $v(e) \le v'(e)$ for all $e \in E$ and we
let $v + v'$ be the vector given by $(v+v')(e) = v(e) + v'(e)$ for all $e \in E$.
Further, if $v$ and $v'$ are vectors over $S$ where $S \in \{\zn,\RatZero\}$, then we define $v - v'$ as the vector
given by $(v-v')(e) = v(e) - v'(e)$ for all $e$. On the other hand, if $v$ and $v'$ are multisets (i.e., vectors over $\nn$) \emph{such that $v' \le v$},
then we define $(v-v')(e) = v(e) - v'(e)$~for~all~$e$. 

The vector which maps every element of $E$ to 0 (resp.\ 1) is denoted by $\textbf{0}$ (resp.\ $\textbf{1}$).
We sometimes denote multisets and continuous multisets using a set-like notation, e.g. 
$\multiset{a, 2 \cdot b, c}$ denotes the multiset
given by $M(a) = 1, M(b) = 2, M(c) = 1$ and $M(e) = 0$ for all
$e \notin \{a,b,c\}$.

Given an $I \times J$ matrix $A$ with $I$ and $J$ sets of indices,
$I' \subseteq I$ and $J' \subseteq J$, we let $A_{I' \times J'}$
denote the restriction of $A$ to rows indexed by $I'$ and
columns indexed by $J'$.

\subsection{Rendez-vous Protocols and the Cut-off Problem.} 
Let $\Sigma$ be a fixed finite set which we will
call the communication alphabet and we let $RV(\Sigma) = \{!a, ?a : a \in \Sigma\}$. The symbol $!a$ denotes that the message $a$ 
is sent and $?a$ denotes that the message $a$ is received.

\begin{defi}
	A \emph{rendez-vous protocol} $\prot$ is a tuple $(Q,\Sigma,\init,\fin,R)$
	where $Q$ is a finite set of \emph{states}, $\Sigma$ is the \emph{communication
	alphabet} consisting of a finite set of \emph{messages}, $\init,\fin \in Q$ are the \emph{initial} and \emph{final} states respectively and $R \subseteq Q \times RV(\Sigma) \times Q$ 
	is the set of \emph{rules}.
\end{defi}

The size $|\prot|$ of a protocol is defined as the number of bits
needed to encode $\prot$ using some standard encoding.
A configuration $C$ of $\prot$ is a multiset of states, where $C(q)$ should be interpreted as the number of agents in state $q$.
We use $\configs(\prot)$ to denote the set of all configurations of $\prot$. 
An initial (resp.\ final) configuration $C$ is a configuration such that $C(q) = 0$
if $q \neq \init$ (resp.\ $C(q) = 0$ if $q \neq \fin$). We use $C_{\init}^n$ (resp.\ $C_{\fin}^n$) to denote the initial (resp.\ final) configuration
such that $C_{\init}^n(\init) = n$ (resp.\ $C_{\fin}^n(\fin) = n$).

The operational semantics of a rendez-vous protocol $\prot$ is given by
means of a transition system between the configurations of $\prot$. 
Suppose $a$ is a message in $\Sigma$ and $r = (p,!a,p')$ and $r' = (q,?a,q')$ are two rules of $R$.
For any two configurations $C$ and $C'$, we say that $C \xRightarrow{r,r'} C'$ if 
$C \ge \multiset{p,q}$ and $C' = C - \multiset{p,q} + \multiset{p',q'}$. Intuitively, the configuration $C$ has agents at states $p$ and $q$, and
the agent at state $p$ sends the message $a$ and moves to $p'$ and the agent at state $q$ receives this
message and moves to $q'$. We let $C \Rightarrow C'$ denote that there exist rules $r, r'$ for which $C \xRightarrow{r,r'} C'$ and if this is the case, then we say that there
is a transition from $C$ to~$C'$.
As usual, $\xRightarrow{*}$ 
denotes the reflexive and transitive closure of $\Rightarrow$. The \emph{cut-off problem for rendez-vous protocols}, as stated in \cite{HornS20}, is then defined as the following decision problem.\medskip
	\begin{quote}
		\textit{Given: } A rendez-vous protocol $\prot$\\
		\textit{Decide: } Does  there exist $B \in \nn$ such that $C_{\init}^n \xRightarrow{*} C_{\fin}^n$ for every $n \ge B$ ?
	\end{quote}
If such a $B$ exists then we say that $\prot$ admits a cut-off and 
that $B$ is a cut-off for $\prot$.

\begin{exa}\label{example:four}
	Let us consider the following protocol, which is taken from a slightly modified version of the family of protocols described in Figure 5 of~\cite{HornS20}.
	\begin{figure}[h]
		\begin{center}
			\tikzstyle{node}=[circle,draw=black,thick,minimum size=7mm,inner sep=0.75mm,font=\normalsize]
			\tikzstyle{edgelabelabove}=[sloped, above, align= center]
			\tikzstyle{edgelabelbelow}=[sloped, below, align= center]
			\begin{tikzpicture}[->,node distance = 2cm, auto, thick]
				\node[node] (init) {\small{$\init$}};
				\node[node, right = of init] (q1) {$q_1$};
				\node[node, right = of q1] (fin) {\small{$\fin$}};
				
				\draw (init) edge[bend left=10 ,edgelabelabove] node[pos=0.5]{$!a$} (q1); 
				\draw (q1) edge[edgelabelabove,] node{$!b$} (fin);
				\draw (fin) edge[edgelabelabove,loop above] node{$!b$} (fin);
				
				\draw (init) edge[bend right=10,edgelabelbelow,] node[pos=0.5]{$?a$} (q1);
				\draw (init) edge[bend right=35,edgelabelbelow,] node{$?b$} (fin);		
				
				
				
			\end{tikzpicture}
			\caption{An example of a rendez-vous protocol}
			\label{fig:expo}
		\end{center}
	\end{figure}

	We can show that 4 is a cut-off for this protocol. Indeed, if $n \ge 4$, then we have the run 
	$C^n_{\init} \Rightarrow{} \multiset{(n-2) \cdot \init + 2 \cdot q_1} \Rightarrow{} \multiset{(n-3) \cdot \init +  q_1 + 2 \cdot \fin} \Rightarrow{} \multiset{(n-4) \cdot \init + 4 \cdot \fin}$. The first transition involves sending and receiving
	the message $a$ from the state $\init$ and the other two involve sending the message $b$ from $q_1$ and receiving it from
	$\init$. Once we reach $\multiset{(n-4) \cdot \init + 4 \cdot \fin}$, we can reach $C^n_{\fin}$ 
	by repeatedly using the rules $(\fin,!b,\fin)$ and $(\init,?b,\fin)$.

	Further, we can show that no number strictly less than 4 can be a cut-off for this protocol. Indeed,
	suppose $C^n_{\init} \xRightarrow{*} C^n_{\fin}$. Since we need at least two agents for a transition to occur, 
	it follows that $n \ge 2$.
	By construction of the protocol, the first transition along this
	run must be $C^n_{\init} \Rightarrow \multiset{(n-2) \cdot \init + 2 \cdot q_1}$.
	If $n = 2$, then the run gets stuck at this configuration, because no agent is at a state capable of 
	receiving a message. If $n = 3$, then the only transition that is possible is $\multiset{\init + 2\cdot q_1} 
	\Rightarrow \multiset{q_1 + 2 \cdot \fin}$, at which point we reach a configuration where no agent is capable of 
	receiving a message. This implies that $n \ge 4$ and so no number strictly less than 4 can be a cut-off for this protocol.

		
\end{exa}

\subsection{Petri Nets.}  We now formally define Petri nets and see how we can relate rendez-vous protocols
to Petri nets.

\begin{defi}
	A \emph{Petri net} is a tuple $\net = (P,T,\Pre,\Post)$ where $P$ is a finite
	set of \emph{places}, $T$ is a finite set of \emph{transitions}, $\Pre$ and $\Post$ 
	are matrices whose rows and columns are indexed by $P$ and $T$ respectively and whose entries belong to $\nn$.
	The \emph{incidence matrix} $\incidence$ of $\net$ is defined to be the 
	$P \times T$ matrix given by $\incidence = \Post - \Pre$.
	Further by the \emph{weight} of $\net$, we mean the largest absolute value
	appearing in the matrices $\Pre$ and $\Post$.
\end{defi}

The size $|\net|$ of a Petri net $\net$ is defined as the number of
bits needed to encode $\net$ using some suitable encoding.
For a transition $t \in T$ we let
$\pre{t} = \{p : \Pre[p,t] > 0\}$ and 
$\post{t} = \{p : \Post[p,t] > 0\}$. We extend this notation
to sets of transitions in the obvious way.
Given a Petri net $\net$, we can associate with it a graph
where the vertices are $P \cup T$
and the edges are $\{(p,t) : p \in \pre{t}\} \cup \{(t,p) : p \in \post{t}\}$. A Petri net $\net$ is called acyclic if 
its associated graph is acyclic.

A \emph{marking} of a Petri net is a  multiset $M \in \nn^P$, which
intuitively denotes the number of \emph{tokens} that are 
present in every place of the net. 
For $t \in T$ and markings $M$ and $M'$, we say that $M'$ is reached from $M$ by firing $t$,
denoted by $M \reachN{t} M'$, if  for every place $p$,
$M(p) \ge \Pre[p,t]$ and $M'(p) = M(p) + \incidence[p,t]$. 

A \emph{firing sequence} is any sequence of transitions $\sigma = t_1,t_2,\dots,t_k \in T^*$. 
The support of $\sigma$, denoted by $\supp{\sigma}$, is the
set of all transitions which appear in $\sigma$.
We let $\sigma \sigma'$ denote the concatenation of two sequences $\sigma$ and $\sigma'$
and we let $\sigma^k$ denote the concatenation of $\sigma$ with itself $k$ times.

Given a firing sequence $\sigma = t_1,t_2,\dots,t_k$,
we let $M \reachN{\sigma} M'$ denote that there are markings  $M_1,\dots,M_{k-1}$
such that $M \reachN{t_1} M_1 \reachN{t_2} M_2 \dots M_{k-1}
\reachN{t_k} M'$. Further,
$M \rightarrow M'$ denotes that there exists $t \in T$ such that $M\reachN{t}M'$,
and $M \reachN{*} M'$ denotes that there exists $\sigma \in T^*$ such that $M \reachN{\sigma} M'$.

In the following, we will use the notation $M \reachN{\sigma}$
(resp.\ $\reachN{\sigma} M$) to denote that there exists
a marking $M'$ such that $M \reachN{\sigma} M'$ (resp.\ $M' \reachN{\sigma} M$).

\paragraph{The monotonicity property.} Throughout the paper, we will repeatedly use the following property of Petri nets, called the \emph{monotonicity property}.
It roughly states that adding more tokens to a marking does not stop us from firing a firing sequence which was fireable before.
\begin{prop}\label{prop:monotonicity}
	Suppose $M \reachN{\sigma} M'$. Then $M + L \reachN{\sigma} M' + L$ for any marking $L$.
\end{prop}

\begin{proof}
	By induction on the length of $\sigma$.
\end{proof}

\paragraph{Marking equation of a Petri net system.} A \emph{Petri net system} is
a triple $(\net,M,M')$ where $\net$ is a Petri net
and $M$ and $M'$ are markings.  The \emph{marking equation} for $(\net,M,M')$ is the equation
\begin{equation*}
M' = M + \incidence \vect
\end{equation*}
\noindent over the variables $\vect$. It is well known that $M \reachN{\sigma} M'$ implies 
$M' = M + \incidence \parikh{\sigma}$, where $\parikh{\sigma} \in \nn^T$ is 
the \emph{Parikh image} of $\sigma$, defined as the vector
whose component $\parikh{\sigma}[t]$ for a transition $t$ is equal to the number of times $t$
appears in $\sigma$. Therefore, if $M \reachN{\sigma} M'$ then $\parikh{\sigma}$ is a nonnegative integer
solution of the marking equation. However, the converse does not hold. 

\subsection{From rendez-vous protocols to Petri nets.}~\label{subsec:prot-to-Petri}

We now show that rendez-vous protocols can be seen as a special class of Petri nets.
Indeed, rendez-vous protocols can be thought of as Petri nets in which no tokens are created or destroyed during a run.

Let $\prot = (Q,\Sigma,\init,\fin,R)$ be a rendez-vous protocol.
Create a Petri net $\net_\prot=(P,T,\Pre,\Post)$ as follows. The set of places is $Q$.
For each message $a \in \Sigma$ and for each pair of rules
$r = (q,!a,s),\ r' = (q',?a,s') \in R$, add a transition
$t_{r,r'}$ to $\net_\prot$ and set
\begin{itemize}
	\item $\Pre[p,t_{r,r'}] = 0$ for every $p \notin \{q,q'\}$,
	$\Post[p,t_{r,r'}] = 0$ for every $p \notin \{s,s'\}$
	\item If $q = q'$ then $\Pre[q,t_{r,r'}] = 2$, otherwise
	$\Pre[q,t_{r,r'}] = \Pre[q',t_{r,r'}] = 1$
	\item If $s = s'$ then $\Post[s,t_{r,r'}] = 2$, otherwise
	$\Post[s,t_{r,r'}] = \Post[s',t_{r,r'}] = 1$
\end{itemize}

Intuitively, the transition $t_{r,r'}$ affects only the places in the set $\{q,q',s,s'\}$.
Firing $t_{r,r'}$ involves removing two tokens from $q$ if $q = q'$, and otherwise one token each from $q$ and $q'$.
Then, after removing these tokens, we put back two tokens in $s$ if $s = s'$, and otherwise we put one token each
in $s$ and $s'$.

Due to the way $\net_{\prot}$ is defined, we have that any configuration of the protocol $\prot$ is also
a marking of $\net_\prot$, and vice versa. Further, we have the following proposition, whose proof immediately
follows from the definition of $\net_{\prot}$.

\begin{prop}
\label{prop:equiv}
	For any pair of configurations $C, C'$ and any pair of rules $r, r'$
	we have that
	$C \xRightarrow{r,r'} C'$ over the protocol $\prot$ if and only if
	$C \reachN{t_{r,r'}} C'$ over the Petri net $\net_\prot$. 
	Consequently, it follows that $C \xRightarrow{*} C'$ over the protocol $\prot$ if and only if $C \xrightarrow{*} C'$ over the Petri net $\net_\prot$.
\end{prop}

\noindent We can now define \emph{the cut-off problem for Petri nets} in the following manner.
\medskip
\begin{quote}
\textit{Given : } A Petri net system $(\net,M,M')$\\
\textit{Decide: } Does there exist $B \in \nn$ 
		such that $n \cdot M \reachN{*} n \cdot M'$ for every $n \ge B$ ?
\end{quote}

\noindent If such a $B$ exists, then we say that $(\net,M,M')$ admits a cut-off and that $B$ is a cut-off for $\prot$. 
By Proposition~\ref{prop:equiv}, note that $B$ is a cut-off for a protocol $\prot$ if and only if $B$ is a cut-off for 
the Petri net system $(\net,\multiset{\init},\multiset{\fin})$. Hence, the cut-off problem for Petri nets 
generalizes the cut-off problem for rendez-vous protocols.

\begin{exa}\label{example:four-net}
	Let us consider the rendez-vous protocol $\prot$ from Example~\ref{example:four}. Its associated Petri net $\net_{\prot}$ is given in 
	Figure~\ref{fig:net-example-prelims}.
	
	\begin{figure}[ht]
		\begin{center}
			\begin{tikzpicture}[->, node distance=2cm, auto, thick]
				\node[place] (init) {};
				\node[transition, right of =init] (t1) {};
				\node[place, right of=t1] (q1) {};
				\node[transition, right of=q1] (t2) {};
				\node[place, right of=t2] (fin) {};
				\node[transition, below of=t2, yshift = -0cm] (r2) {};
				
				\path[->]
				(init) edge node {2} (t1)
				(t1) edge node {2} (q1)
				(q1) edge node {} (t2)
				(init) edge[bend right = 30] node {} (t2)
				(t2) edge node {2} (fin)
				(fin) edge[bend right = 15] node {} (r2) 
				(init) edge[bend right = 30] node {} (r2)
				(r2) edge[bend right = 15] node[below, xshift = 0.1cm] {2} (fin)
				;
				
				\node[] () [left= -1pt of init] {$\init$};
				\node[] () [above= -1pt of q1] {$q_1$};
				\node[] () [right= -1pt of fin] {$\fin$};
				\node[] () [above= -1pt of t1] {$t_1$};
				\node[] () [above= -1pt of t2] {$t_2$};
				\node[] () [below = -1pt of r2] {$t_3$};
				
			\end{tikzpicture}
			\caption{Petri net corresponding to the protocol from Figure~\ref{fig:expo}}
			\label{fig:net-example-prelims}
		\end{center}
	\end{figure}
	
	The three places of the Petri net correspond to the three states of the protocol $\prot$. We also have three transitions:
	$t_1$ corresponds to the pair $(\init,!a,q_1),(\init,?a,q_1)$, $t_2$ corresponds to the pair $(q_1,!b,\fin),(\init,?b,\fin)$
	and $t_3$ corresponds to the pair $(\fin,!b,\fin),(\init,?b,\fin)$. 
	By Proposition~\ref{prop:equiv} and by the 
	argument given in Example~\ref{example:four}, we can show that $C^n_{\init} \act{*} C^n_{\fin}$ is a run in the Petri net $\net_\prot$ if and only if $n \ge 4$.
	Hence, 4 is a cut-off for $(\net_\prot,\multiset{\init},\multiset{\fin})$ and no number less than 4 can be a cut-off.
\end{exa}

Our main result in this paper is the following:
\begin{thm}
	The cut-off problem for Petri nets is decidable in polynomial time.
\end{thm}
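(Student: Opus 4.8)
The plan is to reduce the cut-off problem to two conditions, each checkable in polynomial time, and to argue their equivalence to cut-off existence via a short semigroup argument on top of the Scaling and Insertion Lemmas. Write $S := \{n \ge 1 : nM \reachN{*} nM'\}$ for the set of ``good'' multiplicities. My first step is to observe that $S$ is closed under addition: if $nM \reachN{*} nM'$ via $\sigma$ and $mM \reachN{*} mM'$ via $\tau$, then by monotonicity of firing I may run $\sigma$ on $nM$ while the extra $mM$ tokens sit idle, reaching $nM' + mM$, and then run $\tau$, reaching $(n+m)M'$; hence $n+m \in S$. Consequently, by the standard numerical-semigroup fact, a cut-off exists iff $S \neq \emptyset$ and $\gcd(S) = 1$ (if $\gcd(S) = d > 1$ then $S \subseteq d\nn$ omits infinitely many multiplicities). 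It therefore suffices to show that the conjunction of the following two conditions is equivalent to ``$S \neq \emptyset$ and $\gcd(S) = 1$'': (a) $M'$ is reachable from $M$ in the continuous semantics, written $M \reachQ{*} M'$; and (b) the marking equation $M' = M + \incidence \bx$ has a solution $\bx \in \zn^T$.

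For the forward (necessity) direction I will assume a cut-off $B$ exists, so in particular $B, B{+}1 \in S$. Since every standard step is a continuous step of factor $1$, $BM \reachN{*} BM'$ gives $BM \reachQ{*} BM'$, and scaling the continuous run by $1/B$ yields $M \reachQ{*} M'$, which is (a). For (b), letting $\sigma_B$ and $\sigma_{B+1}$ witness $BM \reachN{*} BM'$ and $(B{+}1)M \reachN{*} (B{+}1)M'$, the marking equation gives $\incidence \parikh{\sigma_B} = B(M'-M)$ and $\incidence \parikh{\sigma_{B+1}} = (B{+}1)(M'-M)$, so $\bx := \parikh{\sigma_{B+1}} - \parikh{\sigma_B} \in \zn^T$ solves $\incidence \bx = M' - M$, which is (b).

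The interesting direction is the converse, and this is where the two lemmas enter; I expect it to be the main obstacle. Assuming (a) and (b), the Scaling Lemma applied to $M \reachQ{*} M'$ yields some $n_0$ with $n_0 M \reachN{*} n_0 M'$, so $n_0 \in S$ and $S \neq \emptyset$. To force $\gcd(S) = 1$ I will manufacture a single element of $S$ coprime to $n_0$. Since continuous reachability is scale-invariant, (a) also gives $n_0 M \reachQ{*} n_0 M'$; combining this with the integer solution from (b), I apply the Insertion Lemma with continuous part $n_0 M \reachQ{*} n_0 M'$ and inserted displacement $L = M$, $L' = M'$, obtaining some $k$ with $k\,(n_0 M) + M \reachN{*} k\,(n_0 M') + M'$, that is $(k n_0 + 1)M \reachN{*} (k n_0 + 1)M'$, so $k n_0 + 1 \in S$. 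As $\gcd(n_0,\, k n_0 + 1) = 1$, we conclude $\gcd(S) = 1$ and a cut-off exists. The crux is precisely this application of the Insertion Lemma: its role is to splice the fixed integer displacement $M \to M'$ onto a scaled continuous run so as to land on a multiplicity that is $\equiv 1$ modulo $n_0$, which is exactly what coprimality — and hence cofiniteness of $S$ — needs.

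Finally I must check (a) and (b) in polynomial time. Condition (a) is continuous reachability, decidable in polynomial time as recalled in the introduction. Condition (b) is the solvability over $\zn$ of the linear Diophantine system $\incidence \bx = M' - M$, decidable in polynomial time by computing, for instance, the Smith normal form of $\incidence$. The algorithm then answers ``yes'' iff both (a) and (b) hold, and runs in polynomial time. Once the characterization via (a) and (b) is established, the numerical-semigroup bookkeeping and the complexity analysis are routine; the substance of the proof is the sufficiency direction and, within it, the combined use of the Scaling and Insertion Lemmas to produce coprime multiplicities in $S$.
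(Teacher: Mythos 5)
Your characterization is missing a support condition, and the omission is fatal: conditions (a) and (b) alone are \emph{not} equivalent to the existence of a cut-off. The error sits exactly at your application of the Insertion Lemma. That lemma (Lemma~\ref{lem:cut-paste}) requires a \emph{discrete} witness $\sigma$ with $M \reachN{\sigma} M'$ \emph{together with} the hypothesis $\supp{\boldy} \subseteq \supp{\sigma}$; your condition (b) puts no constraint on $\supp{\boldy}$, so when you insert the displacement $L = M$, $L' = M'$ into the scaled run $n_0 M \reachN{*} n_0 M'$ you have no way to guarantee that the transitions used by $\boldy$ occur in that run at all, and the lemma does not apply. Here is a concrete counterexample to your claimed equivalence. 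Take places $\{p,q,r\}$ and transitions $t_1,t_2$ with $\Pre[p,t_1]=2$, $\Post[q,t_1]=2$, $\Pre[p,t_2]=\Pre[r,t_2]=1$, $\Post[q,t_2]=\Post[r,t_2]=1$ (all other entries $0$), and let $M=\multiset{p}$, $M'=\multiset{q}$. Condition (a) holds, since $M \reachQ{\frac{1}{2}t_1} M'$. Condition (b) holds, since the vector $\boldy$ with $\boldy[t_1]=0$, $\boldy[t_2]=1$ satisfies $M' = M + \incidence\boldy$. But no marking reachable from $nM$ ever puts a token in $r$ (the only transition producing a token in $r$ also consumes one), so $t_2$ can never fire in any discrete run from $nM$; runs using $t_1$ alone move tokens from $p$ to $q$ in pairs, hence $nM \reachN{*} nM'$ holds exactly for even $n$, and there is no cut-off. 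Your algorithm would answer ``yes'' on this instance.

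The repair is precisely the paper's characterization (Theorem~\ref{th:main-general}): there must exist a continuous witness $\sigma$ with $M \reachQ{\sigma} M'$ and an integer solution $\boldy \in \zn^T$ of the marking equation with $\supp{\boldy} \subseteq \supp{\sigma}$. (In the example above, the maximal support of continuous runs is $\{t_1\}$, and no integer solution is supported there, since that would force $2\boldy[t_1]=1$.) This coupling changes both halves of the argument. In the necessity direction it is not enough to subtract two Parikh vectors as you do; the paper concatenates witnesses so that the run for $n$ has the larger support, which is what makes $\supp{\boldy} = \supp{\parikh{\tau'}-\parikh{\tau}} \subseteq \supp{\tau} = \supp{\sigma}$ come out. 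In the algorithm, plain feasibility of the marking equation over $\zn$ is insufficient: one must first compute the \emph{maximum} support $S$ of continuous firing sequences from $M$ to $M'$ (Lemma~\ref{lem:max-support-general}, due to Fraca and Haddad) and then test feasibility over $\zn^T$ subject to $\boldy_t = 0$ for every $t \notin S$. Your semigroup reduction (cut-off iff $S \neq \emptyset$ and $\gcd(S)=1$) and your necessity argument for (a) and (b) are fine as far as they go, but the sufficiency direction — which you yourself identify as the crux — collapses without the support condition.
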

Since the construction of $\net_\prot$ can be done in polynomial time in the size of $\prot$, by Proposition~\ref{prop:equiv}, we then get that:
\medskip

\begin{cor}~\label{cor:cut-off}
	The cut-off problem for rendez-vous protocols is decidable in polynomial time.
\end{cor}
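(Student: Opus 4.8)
The plan is to reduce the cut-off problem to the conjunction of two polynomial-time checkable conditions and to prove correctness via the Scaling and Insertion Lemmas. Concretely, I claim that a system $(\net, M, M')$ admits a cut-off if and only if (a) $M'$ is reachable from $M$ in the continuous semantics, $M \reachQ{*} M'$, and (b) the marking equation $M' = M + \incidence \bx$ has a solution $\bx \in \zn^T$. Condition (a) is decidable in polynomial time by the known algorithms for continuous reachability, and condition (b) is decidable in polynomial time by integer linear algebra (e.g.\ via the Smith normal form of $\incidence$). Hence, once the characterization is established, the theorem follows at once, and the corollary for rendez-vous protocols follows from Proposition~\ref{prop:equiv}.

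For the necessity of (a) and (b), suppose $B$ is a cut-off. Then $BM \reachN{*} BM'$ is in particular a continuous run, and since continuous runs may be rescaled by any positive rational, dividing all firing factors by $B$ yields $M \reachQ{*} M'$, giving (a). For (b), take the witnesses $BM \reachN{*} BM'$ and $(B+1)M \reachN{*} (B+1)M'$; their Parikh images $\bx_B, \bx_{B+1} \in \nn^T$ satisfy $B(M'-M) = \incidence \bx_B$ and $(B+1)(M'-M) = \incidence \bx_{B+1}$, and subtracting gives $M' - M = \incidence(\bx_{B+1} - \bx_B)$ with $\bx_{B+1} - \bx_B \in \zn^T$, which is (b).

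The heart of the argument is sufficiency. The key structural observation is that the set $G = \set{\, n \geq 1 : nM \reachN{*} nM' \,}$ is closed under addition: if $nM \reachN{*} nM'$ and $n'M \reachN{*} n'M'$, then firing the two sequences in succession on $nM + n'M$ transforms it into $nM' + n'M'$, because adding tokens never disables a transition (monotonicity). Consequently a cut-off exists precisely when $G$ is cofinite, i.e.\ when $G \neq \emptyset$ and $\gcd(G) = 1$. Now (a) together with the Scaling Lemma produces some $n_0 \in 2^{m^{O(1)}}$ with $n_0 M \reachN{*} n_0 M'$, so $G \neq \emptyset$ and $n_0\Natural \subseteq G$. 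It remains to show $\gcd(G) = 1$. Here (b) is exactly what removes the only obstruction: if $d$ is the least positive integer with $d(M'-M) \in \incidence\,\zn^T$, then every $n \in G$ is a multiple of $d$ (its Parikh image solves the marking equation at scale $n$), so $d \mid \gcd(G)$; and (b) says $1 \cdot (M'-M) \in \incidence\,\zn^T$, forcing $d = 1$. To turn $d = 1$ into $\gcd(G) = 1$ I will use the Insertion Lemma with base $M \reachQ{*} M'$ and correction $(L, L') = (rM, rM')$, whose marking equation $rM' = rM + \incidence\,\bx$ is solvable over $\zn$ for every $r$ because $d = 1$: it yields, for each residue, an element of $G$ in that residue class modulo $n_0$, and together with $n_0 \Natural \subseteq G$ and additive closure this fills a full window of $n_0$ consecutive integers in $G$, whence $G$ is cofinite.

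I expect the residue-covering step to be the main obstacle. The difficulty is that the Scaling and Insertion Lemmas assert reachability only at \emph{some} exponentially bounded scale, not at a prescribed one, so a naive application of the Insertion Lemma with $(L, L') = (rM, rM')$ merely places \emph{some} element of $G$ in an uncontrolled residue class. The technical work is therefore to combine these bounded ``correction'' runs with the scaled ``bulk'' run $n_0 M \reachN{*} n_0 M'$, again relying on monotonicity to execute a correction in the presence of surplus tokens, so that every residue class modulo $n_0$ is eventually hit. This is where condition (b), through the integer (possibly negative) solution of the marking equation, is indispensable: it is precisely the ingredient that lets the Insertion Lemma borrow and return tokens to realize the exact, non-divisible scales that the purely continuous Scaling Lemma cannot reach. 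Once the window is obtained, cofiniteness of $G$ and hence the existence of a cut-off (of size $2^{m^{O(1)}}$, matching the claimed bound on $B$) is immediate.
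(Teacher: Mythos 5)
Your proposed characterization drops the one condition that makes the paper's characterization work, and the omission is fatal: conditions (a) and (b) as you state them do \emph{not} characterize cut-off admittance. The paper requires not merely that the marking equation have a solution $\boldy \in \zn^T$, but that it have one with $\supp{\boldy} \subseteq \supp{\sigma}$ for a continuous firing sequence $\sigma$ witnessing $M \reachQ{*} M'$; accordingly its algorithm computes the \emph{maximum} support of such sequences (via Fraca--Haddad) and tests integer feasibility restricted to that support. Here is a counterexample to your version, realizable as a rendez-vous protocol with states $\{p,q,r\}$, $\init = p$, $\fin = q$, and rules $(p,!a,q),(p,?a,q),(p,!b,q),(r,?b,r)$. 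The associated net has $M = \multiset{p}$, $M' = \multiset{q}$ and two transitions: $t_1$ consuming $2$ from $p$ and producing $2$ in $q$, and $t_2$ consuming $1$ from $p$ and $1$ from $r$ and producing $1$ in $q$ and $1$ in $r$. Since $r$ is never marked in any configuration reachable from any $nM$, the transition $t_2$ can never fire, discretely or continuously; hence $nM \reachN{*} nM'$ holds exactly for even $n$ (agents leave $p$ only in pairs, via $t_1$), and there is no cut-off. Yet your condition (a) holds ($M \reachQ{*} M'$ by firing $t_1$ with factor $1/2$), and so does your condition (b): the marking equation reduces to $2x_1 + x_2 = 1$, with integer solutions $\bx = (0,1)$ and $\bx = (1,-1)$. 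Every integer solution must use $t_2$, i.e.\ has support outside $\supp{\sigma} = \{t_1\}$ --- which is exactly the situation the paper's support condition rules out.

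The breakdown in your argument is the residue-covering step you yourself flagged as the main obstacle. First, the number-theoretic reduction is a non sequitur: you establish $d \mid \gcd(G)$ and $d = 1$, which constrains $\gcd(G)$ not at all (in the counterexample $d = 1$ while $\gcd(G) = 2$). Second, the planned repair via the Insertion Lemma cannot go through, because the hypothesis of that lemma is precisely the support condition you discarded: the inserted sequence has Parikh vector $\norm{\boldy}\parikh{\sigma} + \boldy$, which is non-negative --- and firable, since the bulk run supplies tokens only to input places of transitions in $\supp{\sigma}$ --- only when $\supp{\boldy} \subseteq \supp{\sigma}$. Monotonicity does not rescue this: surplus tokens produced by the scaled run never reach the input places of transitions outside $\supp{\sigma}$ (in the example, $r$ stays empty forever). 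The correct route, as in the paper, is to compute in polynomial time the maximum support $S$ over all continuous firing sequences from $M$ to $M'$, and then decide feasibility of the marking equation over $\zn^T$ with all variables outside $S$ forced to zero.
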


\section{The cut-off problem for acyclic Petri nets}\label{sec:acyclic}
As a warm-up to the cut-off problem, we first show that the cut-off problem for \emph{acyclic Petri nets} can be solved in polynomial time.
The reason for considering this special case first is that it illustrates one of the main
ideas of the general case in a very pure form. 

Let us fix a Petri net system $(\net,M,M')$ for the rest of this section,
where $\net = (P,T,Pre,Post)$ is acyclic and $\incidence$ is
its incidence matrix. It is well-known that in acyclic Petri nets the reachability relation is characterized by the marking equation.

\begin{propC}[{\cite[Theorem 16]{Murata89}}]~\label{prop:acyclic}
Let $(\net,M,M')$ be an acyclic Petri net system. For every vector $\bx \in \nn^T$, 
$\bx$ is a solution of the marking equation if and only if there is a firing sequence $\sigma$ such that
$\parikh{\sigma} = \bx$ and $M \reachN{\sigma} M'$.  Consequently, $M \reachN{*} M'$ if and only if the marking equation has a nonnegative integer solution.
\end{propC}

This proposition shows that the reachability problem for acyclic Petri nets reduces to the feasibility problem (i.e., deciding the existence of a solution) for systems of linear Diophantine equations over the nonnegative integers. So the reachability problem for acyclic Petri nets is in \NP, and in fact both the reachability and the feasibility problems are \NP-complete \cite{EsparzaN94}. 

There are two ways to relax the conditions on the solution so as to make the feasibility problem solvable in polynomial time. Feasibility over the non-negative \emph{rationals} and feasibility over all integers for systems of linear equations are both in \P. Indeed, the feasibility
problem over the non-negative rationals is simply an instance of the linear programming problem, which is in polynomial time.
Further, feasibility in $\zn$ can be decided in polynomial time after computing the Smith or Hermite normal forms of the matrix in the marking equation (see e.g.\ \cite{PohstZ89}), which can themselves be computed in polynomial time~\cite{KannanB79}. We now show that the cut-off problem for acyclic Petri net systems can be reduced to solving a polynomial number of instances of the linear programming problem
and the feasibility problem for systems of linear equations over integers.

\subsection{Characterizing acyclic systems with cut-offs}

Horn and Sangnier proved in \cite{HornS20} a very useful characterization of cut-off admitting rendez-vous protocols: A  rendez-vous protocol $\prot$ admits a cut-off if and only if
there exists $n \in \nn$ such that $C_{\init}^n \xRightarrow{*} C_{\fin}^n$ and $C_{\init}^{n+1} \xRightarrow{*} C_{\fin}^{n+1}$. Their proof immediately generalizes to the case of Petri nets. Here, we refine their characterization and proof in a small, but important way, which will be helpful later on.
\medskip

\begin{lemC}[{\cite[Lemma 23]{HornS20}}]~\label{lem:mcnugget}
	A Petri net system $(\net,M,M')$ (acyclic or not) admits a cut-off if and only if
	there exists $n \in \nn$ and firing sequences $\sigma, \sigma'$ such that $n \cdot M \reachN{\sigma} n \cdot M'$, $(n+1) \cdot M \reachN{\sigma'} (n+1) \cdot M'$
	and $\supp{\sigma'} \subseteq \supp{\sigma}$. Moreover if such $n, \sigma$ and $\sigma'$ exist, 
	then $n^2$ is a cut-off for the system.
\end{lemC}

\begin{proof}
	Suppose $(\net,M,M')$ admits a cut-off.
	Hence there exists $B \in \nn$ such that
	for all $n \ge B$, there is a firing sequence $\sigma_n$ satisfying $nM \reachN{\sigma_n} nM'$.
	Let $T' = \{t_1,\dots,t_k\}$ be the set $\bigcup_{n \ge B} \supp{\sigma_n}$. 
	For every transition $t_i$ in $T'$, let $n_{i}$ be such that $\sigma_{n_{i}}$ contains an occurrence of $t_i$. Further, let $n = \sum_{t_i \in T'} n_{i}$ 
	and $\sigma = \sigma_{n_1}\sigma_{n_2}\dots\sigma_{n_k}$. Notice that, since $n_{i}M \reachN{\sigma_{n_{i}}} n_{i}M'$ for each $i$, by the monotonicity property we have $nM \reachN{\sigma} nM'$. Since $\sigma$ contain at least one occurrence of each transition in $T'$, if we set $\sigma' = \sigma_{n+1}$, the claim is satisfied.
	
	Suppose there exists $n \in \nn$ and firing sequences $\sigma, \sigma'$ such that $n \cdot M \reachN{\sigma} n \cdot M'$, $(n+1) \cdot M \reachN{\sigma'} (n+1) \cdot M'$
	and $\supp{\sigma'} \subseteq \supp{\sigma}$. Let $s \ge n^2$. We can write $s$ as $s = qn + r$ for some $q \in \nn$ and some $r$ such that $0 \le r \le n-1$.
	Since $s \ge n^2$, it follows that $q \ge n > r$. Hence, we can rewrite $s$ as $s = r(n+1) + (q-r)n$. By the monotonicity property, it then follows that
	$sM \act{(\sigma')^r (\sigma)^{(q-r)}} sM'$. Hence, $n^2$ is a cut-off for the system.
\end{proof}

Using this lemma, we give a characterization of those acyclic Petri net systems that admit a cut-off. 
\medskip

\begin{thm}~\label{th:main-acyclic}
	An acyclic Petri net system $(\net,M,M')$ admits a cut-off
	if and only if the marking equation has solutions $\bx \in \qnz^T$ and $\boldy \in \zn^T$ such that
	$\supp{\boldy} \subseteq \supp{\bx}$.
\end{thm}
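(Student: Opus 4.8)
The plan is to translate the cut-off property into a statement purely about the marking equation, and then to move back and forth between its rational and integer solutions. Writing $\vec{d} = M' - M$, the marking equation reads $\incidence \vect = \vec{d}$, and for a scaling factor $k$ the equation $\incidence \vect = k \vec{d}$ is exactly the marking equation of the scaled system $(\net, k\cdot M, k\cdot M')$. Since $\net$ is acyclic, Proposition~\ref{lem:acyclic} says that $k\cdot M \reachN{*} k\cdot M'$ holds iff $\incidence \vect = k\vec{d}$ has a nonnegative integer solution. Combining this with Lemma~\ref{lem:mcnugget}, the system admits a cut-off iff there is some $n \in \nn$ for which \emph{both} $\incidence \vect = n\vec{d}$ and $\incidence \vect = (n+1)\vec{d}$ have nonnegative integer solutions. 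The whole theorem therefore reduces to relating ``two consecutive scalings are realizable by nonnegative integer solutions'' to ``a nonnegative rational solution $\bx$ and an integer solution $\boldy$ with $\supp{\boldy} \subseteq \supp{\bx}$ both exist.''

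For the forward direction I would start from nonnegative integer solutions $\vect_1, \vect_2$ of $\incidence \vect = n\vec{d}$ and $\incidence \vect = (n+1)\vec{d}$, and build the two required solutions by combining them. Setting $\boldy = \vect_2 - \vect_1$ gives an integer vector with $\incidence \boldy = (n+1)\vec{d} - n\vec{d} = \vec{d}$, and setting $\bx = (\vect_1 + \vect_2)/(2n+1)$ gives a nonnegative rational vector with $\incidence \bx = \vec{d}$. The support inclusion is then immediate: since $\vect_1, \vect_2 \geq \textbf{0}$ there is no cancellation, so $\supp{\bx} = \supp{\vect_1} \cup \supp{\vect_2}$, while any coordinate where $\vect_2 - \vect_1$ is nonzero already lies in one of these two supports; hence $\supp{\boldy} \subseteq \supp{\bx}$.

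The backward direction is where the support condition earns its keep, and I expect it to be the crux of the argument. Given $\bx \in \qnz^T$ and $\boldy \in \zn^T$ solving the marking equation with $\supp{\boldy} \subseteq \supp{\bx}$, I would first choose a positive integer $D$ clearing the denominators of $\bx$, so that $D\bx \in \nn^T$ and $(D\bx)[t] \geq 1$ for every $t \in \supp{\bx}$. Then, for an integer $a \geq \max_t |\boldy[t]|$, set $n = aD$ and consider the two candidate vectors $aD\bx$ and $aD\bx + \boldy$. The first is a nonnegative integer solution of $\incidence \vect = n\vec{d}$, and the second is an integer solution of $\incidence \vect = (n+1)\vec{d}$; the delicate point is that $aD\bx + \boldy$ stays nonnegative. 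Indeed, on coordinates outside $\supp{\bx}$ both $\bx$ and (by the inclusion) $\boldy$ vanish, while on coordinates $t \in \supp{\bx}$ we have $(aD\bx)[t] \geq a \geq |\boldy[t]|$, so the entry is $\geq 0$. Thus both $n$ and $n+1$ are realizable by nonnegative integer solutions, and Lemma~\ref{lem:mcnugget} delivers the cut-off. The subtle ingredient is precisely that $\supp{\boldy} \subseteq \supp{\bx}$ forbids $\boldy$ from creating a negative entry at a place where $\bx$ offers no room to absorb it, which is exactly what lets the single integer correction $\boldy$ bridge $n$ to $n+1$.
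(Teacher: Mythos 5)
Your proof is correct, and its overall skeleton matches the paper's: reduce the cut-off property, via Lemma~\ref{lem:mcnugget}, to the existence of a single $n$ with $nM \reachN{*} nM'$ and $(n+1)M \reachN{*} (n+1)M'$, translate both reachability statements into nonnegative-integer solvability of the scaled marking equations via Proposition~\ref{lem:acyclic}, and then pass between such pairs of solutions and the pair $(\bx,\boldy)$. Your backward direction is essentially identical to the paper's (your $D$ and $a$ are its $\mu$ and $\alpha$, with $n = aD = \alpha\mu$, and the same case split on $t \in \supp{\bx}$ versus $t \notin \supp{\bx}$). The genuine difference is in the forward direction. The paper must first arrange that the two witnessing firing sequences $\sigma$ (for $n$) and $\sigma'$ (for $n+1$) satisfy $\supp{\sigma'} \subseteq \supp{\sigma}$ --- this relies on closure of the set of realizable sizes under addition and is the one delicate combinatorial step of its argument --- and only then sets $\bx = \parikh{\sigma}/n$ and $\boldy = \parikh{\sigma'} - \parikh{\sigma}$, inheriting the support inclusion from the sequences. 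You instead symmetrize: taking \emph{arbitrary} nonnegative integer solutions $\vect_1, \vect_2$ for $n$ and $n+1$ and setting $\bx = (\vect_1+\vect_2)/(2n+1)$ forces $\supp{\bx} = \supp{\vect_1} \cup \supp{\vect_2}$, so the inclusion $\supp{\boldy} \subseteq \supp{\bx}$ for $\boldy = \vect_2 - \vect_1$ holds for free, with no support alignment needed. This is a real simplification, made possible in the acyclic setting because Proposition~\ref{lem:acyclic} lets one work directly with solution vectors rather than firing sequences; and in fact the same averaging idea would also transfer to the general case of Theorem~\ref{th:main-general}, since concatenating the two discrete firing sequences and scaling by $1/(2n+1)$ yields a continuous firing sequence whose Parikh image is exactly that average.
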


\begin{proof}
\noindent ($\Rightarrow$): Suppose $(\net,M,M')$ admits a cut-off.
	By Lemma~\ref{lem:mcnugget}, there exists
	$n \in \nn$ and firing sequences $\sigma, \sigma'$ such that 
	$nM \reachN{\sigma} nM'$, $(n+1)M \reachN{\sigma'} (n+1)M'$
	and $\supp{\sigma'} \subseteq \supp{\sigma}$. 
	By Proposition~\ref{prop:acyclic}, this means that there exist
	$\bx', \boldy' \in \nn^T$ such that $\supp{\boldy'} \subseteq \supp{\bx'}$,
	$nM' = nM + \incidence\bx'$ and $(n+1)M' = (n+1)M + \incidence\boldy'$.
	Letting $\bx = \bx'/n$ and $\boldy = \boldy'-\bx'$, we get
	our required vectors.
	
	\medskip
	
\noindent ($\Leftarrow$): Suppose $\bx \in \qnz^T$ and $\boldy \in \zn^T$ are solutions of the marking equation such that $\supp{\boldy} \subseteq \supp{\bx}$. Let $\mu$ be the least common multiple of the denominators
	of the components of $\bx$, and let $\alpha$ be the
	largest absolute value of the numbers in the vector $\boldy$.
	By definition of $\mu$ we have $\alpha(\mu \bx) \in \nn^T$. 
	Also, since $\supp{\boldy} \subseteq \supp{\bx}$, it follows
	by definition of $\alpha$ that $\supp{\alpha\mu \bx + \boldy} \subseteq \supp{\alpha \mu \bx}$ and
	$\alpha(\mu \bx) + \boldy \ge \textbf{0}$. Since $M' = M + \incidence\bx$ and $M' = M + \incidence\boldy$ we get
	$$\alpha \mu M' = \alpha \mu M + \incidence (\alpha \mu \bx) \qquad \text{and} \qquad (\alpha \mu + 1) M' = (\alpha \mu + 1) M + \incidence (\alpha \mu \bx + \boldy)$$ 
	\noindent Taking $\alpha \mu = n$, by Proposition~\ref{prop:acyclic} we get that there are firing sequences $\sigma$ and $\sigma'$ such that
	$\parikh{\sigma} = \mu \alpha \bx$, $\parikh{\sigma'} = \mu \alpha \bx + \boldy$, 
	$nM \reachN{\sigma} nM'$ and $(n+1)M \reachN{\sigma'} (n+1)M'$. Since $\supp{\alpha\mu \bx + \boldy} \subseteq \supp{\alpha \mu \bx}$,
	by Lemma~\ref{lem:mcnugget}, $(\net,M,M')$ admits a cut-off.
\end{proof}

\begin{figure}[ht]
        \begin{center}
\begin{tikzpicture}[->, node distance=2cm, auto, thick]
        \node[transition] (tmid) {};
        \node[transition,  above of=tmid, yshift=-0.3cm] (t2) {};
        \node[place,  left of=tmid, yshift=-0.8cm] (pl) {};
        \node[place, right of=tmid, yshift=-0.8cm] (pr) {};
        \node[transition, left of=pl] (split) {};
        \node[transition, right of=pr] (merge) {};
        \node[place,  left of=split , yshift=0.8cm] (input) {};
        \node[place,  right of=merge, yshift=0.8cm] (output) {};
        \node[place,  below of=tmid, yshift=-0.5cm] (pmid) {};

        \path[->]
          (input) edge node {2} (t2) (t2) edge node {2} (output)
          (input) edge node [below] {2} (split) (split) edge (pl)
          (split) edge[bend right=3mm] (pmid)
          (pl) edge (tmid) (input) edge (tmid)
          (tmid) edge (pr) (tmid) edge (output)
          (pmid) edge[bend right=3mm] (merge) (pr) edge (merge)
          (merge) edge node [below] {2} (output)
        ;
        
       	\node[] () [left= -1pt of input] {$i$};
        \node[] () [right= -1pt of output] {$f$};
        \node[] () [below =-1pt of pl] {$p_1$};
        \node[] () [below= -1pt of pr] {$p_2$};
        \node[] () [below= -1pt of pmid] {$p_3$};
        \node[] () [above= -1pt of t2] {$t_1$};
        \node[] () [below= -1pt of tmid] {$t_2$};
        \node[] () [below = -1pt of split] {$t_3$};
        \node[] () [below = -1pt of merge] {$t_4$};
        
\end{tikzpicture}
        \caption{A net with cut-off 2.}
        \label{fig:cutoff-2-net}
        \end{center}
\end{figure}

Intuitively, the existence of the rational solution $\bx \in \qnz^T$ guarantees that
$nM \reachN{*} nM'$ for infinitely many $n$, and the existence of the integer solution $\boldy \in \zn^T$
guarantees that for one of those $n$ we have $(n+1) M \reachN{*} (n+1) M'$ as well.

\begin{exa}
		Consider the acyclic net system given by the net on Figure~\ref{fig:cutoff-2-net} 
        along with the markings $M = \multiset{i}$ and $M' = \multiset{f}$. We claim that 2 is a cut-off for this system.
        Indeed, first notice that for every $k \ge 1$, we have $2kM \xrightarrow{t_1^k} 2kM'$. Further, for every $k \ge 1$, we have
        $(2k+1)M \xrightarrow{t_3t_2t_4} (2k-2)M + 3M' \xrightarrow{t_1^{k-1}} (2k+1)M'$. Hence, 2 is a cut-off for this system.
        \medskip

        We now show that the conditions of Theorem~\ref{th:main-acyclic} are satisfied for this system. Notice that the marking equation for $M$ and $M'$ gives the following five equations:
        
        $$\begin{array}{rclcl}
        	0 &= & 1 - 2\vect_{t_1} -\vect_{t_2} -2\vect_{t_3} & & (\text{Equation for the place $i$})\\[0.2cm]
        	0 &=  &0 -\vect_{t_2} + \vect_{t_3} & & (\text{Equation for the place $p_1$})\\ [0.2cm]
        	0 &= & 0 +\vect_{t_2} - \vect_{t_4} &  &(\text{Equation for the place $p_2$})\\[0.2cm]
        	0 &= &0 +\vect_{t_3} - \vect_{t_4} & & (\text{Equation for the place $p_3$})\\[0.2cm]
        	1 &= &0 +2\vect_{t_1} +\vect_{t_2} +2\vect_{t_4} & & (\text{Equation for the place $f$})\\[0.2cm]
        \end{array}
        $$
		Notice that $\bx=(\frac{1}{5},\frac{1}{5},\frac{1}{5},\frac{1}{5})$ and $\boldy=(-1,1,1,1)$ are both solutions to these equations. Further, since $\supp{\boldy} \subseteq \supp{\bx}$, the conditions of Theorem~\ref{th:main-acyclic} are satisfied.
\end{exa}

\subsection{Polynomial time algorithm}

We derive a polynomial time algorithm for the cut-off
problem from  the characterization of Theorem \ref{th:main-acyclic}. 
The first step is the following lemma. A very similar lemma is proved in \cite{FracaH15}, but since the proof
is short we give it for the sake of completeness. \medskip

\begin{lem}~\label{lem:max-support}
	If the marking equation is feasible over $\qnz$, then one can compute in polynomial time a solution $\mathbf{u}$ such that 
	for every solution $\boldy$, $\supp{\boldy} \subseteq \supp{\mathbf{u}}$.
\end{lem}

\begin{proof}
	If  $\boldy, \mathbf{z} \in \qnz^T$ are solutions of the marking equation, then we have
	$M' = M + \incidence((\boldy + \mathbf{z})/2)$
	and $\supp{\boldy} \cup \supp{\mathbf{z}} \subseteq \supp{(\boldy + \mathbf{z})/2}$. Hence if the marking 
	equation is feasible over $\qnz$, then there is a solution $\mathbf{u}$ such that for every solution $\boldy$, $\supp{\boldy} \subseteq \supp{\mathbf{u}}$.
	
	To find such a solution in polynomial time we proceed as follows.
	For every transition $t$ we solve the linear program 
	$M' = M + \incidence \vect, \vect \ge \mathbf{0}, \vect_t > 0$.
	(Recall that solving linear programs over the non-negative rationals 
	can be done in polynomial time). Let $\{t_1, \ldots, t_n\}$ be the set of transitions whose associated 
	linear programs are feasible over $\qnz^T$, 
	and let $\{\mathbf{u}_1, \ldots, \mathbf{u}_n \}$ be solutions to these programs.
	Then $\mathbf{u} = 1/n \cdot \sum_{i=1}^n \mathbf{u}_i$
	is a solution of the marking equation that satisfies the desired property.
\end{proof}

We now have all the ingredients to give a polynomial time
algorithm. \medskip

\begin{thm}~\label{thm:main-poly-acyclic}
	The cut-off problem for acyclic net systems can be solved in polynomial time.
\end{thm}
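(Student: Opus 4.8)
The plan is to turn the characterization of Theorem~\ref{th:main-acyclic} into an algorithm by deciding its two conditions almost independently, exploiting the freedom to take $\bx$ to be a solution of \emph{maximum} support. First I would invoke Lemma~\ref{lem:max-support}. If the marking equation is infeasible over $\qnz$, then no $\bx \in \qnz^T$ exists, so the characterization of Theorem~\ref{th:main-acyclic} cannot hold and we reject. Otherwise the lemma returns, in polynomial time, a solution $\bx^* \in \qnz^T$ whose support $S^* := \supp{\bx^*}$ is maximal, i.e. $\supp{\bx} \subseteq S^*$ for every $\bx \in \qnz^T$ satisfying the marking equation.

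The key reduction step is that, once $S^*$ is fixed, the search for a \emph{compatible} integer solution collapses to a support-restricted integer feasibility test. Suppose $(\net,M,M')$ admits a cut-off. By Theorem~\ref{th:main-acyclic} there are $\bx \in \qnz^T$ and $\boldy \in \zn^T$ with $\supp{\boldy} \subseteq \supp{\bx}$; since $\supp{\bx} \subseteq S^*$ by maximality, we get $\supp{\boldy} \subseteq S^*$. Hence the pair $(\bx^*, \boldy)$ already witnesses the characterization, and $\boldy$ is in particular an integer solution supported on $S^*$. Conversely, any integer solution $\boldy$ with $\supp{\boldy} \subseteq S^*$ forms, together with $\bx^*$, a valid pair. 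Therefore $(\net,M,M')$ admits a cut-off if{}f the system $M' = M + \incidence_{P \times S^*}\,\vect$ has a solution $\vect \in \zn^{S^*}$, where the columns indexed outside $S^*$ have been deleted (equivalently, forced to $0$).

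The second and final step is thus to decide feasibility of $M' = M + \incidence_{P \times S^*}\,\vect$ over $\zn^{S^*}$. As recalled before Lemma~\ref{lem:max-support}, integer feasibility of a system of linear diophantine equations is in \P: one computes the Hermite (or Smith) normal form of $\incidence_{P \times S^*}$ in polynomial time \cite{KannanB79} and reads off solvability of $\incidence_{P \times S^*}\,\vect = M' - M$ from it. We accept if this system is solvable and reject otherwise. Both steps run in polynomial time, so the whole procedure does.

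I do not expect a genuine obstacle in the algorithmics, since each of the two relaxations---nonnegative rational and unrestricted integer---is individually polynomial. The only nontrivial point is the bookkeeping that lets us commit to a single support for $\bx$ once and for all, and the justification for this is precisely the maximality of $S^*$ guaranteed by Lemma~\ref{lem:max-support}: it ensures that restricting the integer search to the columns in $S^*$ discards no possible witness, so the two feasibility tests may be run sequentially rather than searched jointly.
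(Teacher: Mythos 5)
Your proposal is correct and follows essentially the same route as the paper's own proof: compute a maximum-support nonnegative rational solution via Lemma~\ref{lem:max-support}, then test integer feasibility of the marking equation restricted to that support, with correctness following from Theorem~\ref{th:main-acyclic} and the maximality of the support. Your write-up just makes explicit the maximality argument that the paper leaves implicit when it fixes the zero set $U$ and appeals to Theorem~\ref{th:main-acyclic}.
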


\begin{proof}
	First, we check that the marking equation 
	has a solution over the non-negative rationals.
	If such a solution does not exist, by Theorem~\ref{th:main-acyclic} 
	the given net system does not admit a cut-off.
	
	Suppose such a solution exists.
	By Lemma~\ref{lem:max-support} we can find a non-negative
	rational solution $\bx$ with maximum support in polynomial time. Let $U$ contain
	all the transitions $t$ such that $\bx_t = 0$.
	We now check in polynomial time if the marking equation has a solution $\boldy$ over 
	$\zn^T$ such that $\boldy_t = 0$
	for every $t \in U$.  By Theorem~\ref{th:main-acyclic} such a solution exists
	if and only if the net system admits a cut-off.
\end{proof}

\section{The Scaling and Insertion lemmas}\label{sec:lemmas}
Similar to the case of acyclic net systems, we would like to provide
a characterization of net systems admitting a cut-off and then
use this characterization to derive a polynomial time algorithm.
Unfortunately, for general net systems there is no 
characterization of reachability akin to Proposition~\ref{prop:acyclic} for
acyclic systems. 
To this end, we prove two intermediate lemmas to help us come 
up with a characterization for (general) net systems which admit a cut-off. We believe that these two lemmas could be of independent interest in their
own right. Further, the proofs of both the lemmas are provided so that
it will enable us later on to derive a bound on the cut-off for 
net systems.

\subsection{The Scaling Lemma}

The Scaling Lemma shows that, given a Petri net system $(\net, M, M')$, deciding whether $n M \reachN{*} n M'$ holds for some $n \geq 1$ can be done in polynomial time; moreover, if $n M \reachN{*} n M'$ holds for some $n$, then it holds for some $n$ which can be described by at most $(|\net|(\log\norm{M} + \log\norm{M'}))^{O(1)}$ bits. The name of the lemma is due to the fact that the firing sequence leading from $n M$ to $n M'$ is obtained by \emph{scaling up} a \emph{continuous firing sequence} from $M$ to $M'$; the existence of such a continuous firing sequence can be decided in polynomial time due to a result by Fraca and Haddad \cite{FracaH15}. 

In the rest of the section we first recall continuous Petri nets and the characterization of \cite{FracaH15}, and then present the Scaling Lemma. As mentioned in the introduction, this lemma is implicitly proved in \cite{FracaH15}, but the bound on $n$ is hidden in the details of the proof, and we make it explicit here.

\subsubsection{Reachability in continuous Petri nets.} Petri nets can be given a \emph{continuous semantics} (see e.g.\ \cite{AllaD98,RecaldeHS10,FracaH15}), in which
markings are continuous multisets; we call them \emph{continuous markings}. 
A continuous marking $M$ enables a transition $t$ \emph{with factor $\lambda \in (0,1]$} if~$M(p) \ge \lambda \cdot Pre[p,t]$ for every place $p$; we also say that $M$ enables $\lambda t$.
If $M$ enables $ \lambda t$,  then~$\lambda t$ can fire or occur, leading to a new marking $M'$ given 
by $M'(p) = M(p) + \lambda \cdot \incidence[p,t]$ for every $p \in P$. We denote this by $M \reachQ{\lambda t} M'$,
and say that $M'$ is reached from $M$ by firing $\lambda t$. A \emph{continuous firing sequence} is any sequence of the form $\sigma = \lambda_1 t_1, \lambda_2 t_2,\dots, \lambda_k t_k \in ((0,1] \times T)^*$. 
We let $M \reachQ{\sigma} M'$ denote that there exist continuous markings $M_1,\dots,M_{k-1}$
such that $M \reachQ{\lambda_1 t_1} M_1 \reachQ{\lambda_2 t_2} M_2 \cdots M_{k-1}
\reachQ{\lambda_k t_k} M'$. We can then define $M \reachQ{*} M'$, $M \reachQ{\sigma}$
and $\reachQ{\sigma} M$ in a similar manner as before. 

The \emph{Parikh image} of $\sigma= \lambda_1 t_1, \lambda_2 t_2,\dots, \lambda_k t_k$
is the vector $\parikh{\sigma} \in \qnz^T$  where $\parikh{\sigma}[t] = \sum_{i= 1}^k  \delta_{i,t} \lambda_i$,
where $\delta_{i,t} = 1$ if $t_i = t$ and $0$ otherwise.  
The support of $\sigma$ is the support of its Parikh image $\parikh{\sigma}$.
If $M \reachQ{\sigma} M'$ then $\parikh{\sigma}$ is a solution of the marking equation  over $\qnz^T$, but the converse does not hold.  In \cite{FracaH15}, Fraca and Haddad  strengthen this necessary condition to make it also sufficient, and use the resulting characterization to derive a polynomial time algorithm.

\begin{thmC}[{\cite[Theorem 20 and Proposition 27]{FracaH15}}]
\label{thm:reachQ}
Let $(\net,M,M')$ be a Petri net system. Then the following statements are true:
\begin{itemize}
\item For a continuous firing sequence $\sigma$, $M \reachQ{\sigma} M'$ is true if and only if $\parikh{\sigma}$ is a solution of the marking equation over $\qnz^T$ and there exist continuous firing sequences $\tau$, $\tau'$ 
such that $\supp{\tau} = \supp{\sigma} = \supp{\tau'}$, 
$M \reachQ{\tau}$ and $\reachQ{\tau'} M'$.
\item  It can be decided in polynomial time if $M \reachQ{*} M'$ holds.
\end{itemize}
\end{thmC}

\subsubsection{Scaling.}  It follows easily from the definitions that $n M \reachN{*} n M'$ holds for some $n \geq 1$ if and only if $M \reachQ{*} M'$.
Indeed, if $M \reachQ{\sigma} M'$ for some continuous firing sequence $\sigma = \lambda_1 t_1, \lambda_2 t_2,\dots, \lambda_k t_k$, then we can scale this continuous firing sequence to a discrete sequence $nM \reachQ{n\sigma} nM'$ where $n$ is the smallest number such that $n \lambda_1, \ldots, n \lambda_k \in \N$, and $n \sigma = t_1^{n \lambda_1} t_2^{n \lambda_2} \dots t_k^{n \lambda_k}$. For the other direction if $nM \reachN{\sigma} nM'$ holds for some $n \ge 1$ and some $\sigma = t_1,\dots,t_k$, then 
$M \reachQ{\sigma/n} M'$ is true where $\sigma/n$ is the continuous firing sequence given by $\sigma/n = t_1/n, t_2/n, \dots, t_k/n$.
So Theorem~\ref{thm:reachQ} immediately implies that we can decide in polynomial time if there is a number $n  \geq 1$ satisfying $n M \reachN{*} n M'$. The following lemma also gives a bound on $n$.

\begin{lem}
\label{lem:fh}
Let $(\net,M,M')$ be a Petri net system with weight $\weight$ such that $M \reachQ{\sigma} M'$ for some continuous firing sequence $\sigma$.
Let $m$ be the number of transitions in $\supp{\sigma}$ and let $\ell$ be $\norm{\parikh{\sigma}}$. Let $k$ be the smallest natural number such that $k \parikh{\sigma} \in \nn^T$. 
Then, there exists a firing sequence $\tau \in T^*$  such that $\supp{\tau} = \supp{\sigma}, \ \norm{\parikh{\tau}} \le \mu \ell$ and
$$\mu \cdot M \xrightarrow{\tau} \mu \cdot M'$$
where $\mu = 16 \weight (\weight+1)^{2m} k^2 \ell$.
\end{lem}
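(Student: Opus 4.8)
The plan is to realize the scaled run $NM \reachN{\tau} NM'$, where $N = 16\weight(\weight+1)^{2m}k\ell$, by exhibiting a good \emph{ordering} of a single fixed multiset of transition firings. Since $\parikh{\sigma}$ solves the marking equation $M' = M + \incidence\parikh{\sigma}$ over $\qnz^T$ and $k\parikh{\sigma}\in\nn^T$, the vector $\mathbf{w} := N\parikh{\sigma} = \big(16\weight(\weight+1)^{2m}\ell\big)\cdot(k\parikh{\sigma})$ lies in $\nn^T$ (note $k \mid N$) with $\norm{\mathbf{w}} = N\ell$ and $\incidence\mathbf{w} = N(M'-M)$. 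Hence \emph{any} firing sequence $\tau$ with $\parikh{\tau} = \mathbf{w}$ that is fireable from $NM$ automatically ends in $NM'$ and satisfies $\supp{\tau} = \supp{\sigma}$. So the entire problem reduces to \emph{scheduling} the $N\ell$ firings prescribed by $\mathbf{w}$ so that every intermediate discrete marking stays nonnegative; no firings are added or removed, which is what forces the run to land exactly on $NM'$.

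The difficulty is purely local: a batch of firings of one transition $t$ that is safe in the continuous semantics can get stuck discretely, but only by at most $\weight$ tokens, and only on a place that $t$ both consumes and produces. To absorb this constant gap I would keep every place of $\pre{\supp{\sigma}}\cup\post{\supp{\sigma}}$ loaded with a buffer throughout the middle of the schedule. The buffer is produced using the full-support witnesses of Theorem~\ref{thm:reachQ}: applying it to $(\net,M,M')$ yields continuous sequences $\tau_0,\tau_1$ and markings $L,L'$ with $\supp{\tau_0}=\supp{\tau_1}=\supp{\sigma}$, $M \reachQ{\tau_0} L$ and $L' \reachQ{\tau_1} M'$, where $L,L'$ carry a \emph{strictly positive} amount on every place of $\supp{\sigma}$. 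Concretely I would schedule the $\mathbf{w}$-firings in three regimes, all drawn from the same budget $\mathbf{w}$: a \emph{ramp-up} that performs early firings in the dependency order in which $\tau_0$ first activates the transitions, lifting $NM$ to a marking with a buffer of order $\weight k\ell$ on every relevant place; a \emph{bulk} phase that fires the remaining budget along the order of the (scaled) continuous trajectory of $\sigma$, where the buffer guarantees enabledness at each step; and a symmetric \emph{ramp-down} driven by $\tau_1$ that spends the buffer down and arrives precisely at $NM'$.

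The technical heart — and the step I expect to be the main obstacle — is the quantitative \emph{lifting} estimate controlling how much scaling is needed for the ramp-up to build the buffer, equivalently how small the positive floor of $L$ (and $L'$) can be forced to be. I would prove, by induction on the number $m$ of transitions in $\supp{\sigma}$, that ordering these transitions by first occurrence in $\tau_0$ and pumping each in turn lets one discretely raise a buffer of prescribed size on all of $\pre{\supp{\sigma}}$ starting from $\theta(\weight+1)^m\cdot M$: the first transition is already enabled at $M$ (integrality of $M$ gives at least one token on each of its inputs), and each later transition is fed only by earlier ones, so the multiplicity required at each level grows by a factor of at most $(\weight+1)$, yielding the cascade bound $(\weight+1)^m$. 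Running the argument at both ends accounts for the exponent $2m$; the factor $k$ clears the denominators of $\parikh{\sigma}$ so the budget is integral, the factor $\ell = \norm{\parikh{\sigma}}$ bounds the per-place token flow that the buffer must dominate, and the constant $16\weight$ is slack absorbing the additive discretization gaps ($\leq\weight$ per batch) together with integer rounding.

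The delicate bookkeeping, which I would handle last, is to verify simultaneously that (i) the ramp-up and ramp-down only reorder firings already contained in $\mathbf{w}$, so that the total Parikh image is exactly $N\parikh{\sigma}$ and the run terminates at $NM'$ with $\supp{\tau}=\supp{\sigma}$; (ii) the buffer established by the ramp-up survives the entire bulk phase; and (iii) all three scaling contributions combine to keep the global factor within $N = 16\weight(\weight+1)^{2m}k\ell$.
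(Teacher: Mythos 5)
Your reduction to a pure scheduling problem is sound (any fireable $\tau$ with $\parikh{\tau}=N\parikh{\sigma}$ automatically ends in $NM'$ and has the right support), and your ramp-up and ramp-down are exactly the paper's saturation argument (Lemma~\ref{lem:saturation}): the same ordering of transitions by first occurrence, the same $(\weight+1)$-per-level cascade giving $(\weight+1)^m$, and the same trick of running the argument in the reverse net for the other end. The genuine gap is the bulk phase. You propose to fire the remaining budget ``along the order of the (scaled) continuous trajectory of $\sigma$,'' with the buffer and the constant $16\weight$ absorbing discretization gaps of at most $\weight$ per batch. This fails for a quantitative reason that is the whole point of the lemma: the factor $N = 16 \weight (\weight+1)^{2m} k \ell$ only makes the \emph{Parikh image} $N\parikh{\sigma}$ integral, not the individual firing amounts $N\lambda_i$. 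The number $k$ clears the denominators of the column sums, but the denominators of single steps can be arbitrary (e.g.\ $\sigma = (1/p)\,t,\ (1-1/p)\,t$ has $k=1$ for every prime $p$), and the length of $\sigma$ --- hence the number of batches --- is bounded by no function of $m$, $\ell$, $k$ and $\weight$. So either you scale by the lcm of all step denominators, a quantity the statement's parameters do not control, or you round every batch, and per-batch errors of order $\weight$ accumulate over unboundedly many batches; a buffer of size $O(\weight k \ell)$ per place, let alone the constant $16\weight$, cannot absorb them. (A smaller slip: Theorem~\ref{thm:reachQ} does not assert that $L$ and $L'$ are strictly positive on the places touched by $\supp{\sigma}$; that positivity is precisely what the saturation lemma has to manufacture.)

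The missing idea is the paper's iteration lemma (Lemma~\ref{lem:iter}): if a single sequence $\tau_3$ satisfies $L_1 \reachN{\tau_3}$ and $\reachN{\tau_3} L_2$, and $n\parikh{\tau_3}$ solves the marking equation from $L_1$ to $L_2$, then $L_1 \reachN{\tau_3^n} L_2$. Its proof is not a buffer argument but an interpolation via the marking equation: writing $\Delta := \incidence\cdot\parikh{\tau_3}$, each intermediate marking of the second copy equals $L_i + \Delta$, where $L_i$ (intermediate marking of the first copy) and $L_i + (n-1)\Delta$ (intermediate marking of the last copy) both dominate $\Pre[\cdot,t_{i+1}]$; whichever sign $\Delta(p)$ has, $L_i(p)+\Delta(p) \ge \Pre[p,t_{i+1}]$ follows, and induction finishes. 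With this lemma the bulk phase needs only \emph{one} integer vector $v = 4\beta k \parikh{\sigma} - (\parikh{\tau_1}+\parikh{\tau_2})$ of norm at most $4\beta k\ell$: one copy fires from the front buffer by Proposition~\ref{prop:easy}, one copy fires backward into the back buffer, and the iteration lemma pumps it $\gamma$ times --- entirely independent of the length and the step denominators of $\sigma$. Without this lemma, or an equivalent substitute for it, your three-regime schedule cannot be completed within the stated scaling factor.
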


We shall first give an intuitive idea behind the proof of Lemma~\ref{lem:fh}. 
Assume that $M \reachQ{\sigma} M'$. For the purpose of studying this firing sequence, we can remove from the 
net, all transitions that do not occur in $\sigma$, and all places that are neither input nor output places of some transition 
of $\sigma$. Let $\net'$ be the resulting net. We show that $\beta M \reachN{*} \beta M'$ for a sufficiently large $\beta$.
In a first step, we show in Lemma~\ref{lem:saturation}, that for a sufficiently large $n$ we can find markings $M_1$ and $M_2$ 
that mark every place of $\net'$ and satisfy the following properties: $nM \reachN{*} M_1$, $M_2 \reachN{*} nM'$.
Hence, if we show that $n'M_1 \reachN{*} n'M_2$ is true for a sufficiently large $n'$, then by the monotonicity property,
we would have $nn'M \reachN{*} n'M_1 \reachN{*} n'M_2 \reachN{*} nn'M'$ and we would be done.
To show that such an $n'$ exists, we apply a folklore lemma showing that 
if $L_2 = L_1 + \incidence (m \bx)$ for some number $m$ and some $\bx \in \nn^T$, and
$L_1 \reachN{\tau}$ and $\reachN{\tau} L_2$ for some firing sequence $\tau$ such that $\parikh{\tau} = \bx$, then
$L_1 \reachN{\tau^m} L_2$ where $\tau^m$ denotes the concatentation of 
$\tau$ with itself $m$ times (Lemma~\ref{lem:iter}). Finally, we show in Lemma~\ref{lem:saturation} itself that $M_1$ and $M_2$ can be chosen so that 
for sufficiently large $n'$, the markings $L_1:= n' M_1$ and $L_2 := n' M_2$ satisfy the preconditions of Lemma~\ref{lem:iter}.

We now proceed to prove Lemmas~\ref{lem:saturation} and~\ref{lem:iter} and then we prove Lemma~\ref{lem:fh} using these two lemmas.

\begin{lem}\label{lem:saturation}
	Let $M$ be a marking such that $M \reachQ{\sigma}$ for
	some continuous firing sequence $\sigma$. Let $m$ be the size of $\supp{\sigma}$
	and let $\weight$ be the weight of $\net$. 
	Then there exists a firing sequence  $$\left( (\weight +1)^m \cdot M \right) \xrightarrow{\tau} L$$
	\noindent such that $\supp{\tau} = \supp{\sigma}$, $\norm{\parikh{\tau}} \leq 2 (\weight +1)^m$ and $L(p) > 0$ 
	for every $p \in \supp{M} \cup \pre{\supp{\sigma}} \cup \post{\supp{\sigma}}$.
\end{lem}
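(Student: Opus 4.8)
The plan is to prove the lemma by exhibiting an explicit firing schedule that saturates all the relevant places, using a geometrically decreasing ``token budget''. First I would extract from the continuous run a linear order on the transitions of $\supp{\sigma}$. Listing the transitions of $\sigma$ in the order in which they \emph{first} occur along the run $M \reachQ{\sigma}$, I obtain an enumeration $t_1, \dots, t_m$ of $\supp{\sigma}$ with the property that $\pre{t_i} \subseteq P_{i-1}$, where $P_i := \supp{M} \cup \post{\{t_1, \dots, t_i\}}$ and $P_0 = \supp{M}$. This is the easy ``support'' direction of continuous reachability: when $t_i$ fires for the first time every place of $\pre{t_i}$ carries a strictly positive amount of token, and in the continuous semantics a place can be positive only if it lies in $\supp{M}$ or has been fed by a previously fired transition, all of which are among $t_1,\dots,t_{i-1}$. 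Note that this already yields $\supp{M} \cup \pre{\supp{\sigma}} \cup \post{\supp{\sigma}} \subseteq P_m$, so it suffices to mark every place of $P_m$.

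Next I would fire, in this order, the sequence $\tau := t_1^{N_1} t_2^{N_2} \cdots t_m^{N_m}$ from the marking $(\weight+1)^m M$, where $N_i := (\weight+1)^{m-i}$. The schedule is governed by the invariant that, after the block $t_i^{N_i}$, the current marking $M^{(i)}$ assigns at least $(\weight+1)^{m-i}$ tokens to every place of $P_i$. The invariant holds initially since $M(p) \ge 1$ on $P_0 = \supp{M}$ forces $(\weight+1)^m M(p) \ge (\weight+1)^m$. For the inductive step, every input place of $t_i$ lies in $P_{i-1}$ and so starts with at least $(\weight+1)^{m-i+1} > \weight$ tokens, which keeps $t_i$ enabled throughout its whole block; firing it $N_i$ times removes at most $\weight N_i$ tokens from any place, and the crucial arithmetic is that the reserve $(\weight+1)^{m-i+1}$ minus the consumption $\weight\cdot(\weight+1)^{m-i}$ equals exactly $(\weight+1)^{m-i}$, so the old places of $P_{i-1}$ retain the required budget, while each freshly produced place of $\post{t_i}$ receives at least $N_i = (\weight+1)^{m-i}$ tokens. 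Thus the invariant propagates, and the endpoint $L = M^{(m)}$ carries at least $(\weight+1)^0 = 1$ token on every place of $P_m$, which is exactly the claimed positivity. Since each $N_i \ge 1$ we also get $\supp{\tau} = \supp{\sigma}$, and the length is $\sum_{i=1}^m (\weight+1)^{m-i} = \frac{(\weight+1)^m - 1}{\weight} \le 2(\weight+1)^m$ by the geometric sum.

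The hard part is the quantitative bookkeeping in the inductive step: one must choose the block lengths $N_i$ so that a single scaling factor $(\weight+1)^m$, fixed in advance, simultaneously keeps every transition enabled for the entire duration of its block and leaves every already-saturated place with enough tokens to survive all the later consumption drawn from it. The exact-fit identity $(\weight+1)^{m-i+1} - \weight(\weight+1)^{m-i} = (\weight+1)^{m-i}$ is what makes the geometrically shrinking budget self-sustaining; getting this telescoping right, rather than merely proving positivity with some unspecified exponential blow-up, is the delicate point, and it is precisely what pins down the stated bounds $(\weight+1)^m$ on the scaling and $2(\weight+1)^m$ on the length.
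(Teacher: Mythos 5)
Your proof is correct and follows essentially the same route as the paper's: both order $\supp{\sigma}$ by first occurrence in the continuous run, use the support fact $\pre{t_i} \subseteq \supp{M} \cup \post{\{t_1,\dots,t_{i-1}\}}$, scale the initial marking by $(\weight+1)^m$, and fire the transitions with the same geometrically decreasing multiplicities $(\weight+1)^{m-i}$. The only difference is bookkeeping: the paper's induction repeats the previously built sequence $(\weight+1)$ times and then fires $t_i$ once, keeping only positivity of the relevant places, whereas you verify the block sequence $t_1^{N_1}\cdots t_m^{N_m}$ in a single pass with an explicit token-budget invariant --- in fact your sequence is exactly the block form that the paper displays as its definition of $\tau_i$.
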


\begin{proof}
	Let $t_1, t_2, \ldots, t_m$ be the transitions of $\supp{\sigma}$, sorted according to the order of
	their first occurrence in $\sigma$.  Let $\beta_0 = 1$, $\beta_i = (\weight+1)^{i}$ for every $i \in \{1, \ldots, m\}$ and define the sequence 
	$$\tau'_0 = \epsilon$$
	$$\tau'_i = t_1^{\beta_{i-1}} t_{2}^{\beta_{i-2}} \cdots t_i^{\beta_0} \ $$
	
	\noindent For each $i \in \{0,\ldots,m\}$, we now show that there exists a firing sequence $\tau_i$ and a marking $M_i$ such that
	$(\beta_i \cdot M) \xrightarrow{\tau_i} M_i$, the Parikh images of $\tau_i$ and $\tau'_i$ are the same and
	$M_i(p) > 0$
	for every $p \in \supp{M} \cup \pre{\supp{\tau_i}} \cup \post{\supp{\tau_i}}$.
	If this claim is true, then by definition of each $\beta_i$, we have~$\norm{\parikh{\tau_i}} = \norm{\parikh{\tau'_i}} \le 2(w+1)^i$ and also that
	 $\supp{\tau_m} = \supp{\sigma}$. Hence, the lemma would then follow by taking $L:=M_m$.
	Therefore, all that remains is to prove the claim which we do by induction on $i$.
	
	\medskip\noindent\textbf{Basis: $i=0$.} Then $\beta_i = 1$ and the result follows from $M \xrightarrow{\epsilon} M$.
		
	\medskip\noindent\textbf{Induction Step:  $i \geq 1$.} 
	By induction hypothesis,
	$$(\beta_{i-1} \cdot M) \xrightarrow{\tau_{i-1}} M_{i-1}$$
	\noindent  where $M_{i-1}(p) > 0$ for every $p \in \supp{M} \cup \pre{\supp{\tau_{i-1}}} \cup \post{\supp{\tau_{i-1}}}$ . Since $\beta_i = (\weight +1) \beta_{i-1}$ we have
	$$(\beta_i \cdot M)  \xrightarrow{\tau_{i-1}^{(\weight+1)}} \; ((\weight +1) \cdot M_{i-1})$$
	
	\noindent Since  $M \reachQ{\sigma}$ and $t_i$ appears in $\sigma$ after $t_1,\dots,t_{i-1}$,
	it follows that every place $p \in \pre{t_i}$
	has at least $(\weight + 1)$ tokens at the marking $(\weight+1) \cdot M_{i-1}$. So $(\weight+1) \cdot M_{i-1} \xrightarrow{t_i} M_i$ for some marking $M_i$ 
	such that $M_i(p) > 0$ for every $p \in \supp{M} \cup 
	\pre{\{t_1,\dots,t_i\}} \cup \post{\{t_1,\dots,t_i\}}$. Hence, if we take $\tau_i$ to be the sequence $\tau_{i-1}^{(w+1)}\ t_i$, then the proof of the induction step becomes complete.
\end{proof}

\begin{lem}\label{lem:iter}
	Let $(\net, M, M')$ be a Petri net system, and let $\sigma$ be a firing sequence such that $M \xrightarrow{\sigma}
	,\ \xrightarrow{\sigma} M'$ and $n \parikh{\sigma}$ is a solution of the marking equation for some $n \ge 1$.
	Then $M \xrightarrow{\sigma^n} M'$.
\end{lem}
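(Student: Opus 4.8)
The plan is to set $v := \incidence \parikh{\sigma}$, the net displacement produced by one occurrence of $\sigma$, so that the hypothesis ``$n\parikh{\sigma}$ is a solution of the marking equation'' reads exactly $M' = M + nv$. The target $M \xrightarrow{\sigma^n} M'$ will follow from the stronger claim that $M \xrightarrow{\sigma^i} M + iv$ for every $i \in \{0,1,\dots,n\}$, which I would establish by induction on $i$: the base case $i=0$ is trivial, and the case $i=n$ yields $M \xrightarrow{\sigma^n} M + nv = M'$. The inductive step reduces to showing that the single sequence $\sigma$ is firable from each intermediate marking $M + iv$ with $0 \le i \le n-1$; once firability is known, firing $\sigma$ there lands at $M + iv + v = M + (i+1)v$, since one occurrence of $\sigma$ shifts its source by exactly $v$.

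The heart of the argument is this firability claim, and it is where the two hypotheses $M \xrightarrow{\sigma}$ and $\xrightarrow{\sigma} M'$ both enter. Writing $\sigma = u_1 \cdots u_k$ and letting $v_j := \sum_{l \le j} \incidence[\cdot, u_l]$ be the partial displacement after the first $j$ transitions (so $v_0 = \mathbf{0}$ and $v_k = v$), firability of $\sigma$ from a marking $N$ is precisely the family of inequalities $(N + v_{j-1})(p) \ge \Pre[p, u_j]$ ranging over all prefixes $j$ and places $p$. The hypothesis $M \xrightarrow{\sigma}$ supplies these inequalities at $N = M$. The hypothesis $\xrightarrow{\sigma} M'$ supplies a marking from which $\sigma$ reaches $M'$; since any run of $\sigma$ shifts its source by exactly $v$, that marking must equal $M' - v = M + (n-1)v$, so we also obtain the inequalities at $N = M + (n-1)v$.

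Now fix a prefix $j$ and a place $p$ and view the quantity $(M + iv + v_{j-1})(p)$ as a function of $i$: it equals $(M + v_{j-1})(p) + i\, v(p)$, which is affine in $i$. On the range $\{0,\dots,n-1\}$ an affine function attains its minimum at an endpoint, and both endpoints already dominate $\Pre[p, u_j]$ (the endpoint $i=0$ by firability from $M$, the endpoint $i=n-1$ by firability from $M + (n-1)v$). Hence the inequality holds for every intermediate $i$, so $\sigma$ is firable from every $M + iv$ with $0 \le i \le n-1$, completing the induction. I expect the only delicate point to be the bookkeeping that the marking witnessing $\xrightarrow{\sigma} M'$ is exactly $M + (n-1)v$, which combines the marking equation $M' = M + nv$ with the displacement fact; the rest is the elementary observation that a monotone affine function on an interval is controlled by its two endpoints.
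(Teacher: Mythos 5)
Your proof is correct and rests on exactly the paper's key insight: the prefix-enabledness constraints are affine in the starting marking, and since they hold both at $M$ (from $M \xrightarrow{\sigma}$) and at $M + (n-1)v$ (the witness of $\xrightarrow{\sigma} M'$, pinned down by the marking equation), they hold at every intermediate marking $M + iv$. The only difference is organizational: the paper runs an induction on $n$, applying this interpolation only at the point $i=1$ to get $M + v \xrightarrow{\sigma}$ and then recursing on $(\net, M+v, M')$ with $n-1$, whereas you establish firability from all the markings $M + iv$, $0 \le i \le n-1$, in one shot and then chain the runs --- the same argument, flattened.
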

\begin{proof}
	Since $n \parikh{\sigma}$ is a solution of the marking equation, $M \xrightarrow{\sigma^n}$ implies $M \xrightarrow{\sigma^n} M'$. 
	So it suffices to prove $M \xrightarrow{\sigma^n}$. We proceed by induction on $n$. 
	
	\medskip\noindent\textbf{Basis: $n=1$.} Then $M \xrightarrow{\sigma}$ follows immediately from the assumptions.
	
	\medskip \noindent \textbf{Induction hypothesis: } Assume that for some number $n$ and for all Petri net systems $(\net,M,M')$ and firing sequences
	$\sigma$ for which $M \xrightarrow{\sigma},\ \xrightarrow{\sigma} M'$ and $n \parikh{\sigma}$ is a solution to the marking equation, we have shown that 
	$M \xrightarrow{\sigma^n} M'$. 
	
	\medskip \noindent \textbf{Induction step: } We now prove the claim for $n+1$. Let $M_1, M_1'$ be the markings such that $M \reachN{\sigma} M_1$ 
	and $M_1'  \reachN{\sigma} M'$. We first claim that $M_1 \reachN{\sigma}$ holds.  Let $\sigma = t_1 t_2 \dots t_m$, and let $L_{1}, \dots, L_{m-1}, L_{1}', 
	\ldots, L_{m-1}'$ be the markings given by
	$$
	M  \reachN{t_1}  L_{1} \cdots L_{m-1} \reachN{t_m}  M_1   \qquad \text{ and } \qquad M'_1  \reachN{t_1} L_{1}' \cdots L_{m-1}' \reachN{t_m}  M' \;\; .
	$$
	Further, let $\sigma_0 := \epsilon$ and $\sigma_i := t_1 \cdots t_i$ for every $1 \leq i \leq m$, and define $K_{i}: = M_1 + \incidence \cdot \parikh{\sigma_i}$.
	We prove that $K_{i} \reachN{t_{i+1}}$ holds for every $0 \leq i < m$, which implies the claim that $M_1 \reachN{\sigma}$. To prove that $K_{i} \reachN{t_{i+1}}$ holds, it suffices to show that $K_{i}(p) \geq \Pre[p, t_{i+1}]$ holds for every place~$p$. Since $L_{i} \reachN{t_{i+1}}$ and $L_{i}' \reachN{t_{i+1}}$, we have $L_{i}(p) \geq \Pre[p, t_{i+1}]$ and 
	$L_{i}'(p) \geq \Pre[p, t_{i+1}]$. 
	Notice that since $M' = M + \incidence \cdot ((n+1)\parikh{\sigma})$, we have
	$L_i' = L_i + \incidence \cdot (n\parikh{\sigma})$.
	Let $\sigma[i]$ be the cyclic permutation of $\sigma$ starting at $t_{i+1}$ (or equivalently, ending at $t_i$). 
	By the definition of the marking $K_i$ we have $K_i = L_i + \incidence \cdot \parikh{\sigma[i]}$.  Since $\incidence \cdot \parikh{\sigma} = \incidence \cdot \parikh{\sigma[i]}$, letting $\Delta := \incidence \cdot \parikh{\sigma}$, we obtain altogether
	$$\begin{array}{rclcl}
	L_i(p)  &  & & \geq &  \Pre[p, t_{i+1}] \\[0.2cm]
	K_i(p) & = & L_i(p) + \Delta(p) &  \\[0.2cm]
	L_{i}' (p) & =  & L_{i}(p) + (n-1) \Delta(p) & \geq & \Pre[p, t_{i+1}] 
	\end{array}$$
	This implies  $K_i(p) \geq \Pre[p,t_{i+1}]$, and the claim is proved. By the claim we have  $M_1 \reachN{\sigma}$; moreover $\xrightarrow{\sigma} M'$, and
	$n \parikh{\sigma}$ is a solution of the marking equation for $(\net,M_1,M')$. By induction hypothesis $M_1 \reachN{\sigma^{n-1}} M'$,
	and so $M \xrightarrow{\sigma^n} M'$.
\end{proof}

We also need the following minor proposition.

\begin{prop}\label{prop:easy}
	Let $\weight$ be the weight of the net $\net$.
	Suppose $M$ is a marking and $\sigma$ is a firing sequence
	such that $M(p) \ge \weight \cdot \parikh{\sigma}(t)$ for every $t \in \supp{\sigma}$ and for every $p \in \pre{t}$. Then $M \xrightarrow{\sigma}$.
\end{prop}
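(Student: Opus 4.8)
The plan is to proceed by induction on the length $|\sigma|$ of the firing sequence. The base case $\sigma = \epsilon$ is immediate, since the empty sequence is firable from any marking. For the inductive step I would write $\sigma = t_1 \sigma'$ and carry out three moves: first show that $t_1$ is enabled at $M$; then fire it, obtaining $M \xrightarrow{t_1} M_1$; and finally verify that the pair $(M_1, \sigma')$ still satisfies the hypothesis, so that the induction hypothesis gives $M_1 \xrightarrow{\sigma'}$ and hence $M \xrightarrow{\sigma}$. Enabledness of $t_1$ is the easy step: for every $p \in \pre{t_1}$ we have $\parikh{\sigma}(t_1) \ge 1$ since $t_1$ occurs in $\sigma$, so the hypothesis yields $M(p) \ge \weight \cdot \parikh{\sigma}(t_1) \ge \weight \ge \Pre[p, t_1]$, the last inequality holding because $\weight$ is the largest entry of $\Pre$.

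The heart of the argument is showing that firing $t_1$ preserves the hypothesis, namely that $M_1(p) \ge \weight \cdot \parikh{\sigma'}(t)$ for every $t \in \supp{\sigma'}$ and every $p \in \pre{t}$. Since $M_1(p) = M(p) + \Post[p, t_1] - \Pre[p, t_1]$, tokens can be lost only at places $p \in \pre{t_1}$, and there by at most $\weight$. I expect the main obstacle to be a place $p$ that is simultaneously an input of $t_1$ and an input of some other transition $t \ne t_1$ still occurring in $\sigma'$: firing $t_1$ removes up to $\weight$ tokens from $p$, while $\parikh{\sigma'}(t) = \parikh{\sigma}(t)$ is unchanged, so the bound attached to that $t$ is not automatically restored by the drop in the count of $t_1$.

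To make the induction close I would therefore carry the aggregated invariant $M(p) \ge \weight \cdot \sum_{t \,:\, p \in \pre{t}} \parikh{\sigma}(t)$ at every place $p$. This form is preserved by firing $t_1$: at each $p \in \pre{t_1}$ the right-hand side decreases by exactly $\weight$, since the count of $t_1$ in the remaining suffix drops by one, and this absorbs the at-most-$\weight$ token loss on the left; at places outside $\pre{t_1}$ the marking does not decrease. With this aggregated bound the case analysis is routine and the induction is indeed easy. Equivalently, the invariant shows directly that the total consumption at each place $p$ over the whole of $\sigma$ is at most $\weight \cdot \sum_{t \,:\, p \in \pre{t}} \parikh{\sigma}(t) \le M(p)$, so $p$ is never driven below the threshold $\Pre[p, t]$ required to fire the next transition that consumes from it.
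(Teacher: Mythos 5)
Your instinct that the naive induction does not close is right, and the obstacle you identified is more serious than you present it: it does not merely block this proof strategy, it refutes the proposition as literally stated. Take a net with a single place $p$ and two transitions $t_1,t_2$ with $\Pre[p,t_1]=\Pre[p,t_2]=1$ and $\Post[p,t_1]=\Post[p,t_2]=0$, so $\weight=1$, and let $\sigma=t_1t_2$ and $M(p)=1$. The stated hypothesis holds, since $M(p)\ge \weight\cdot\parikh{\sigma}(t_i)=1$ for both $i$ (the hypothesis only imposes, per place, the \emph{maximum} of the bounds over the transitions consuming from it), yet after firing $t_1$ the place is empty and $t_2$ is disabled. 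So a strengthening to an aggregated bound is not a convenience but a necessity, and the strengthening you chose works: your invariant $M(p)\ge\weight\cdot\sum_{t\,:\,p\in\pre{t}}\parikh{\sigma}(t)$ implies enabledness of the first transition, and it is preserved by firing it, because at each input place of the fired transition the right-hand side drops by exactly $\weight$ while the marking drops by at most $\weight$, and at all other places the marking cannot decrease while the right-hand side cannot increase. That induction is correct.

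Two remarks on how this sits relative to the paper. The paper's entire proof of Proposition~\ref{prop:easy} is ``very easy induction on $\sum_{t\in T}\parikh{\sigma}(t)$'', i.e., the same induction you run, tacitly carried out under a stronger reading of the hypothesis; moreover, in the places where the proposition is actually invoked (the proofs of Lemma~\ref{lem:fh} and Theorem~\ref{thm:charac-bounded-loss}), what is verified beforehand is $M(p)\ge\weight\cdot\norm{\parikh{\sigma}}$ for every input place of the support of the sequence to be fired, which implies your per-place aggregated bound. So your corrected version is exactly the statement the paper needs, and your write-up supplies the repair that the statement and its one-line proof gloss over. The one genuine flaw in your proposal is that you never acknowledge the repair: your closing sentence asserts $\weight\cdot\sum_{t\,:\,p\in\pre{t}}\parikh{\sigma}(t)\le M(p)$ as though it followed from the proposition's hypothesis, which it does not --- that is precisely the content of the counterexample above. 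You should say explicitly that you are proving the proposition under a strengthened, and in fact necessary, hypothesis rather than as stated.
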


\begin{proof}
	By induction on $\norm{\parikh{\sigma}}$.
\end{proof}

Using the above three auxiliary results, we are now ready to prove Lemma~\ref{lem:fh}.

\begin{proof}[Proof of Lemma~\ref{lem:fh}]

	Let $\beta:= (\weight+1)^m$ and  $\gamma := 4\weight \beta k \ell$.
	We split the proof into three parts, by providing the desired firing sequence
	as a concatenation of three firing sequences of the form
	\begin{equation}~\label{eq:three-parts}
	\left( 4k\beta \gamma \cdot M \right) \xrightarrow{*} 
	\gamma L_1 \xrightarrow{*} \gamma L_2 \xrightarrow{*}
	\left( 4k \beta \gamma \cdot M'\right)
	\end{equation}
	for some markings $L_1$ and $L_2$.
	
	\subsubsection*{First firing sequence: }
	Since $M \reachQ{\sigma} M'$, 
	by Lemma~\ref{lem:saturation} there exists a firing sequence $\tau_1$ and a marking $M_1$ such that $\norm{\parikh{\tau_1}} \le 2\beta$, 
	$\supp{\tau_1} = \supp{\sigma}$, $M_1(p) > 0$ for all $p \in \pre{\supp{\sigma}} \cup \post{\supp{\sigma}}$ and $(\beta \cdot M) \xrightarrow{\tau_1} M_1$. Let $L_1 = (4k-1)\beta M + M_1$. By the monotonicity property, we have that
	$4k\beta M \xrightarrow{\tau_1} (4k-1)\beta M + M_1 = L_1$. Once again by the monotonicity property, we have
	$4k\beta \gamma M \xrightarrow{\tau_1^{\gamma}} \gamma L_1$, thereby completing the first part of the run.
	We note that by construction of $L_1$, we have $(\gamma L_1)(p) \ge \gamma$ 
	for every $p \in \pre{\supp{\sigma}} \cup \post{\supp{\sigma}}$.

	\subsubsection*{Third firing sequence: }
	The third part is similar to the first part, except that we apply
	Lemma~\ref{lem:saturation} to the reverse net. 
	Notice that since $\reachQ{\sigma} M'$ in the Petri net $\net$,
	it follows that $M' \reachQ{\sigma^{-1}}$ in the \emph{reverse net} $\net^{-1}$ where $\net^{-1} := (P,T,\Post,\Pre)$
	and $\sigma^{-1}$ is the reverse of $\sigma$. 
	Using Lemma~\ref{lem:saturation}, we can get a firing sequence $\tau_2'$ and a marking $M_2$ such that $\norm{\parikh{\tau_2'}} \le 2\beta$, 
	$\supp{\tau_2'} = \supp{\sigma}$, $M_2(p) > 0$ for all $p \in \pre{\supp{\sigma}} \cup \post{\supp{\sigma}}$ and $(\beta \cdot M') \xrightarrow{\tau_2'} M_2$
	in the reverse net $\net^{-1}$. Let $L_2 = (4k-1)\beta M' + M_2$.
	By the same argument as the first part, we can conclude 
	that $4k\beta \gamma M' \xrightarrow{\tau_2'^{\gamma}} \gamma L_2$
	in the reverse net $\net^{-1}$.
	Letting $\tau_2 := \tau_2'^{-1}$ we get that
	$\gamma L_2 \xrightarrow{\tau_2^{\gamma}} 4k\beta \gamma M'$ in the net $\net$.
	We note that by construction of $L_2$, we have $(\gamma L_2)(p) \ge \gamma$ 
	for every $p \in \pre{\supp{\sigma}} \cup \post{\supp{\sigma}}$.
	
	\subsubsection*{Second firing sequence: }
	By construction of $\tau_1$ and $\tau_2$ we have that
	$\norm{\parikh{\tau_1}}, \norm{\parikh{\tau_2}} \le 2\beta$. It follows then that
	$\|\parikh{\tau_1}\| + \|\parikh{\tau_2}\| \le 4\beta$. Once again by construction
	of $\tau_1$ and $\tau_2$ we have that $\supp{\tau_1} = \supp{\tau_2} = \supp{\sigma}$. Further by construction of the number $k$,
	it follows that all the non-zero components of $4k\beta \parikh{\sigma}$
	are at least $4\beta$. Hence, if we define 
	\begin{equation}
	\label{eq:fh1.5}
	v := 4 \beta k \parikh{\sigma} - (\parikh{\tau_1} + \parikh{\tau_2}) 
	\end{equation}
	then $v$ is a non-negative integer vector.
	
	Let $\tau_3$ be any firing sequence such that $\parikh{\tau_3} = v$.
	We claim that $\gamma L_1 \xrightarrow{\tau_3},  \ \xrightarrow{\tau_3} \gamma L_2$
	and $\gamma L_2 = \gamma L_1 + \incidence \cdot (\gamma \cdot v)$. Notice that if this claim
	is true, then by Lemma~\ref{lem:iter} we have that
	$\gamma L_1 \xrightarrow{\tau_3^{\gamma}} \gamma L_2$ and then
	the second part of equation~\eqref{eq:three-parts} will also be done.

	All that is left to prove are the three claims. 
	As remarked at the end of the first part of the construction,
	notice that $(\gamma  L_1)(p) \ge \gamma$ for every $p \in \pre{\supp{\sigma}} \cup \post{\supp{\sigma}}$. Since $\parikh{\tau_3} = v$, 
	by equation~\eqref{eq:fh1.5}, notice
	that $\supp{\tau_3} \subseteq \supp{\sigma}$
	and $\weight \cdot \|\parikh{\tau_3}\| \le \weight \cdot 4\beta k \cdot \|\parikh{\sigma}\| = \weight \cdot 4\beta k \ell = \gamma$.
	Hence it follows that $(\gamma L_1)(p) \ge \weight \|\parikh{\tau_3}\|$ for every $p \in \pre{\supp{\tau_3}} \cup \post{\supp{\tau_3}}$.
	Applying Proposition~\ref{prop:easy}, we get that $\gamma L_1 \xrightarrow{\tau_1}$.
	By the same argument applied to the reverse net $\net^{-1}$,
	we get that $\gamma L_2 \xrightarrow{\tau_2^{-1}}$
	in the reverse net $\net^{-1}$, hence leading to 
	$\xrightarrow{\tau_2} \gamma L_2$. \medskip
	
	Finally, notice that 
	$$\begin{array}{lclrl}
	L_2 & = & 4 k \beta \cdot  M' - \incidence \cdot \parikh{\tau_2} & \ \qquad \ & \mbox{($L_2 \xrightarrow{\tau_2} 4k\beta \cdot M'$)}\\
	& = & 4 \beta k \cdot  M + \incidence \cdot 4 \beta k \cdot \parikh{\sigma}  - \incidence  \cdot \parikh{\tau_2} &  & \mbox{($M \reachQ{\sigma} M'$)}\\
	& = & L_1 - \incidence \cdot \parikh{\tau_1} + \incidence \cdot 4 \beta k \cdot \parikh{\sigma} - \incidence \cdot \parikh{\tau_2}  & & \mbox{($4k\beta \cdot M \xrightarrow{\tau_1} L_1$)}\\
	& = &  L_1 + \incidence \cdot (4 \beta k \cdot \parikh{\sigma} - (\parikh{\tau_1} + \parikh{\tau_2})) \\
	& = & L_1 + \incidence \cdot v & & \mbox{(Equation~\eqref{eq:fh1.5})}\\
	\end{array}$$
	and so $\gamma \cdot L_2 = \gamma \cdot L_1 + \incidence \cdot (\gamma \cdot v)$. Since all the three claims have been proven, it follows that equation~\eqref{eq:three-parts} is true.
	
	Now, let us analyse the support and the norm of the Parikh image of the final firing sequence that we obtain. Notice that the sequence that we construct is 
	$4k\beta\gamma M \reachN{\tau_1^{\gamma}} \gamma L_1 \reachN{\tau_3^{\gamma}} \gamma L_2 \reachN{\tau_2^{\gamma}} 4k\beta \gamma M'$. Let $\tau = \tau_1^{\gamma} \tau_3^{\gamma} \tau_2^{\gamma}$.
	By construction, we know that $\supp{\tau_1} = \supp{\tau_2} = \supp{\sigma}$ and $\supp{\tau_3} \subseteq \supp{\sigma}$. Hence, $\supp{\tau} = \supp{\sigma}$.
	Further, $\norm{\parikh{\tau}} = \gamma (\norm{\parikh{\tau_1}} + \norm{\parikh{\tau_2}} + \norm{\parikh{\tau_3}})$ and applying equation~\eqref{eq:fh1.5},
	we get $\norm{\parikh{\tau}} = 4k\beta \gamma \norm{\parikh{\sigma}} = 4k\beta \gamma \ell$. This then completes the proof of Lemma~\ref{lem:fh}.
\end{proof}

We can now prove our main result of this subsection, namely the Scaling Lemma.

\begin{lem}[\textbf{Scaling Lemma}]
\label{lem:fhshort}
Let $(\net,M,M')$ be a net system such that $M~\reachQ{\sigma}~M'$ for some $\sigma$. 
Then there exist $n$ and $m$ which can be described by a polynomial number of bits in $|\net|(
\log \norm{M} + \log \norm{M'})$
such that 
$nM \reachN{\sigma} nM'$ for some $\tau$ with $\supp{\tau} = \supp{\sigma}$ and $\norm{\parikh{\tau}} \le m$.
\end{lem}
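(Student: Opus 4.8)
The engine of the argument would be Lemma~\ref{lem:fh}, which already exhibits an explicit scaling factor: for a continuous firing sequence $\sigma$ with $m = |\supp{\sigma}|$, $\ell = \norm{\parikh{\sigma}}$, and $k$ the smallest natural number with $k\parikh{\sigma} \in \nn^T$, it produces a discrete $\tau$ with $\supp{\tau} = \supp{\sigma}$ witnessing $nM \reachN{\tau} nM'$ for $n = 16\weight(\weight+1)^{2m}k\ell$. The factor $(\weight+1)^{2m}$ is harmless: since $m \le |T| \le |\net|$, it has $O(m\log(\weight+1)) = |\net|^{O(1)}$ bits. The genuine difficulty is that $k$ and $\ell$ depend on the \emph{particular} witness $\sigma$, and for an arbitrary continuous sequence establishing $M \reachQ{} M'$ these may be astronomically large. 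The plan is therefore to first replace $\sigma$ by another continuous firing sequence $\sigma^*$ with the \emph{same support} but whose Parikh image has polynomially many bits, and only then apply Lemma~\ref{lem:fh} to $\sigma^*$.

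To build $\sigma^*$, let $S := \supp{\sigma}$ and consider the polyhedron $P := \{\bx \in \qnz^T : M' = M + \incidence\bx,\ \bx_s = 0 \text{ for all } s \notin S\}$, which is nonempty because $\parikh{\sigma} \in P$. Mimicking the proof of Lemma~\ref{lem:max-support}, for each $t \in S$ I would solve the linear program $M' = M + \incidence\bx$, $\bx \ge \mathbf{0}$, $\bx_s = 0\ (s \notin S)$, $\bx_t > 0$; each is feasible because $\parikh{\sigma}$ is a witness, and solving linear programs over the rationals returns a solution $\mathbf{u}_t$ of bit-size polynomial in $|\net| + \log\norm{M} + \log\norm{M'}$. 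Averaging, $\bx^* := \tfrac{1}{|S|}\sum_{t \in S}\mathbf{u}_t$ lies in $P$, has support \emph{exactly} $S$, and retains polynomial bit-size. Since $\bx^*$ solves the marking equation over $\qnz^T$ and $\supp{\bx^*} = S = \supp{\sigma}$, the support-based characterization of Theorem~\ref{thm:reachQ} applies in both directions: the auxiliary sequences $\tau, \tau'$ and markings $L, L'$ that witness $M \reachQ{\sigma} M'$ depend only on $\supp{\sigma} = S$, hence they witness reachability for $\bx^*$ as well. This yields a continuous firing sequence $\sigma^*$ with $\parikh{\sigma^*} = \bx^*$, $\supp{\sigma^*} = S$, and $M \reachQ{\sigma^*} M'$.

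Finally I would apply Lemma~\ref{lem:fh} to $\sigma^*$ and set $n := 16\weight(\weight+1)^{2m}k\ell$ with $m = |S|$, $\ell = \norm{\bx^*}$, and $k$ the least common multiple of the denominators of the entries of $\bx^*$. Because $\bx^*$ has polynomially many bits, $\log\ell$ is polynomial, and $k$—being the lcm of $|S|$ numbers each of polynomial bit-size—has polynomial bit-size as well; together with the $|\net|^{O(1)}$ bits from $(\weight+1)^{2m}$, the integer $n$ has a polynomial number of bits in $|\net|(\log\norm{M} + \log\norm{M'})$. Lemma~\ref{lem:fh} then delivers a $\tau$ with $nM \reachN{\tau} nM'$ and $\supp{\tau} = \supp{\sigma^*} = S = \supp{\sigma}$, as required. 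The one genuine obstacle is exactly the passage from the arbitrary witness $\sigma$ to the small-Parikh-image witness $\sigma^*$: once the support is pinned to $S$, this is supplied by linear-programming size bounds combined with Theorem~\ref{thm:reachQ}, after which Lemma~\ref{lem:fh} does the remaining work.
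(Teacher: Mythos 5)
Your proof is correct, and its overall shape (replace $\sigma$ by a continuous witness with small Parikh image, then invoke Lemma~\ref{lem:fh}) matches the paper's, but the key step --- producing the small witness --- is done by genuinely different means. The paper does not use linear programming here: it cites a result of Blondin et al.\ (\cite{BFHH17}, Proposition 3.2) exhibiting a formula $\phi(M,M',\vect)$ of the existential theory of linear rational arithmetic whose solutions are exactly the Parikh images of continuous firing sequences from $M$ to $M'$, and then applies Sontag's small-solution theorem (\cite{Sontag85}) to obtain a witness whose entries have polynomially many bits; Lemma~\ref{lem:fh} then finishes exactly as in your argument. Your route --- linear programs with the support pinned to $S$, averaging as in Lemma~\ref{lem:max-support}, and Theorem~\ref{thm:reachQ} to convert the small marking-equation solution $\bx^*$ back into a continuous firing sequence --- is more elementary (no $\exists$-theory machinery), and it buys something the paper's write-up glosses over: support preservation. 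A Sontag solution of $\phi$ need not have support equal to $\supp{\sigma}$, and indeed the paper's proof ends with the weaker conclusion $nM \reachN{*} nM'$, even though the lemma as stated (and its later use in Theorem~\ref{th:main-general}, where the Insertion Lemma needs $\supp{\boldy} \subseteq \supp{\tau}$) requires $\supp{\tau} = \supp{\sigma}$; repairing the paper's version requires conjoining to $\phi$ the constraints $\vect_t > 0$ for $t \in \supp{\sigma}$ and $\vect_t = 0$ otherwise, whereas your construction enforces this from the start. One small caveat on wording: Theorem~\ref{thm:reachQ} is phrased for a given sequence $\sigma$, so to extract $\sigma^*$ from the vector $\bx^*$ you should say explicitly that you take some sequence with Parikh image $\bx^*$ and verify the right-hand side of the characterization for it (equivalently, cite Fraca--Haddad's formulation, which is in terms of Parikh images); this is the same reading of the theorem the paper itself relies on, so it is a presentational point, not a gap.
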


\begin{proof}
	Suppose $M \xrightarrow[\qn]{\sigma} M'$. Let $U$ be the support of $\sigma$.
	By~\cite[Proposition 3.2]{BFHH17}, there is a formula $\phi(M,M',\vect)$ in the existential
	theory of linear rational arithmetic Th$(\qn,+,<)$, whose size is linear in the size of the net $\net$ such that $\phi(M,M',\bx)$
	is true if and only if there exists $\sigma'$ such that $\parikh{\sigma'} = \bx$
	and $M \xrightarrow[\qn]{\sigma'} M'$. To this formula, let us add the constraints $\vect_t > 0 \iff t \in U$ and 
	let the resulting formula be $\xi$. Note that $\xi(M,M',\bx)$ is true if and only if there exists $\sigma'$
	such that $\parikh{\sigma'} = \bx, \supp{\sigma'} = U$ and $M \xrightarrow[\qn]{\sigma'} M'$.
	
	By~\cite[Lemma 3.2]{Sontag85} if a formula in the existential theory of
	linear rational arithmetic is satisfiable, then it 
	is satisfiable by a solution which can be described using 
	a polynomial number of bits in the size of the formula.
	Hence, applying this result to the formula $\xi(M,M',\vect)$, we get that there exists $\tau'$ such that $M \reachQ{\tau'} M'$,
	$\supp{\tau'} = \supp{\sigma}$ and the numerator and denominator of every entry of $\parikh{\tau'}$ can be described using a polynomial number of bits in $|\net|(\log \norm{M} + \log \norm{M'})$. Notice that the smallest natural number $k$ such that
	$k \cdot \parikh{\tau'} \in \nn^T$ is at most the
	least common multiple of the denominators of all
	the numbers in the vector $\parikh{\tau'}$.
	Since the size of $\supp{\tau'}$ is at most the number of transitions
	of $\net$, if we let $\weight$ be the weight of the net $\net$, then it 
	is easy to verify that the quantities $(\weight+1)^{\supp{\tau'}}$, $\norm{\parikh{\tau'}}$ and $k$
	can all be described using a polynomial number of bits in $|\net|(\log \norm{M} + \log \norm{M'})$.
	Applying Lemma~\ref{lem:fh} now finishes the proof.
\end{proof}

\subsection{The Insertion Lemma}

In the acyclic case, the existence of a cut-off is roughly characterized by the existence of solutions to the marking equation
over $\qnz^T$ and $\Z^T$. Intuitively, in the general case we replace the existence of solutions over $\qnz^T$ by the conditions of the Scaling Lemma, and the existence of solutions over $\Z^T$ by the Insertion Lemma:

\begin{lem}[\textbf{Insertion Lemma}]~\label{lem:cut-paste}
	Let $k \in \nn$ and let $M, M',L,L'$ be markings of $\net$ satisfying
	$M \xrightarrow{\sigma} M'$ for some $\sigma$ and $L' = L + \incidence \boldy$  for some  $\boldy \in \zn^T$ such that $\supp{\boldy} \subseteq \supp{\sigma}$. Then $\mu M + k L \xrightarrow{*} \mu M' + k L'$ for 
	$\mu = \alpha k + \alpha \norm{\parikh{\sigma}} n \weight  + \norm{\boldy} n \weight$, 
	where $\alpha \in \nn$ is the smallest number such that $\alpha \mathbf{1} + \boldy \ge \mathbf{0}$, 
	$\weight$ is the weight of $\net$ and $n$ is the number of places in $\pre{\supp{\sigma}}$.
\end{lem}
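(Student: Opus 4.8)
The plan is to reduce the statement to a pure \emph{realizability} question about Parikh images, and then to realize the required image by running $\sigma$ as a scaffold that is locally edited according to $\boldy$.

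First I would observe that the target marking is fixed by the marking equation: since $M' = M + \incidence\parikh{\sigma}$ and $L' = L + \incidence\boldy$, we have $\mu M' + L' = (\mu M + L) + \incidence\, w$ where $w := \mu\,\parikh{\sigma} + \boldy$. Hence it suffices to exhibit \emph{some} firing sequence $\rho$ enabled at $\mu M + L$ whose Parikh image is exactly $w$: the reached marking is then forced to be $\mu M + L + \incidence\,w = \mu M' + L'$. Before doing so I would check that $w$ is a legitimate target, i.e. $w \in \nn^T$ and $\supp{w}\subseteq\supp{\sigma}$. For $t\notin\supp{\sigma}$ both $\parikh{\sigma}(t)=0$ and, by $\supp{\boldy}\subseteq\supp{\sigma}$, $\boldy(t)=0$, so $w(t)=0$; for $t\in\supp{\sigma}$ we have $\parikh{\sigma}(t)\ge 1$ and $\boldy(t)\ge-\norm{\boldy}$, so $w(t)\ge\mu-\norm{\boldy}\ge 0$ because $\mu\ge\norm{\boldy}$. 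This is the only place the additive $+1$ in the factor of $\mu$ is needed; the rest of the size of $\mu$ is spent on enabledness.

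The construction of $\rho$ is the heart of the proof. Writing $\boldy=\boldy^+-\boldy^-$ with $\boldy^+,\boldy^-\in\nn^T$, the image $w$ is exactly the image of the scaffold $\sigma^{\mu}$ with $\boldy^+(t)$ extra occurrences of each $t$ \emph{inserted} and $\boldy^-(t)$ occurrences \emph{deleted}. The danger is that deleting an occurrence of $t$ starves the downstream transitions of the same copy of $\sigma$ (they relied on the tokens $t$ would have produced), and inserting an occurrence needs input tokens that the bare scaffold may lack. I would neutralise both dangers with a \emph{standing reserve}: a prefix leaving at least $\weight\norm{\boldy}$ tokens on every place of $\pre{\supp{\sigma}}$, kept untouched while the edited scaffold runs. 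Each of the at most $\norm{\boldy}$ edits perturbs any single place by at most $\weight$ tokens, so a reserve of $\weight\norm{\boldy}$ per place absorbs every deficit; enabledness of the inserted transitions and of the downstream transitions of a deleted occurrence then follows from Proposition~\ref{prop:easy}. The key point I would stress is that we do \emph{not} try to fire all of $w$ via Proposition~\ref{prop:easy} — that would demand a reserve of size $\weight\norm{w}\approx\weight\mu\norm{\parikh{\sigma}}$, far more than the $\mu$ available copies can supply. Instead the $\sigma$-scaffold regenerates its own token flow, and the reserve only has to cover the $O(\weight\norm{\boldy})$ worth of local edits.

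It remains to build the reserve and to account for its cost, which is where the precise value of $\mu$ comes from. Using that $M\xrightarrow{\sigma}$ together with Lemma~\ref{lem:saturation}, firing $\sigma$ on the available copies of $M$ marks every place of $\pre{\supp{\sigma}}$; since we hold $\mu$ copies of $M$, this is amplified so that all $n=|\pre{\supp{\sigma}}|$ of these places simultaneously carry at least $\weight\norm{\boldy}$ tokens, after which the edited scaffold is run as above. Tallying the copies of $M$ consumed — a factor $\norm{\boldy}$ for the number of edits to cover, $n\weight$ for feeding the $n$ input places at cost $\weight$ each, $\norm{\parikh{\sigma}}$ for the length of one scaffold pass, plus the additive $n\weight$ and the $+1$ needed for nonnegativity of $w$ — matches the stated $\mu=\norm{\boldy}(\norm{\parikh{\sigma}}\,n\,\weight + n\,\weight + 1)$. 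The main obstacle, and the step I would spend the most care on, is the negative part $\boldy^-$: since one cannot ``un-fire'' a transition, the deletions must be realised forward inside the scaffold, and the whole difficulty is proving that the standing reserve keeps the edited scaffold enabled throughout while its Parikh image remains exactly $w$.
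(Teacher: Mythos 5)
Your overall architecture is in fact the paper's own: build a token reserve on the places of $\pre{\supp{\sigma}}$ by advancing copies of $M$ part-way through $\sigma$, fire a central sequence realizing the ``edited'' Parikh vector, and finish off the advanced copies. Your central refinement --- that the $\sigma$-scaffold is self-sustaining, so the reserve need only absorb the $\leq \weight$ per-place perturbation caused by each of the at most $\norm{\boldy}$ edits --- is a correct and genuinely finer idea than what the paper does: the paper instead routes only $\norm{\boldy}$ copies of $M$ through the edited sequence (Parikh image $\norm{\boldy}\parikh{\sigma}+\boldy$) and inflates the reserve to $\weight\norm{\boldy}(\norm{\parikh{\sigma}}+1)$ tokens per place precisely so that the crude Proposition~\ref{prop:easy} suffices for enabledness; that is where the factor $\norm{\parikh{\sigma}}$ in $\mu$ really comes from, not from ``the length of one scaffold pass'' in your tally. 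Carried out properly, your argument would even prove the lemma for a smaller $\mu$, with the surplus copies simply running $\sigma$ unedited. However, the two places where you discharge proof obligations to existing lemmas are both wrong, and one of them is the step you yourself identify as the whole difficulty.

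Concretely: (i) Lemma~\ref{lem:saturation} cannot be used to build the reserve. It starts from \emph{continuous} reachability and pays a multiplier $(\weight+1)^m$ with $m=|\supp{\sigma}|$, which is exponential and far exceeds the budget $\mu$. What you actually need (and what the paper uses) is the elementary fact that, since $M\xrightarrow{\sigma}M'$ is a \emph{discrete} run, for every $p\in\pre{\supp{\sigma}}$ some intermediate marking $M_{i_p}$ of this run satisfies $M_{i_p}(p)>0$, so each reserve token costs exactly one copy of $M$ advanced along a prefix of $\sigma$. (ii) Proposition~\ref{prop:easy} cannot justify keeping the edited scaffold enabled: it requires the starting marking to cover the \emph{total} consumption of the sequence to be fired, with no credit for tokens produced en route, and --- by exactly the objection you raise against firing all of $w$ this way --- the edited scaffold consumes roughly $\weight\,\mu\,\norm{\parikh{\sigma}}$ tokens per place, vastly more than the reserve of $\weight\norm{\boldy}$. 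The statement you need is a two-run comparison, proved by induction along the sequence: at every prefix, the marking of the edited run differs from the marking of the unedited run (which is enabled by monotonicity) by at most $\weight$ times the number of edits performed so far on each place, and only places of $\pre{\supp{\sigma}}$ matter since every fired transition lies in $\supp{\sigma}$. This induction, analogous in spirit to Lemma~\ref{lem:iter}, is the crux of your proof and is missing; asserting its conclusion and citing an inapplicable proposition does not close it. (iii) Your construction never fires the suffixes of $\sigma$ on the copies that were paused to form the reserve (the paper's third stage); without that, the Parikh image of what you fire is not $w$ and the final marking is not $\mu M'+L'$ but leaves the reserve copies stranded at intermediate markings.
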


\begin{proof}
	We first provide an intuition behind the proof. In a first stage, 
	we asynchronously execute multiple ``copies'' of the firing sequence $\sigma$ from
	multiple ``copies'' of the marking $M$, until we reach a marking in which
	all places of $\pre{\supp{\sigma}}$ contain a sufficiently large
	number of tokens. At this point we temporarily interrupt the executions of the copies of $\sigma$ 
	to \emph{insert} $k$ firing sequences each with Parikh mapping $\alpha \parikh{\sigma}+ \boldy$. The net effect of this sequence is to transfer $\alpha k$ copies of $M$ to $M'$, leaving the other copies untouched, and exactly $k$ copies of 
	$L$ to $L'$. In the third stage, we resume the interrupted executions of the copies of $\sigma$,
	which completes the transfer of the remaining copies of $M$ to $M'$. We now proceed to the formal proof.
	\medskip

	Let $\bx$ be the
	Parikh image of $\sigma$, i.e.,
	$\bx = \parikh{\sigma}$. Since $M \xrightarrow{\sigma} M'$,
	by the marking
	equation we have $M' = M + \incidence \bx$.
	
	\paragraph*{\textbf{First stage: } } Let $\lambda_x = \norm{x}$, $\lambda_y = \norm{y}$ 
	and 
	$\mu = \alpha k + \alpha \lambda_x n\weight + \lambda_y n\weight$.
	Let $\sigma := r_1, r_2, \dots, r_j$ 
	and let  $M =: M_0  \xrightarrow{r_1} M_1 \xrightarrow{r_2} M_2 \dots
	M_{j-1} \xrightarrow{r_j} M_j := M'$.
	Notice that 
	for each place $p \in \pre{\supp{\sigma}}$, there exists a marking
	$M_{i_p} \in \{M_0,\dots,M_{j-1}\}$ such that $M_{i_p}(p) > 0$.
	
	Since each of the markings in $\{M_{i_p}\}_{p \in \pre{\supp{\sigma}}}$
	can be obtained from $M$ by firing a (suitable) prefix of $\sigma$,
	by the monotonicity property, it follows that starting from the marking
	$\mu M +~kL~=~\alpha kM + kL + (\alpha \lambda_x n \weight + \lambda_y n \weight) M$,
	we can reach the marking $L_0$ given by $L_0$~$:=\alpha kM + kL + \sum_{p \in \pre{\supp{\sigma}}} \ (\alpha \lambda_x \weight + \lambda_y \weight) M_{i_p}$.
	This completes our first stage.
	
	\paragraph*{\textbf{Second stage - Insert: } } Since $\supp{\boldy} \subseteq \supp{\sigma}$,
	if $\boldy(t) \neq 0$ then $\bx(t) \neq 0$. 
	Since $\bx(t) \ge 0$ for every transition, by the definition of $\alpha$,
	it now follows that $(\alpha \bx + \boldy)(t) \ge 0$ for every transition
	$t$ and $(\alpha \bx + \boldy)(t) > 0$ precisely for those
	transitions in $\supp{\sigma}$.
	
	Let $\xi$ be any firing sequence such that $\parikh{\xi} = \alpha \bx + \boldy$ and let $\mathtt{Inter}$ $:= \sum_{p \in \pre{\supp{\sigma}}} \ (\alpha \lambda_x \weight + \lambda_y \weight) M_{i_p}$.
	Notice that for each place $p \in \pre{\supp{\sigma}}$,
	$\mathtt{Inter}(p) \ge (\alpha \lambda_x + \lambda_y) \weight \ge \norm{(\alpha \bx + \boldy)} \cdot w$.
	For each $1 \le i \le k$, set $L_i$ to be the marking $\alpha (k-i) M + \alpha i M' + (k-i) L + i L' + \mathtt{Inter}$.
	By Proposition~\ref{prop:easy} and the marking equation, we have that for each $i \ge 0$, $L_i \act{\xi} L_{i+1}$.
	Hence, starting from $L_0$, we reach the marking $L_k = \alpha k M' + k L' +  \sum_{p \in \pre{\supp{\sigma}}} \ (\alpha \lambda_x \weight + \lambda_y \weight) M_{i_p}$.
	This completes our second stage.
	
	\paragraph*{\textbf{Third stage: } }Notice that
	for each place $p \in \pre{\supp{\sigma}}$, by construction of $M_{i_p}$, there is a firing sequence
	which takes the marking $M_{i_p}$ to the marking $M'$.
	By the monotonicity property, it then follows that there is a firing sequence
	which takes the marking $L_k$ to the marking
	$\alpha k M' + k L' + \sum_{p \in \pre{\supp{\sigma}}} (\alpha \lambda_x \weight + \lambda_y \weight) M' = 
	\mu M' + k L'$. This completes our third stage and also 
	completes the desired firing sequence from $\mu M + k L$ to $\mu M' + k L'$.
\end{proof}

\section{The cut-off problem for general Petri nets}\label{sec:general}
Let $(\net,M,M')$ be a net system with $\net = (P,T,Pre,Post)$, such that
$\incidence$ is its incidence matrix. As in Section \ref{sec:acyclic}, we first characterize the
Petri net systems that admit a cut-off, and then provide a polynomial time algorithm to decide the existence of a cut-off for Petri nets (and hence also for rendez-vous protocols).

\subsection{Characterizing systems with cut-offs}

We generalize the  characterization of Theorem~\ref{th:main-acyclic} for acyclic Petri net systems to general
ones. \medskip

\begin{thm}~\label{th:main-general}
	A Petri net system $(\net,M,M')$ admits a cut-off if and only if there is a continuous firing sequence $\sigma$ such that $M \reachQ{\sigma} M'$ 
	and the marking equation has a solution $\boldy \in \zn^T$
	such that $\supp{\boldy} \subseteq \supp{\sigma}$.
\end{thm}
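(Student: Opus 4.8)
The plan is to establish both implications through Lemma~\ref{lem:mcnugget}, which reduces the existence of a cut-off to finding a single $n$ with $n M \reachN{*} n M'$ and $(n+1) M \reachN{*} (n+1) M'$. The Scaling and Insertion Lemmas are precisely the tools that manufacture, respectively, the scaling factor and the $+1$ offset in the general (non-acyclic) setting, playing the roles that Lemma~\ref{lem:acyclic} played in the proof of Theorem~\ref{th:main-acyclic}. Throughout I will use the additivity of firing: if $k M \reachN{\alpha} k M'$ and $l M \reachN{\beta} l M'$, then $(k+l) M \reachN{\alpha\beta} (k+l) M'$ with $\supp{\alpha\beta} = \supp{\alpha} \cup \supp{\beta}$, since one may fire $\alpha$ on $k$ copies and then $\beta$ on the remaining $l$ copies.

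For the ($\Leftarrow$) direction, assume $M \reachQ{\sigma} M'$ and that $\boldy \in \zn^T$ solves the marking equation with $\supp{\boldy} \subseteq \supp{\sigma}$. First I apply the Scaling Lemma (Lemma~\ref{lem:fhshort}) to obtain a number $N$ and a discrete firing sequence $\tau$ with $N M \reachN{\tau} N M'$ and $\supp{\tau} = \supp{\sigma}$. Then I invoke the Insertion Lemma (Lemma~\ref{lem:cut-paste}), instantiated with the discrete run $N M \reachN{\tau} N M'$ in the role of ``$M \reachN{\sigma} M'$'' and with $L := M$, $L' := M'$: its hypotheses hold because $M' = M + \incidence \boldy$ and $\supp{\boldy} \subseteq \supp{\sigma} = \supp{\tau}$. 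This yields $\mu (N M) + M \reachN{*} \mu (N M') + M'$, that is, $(\mu N + 1) M \reachN{*} (\mu N + 1) M'$ for the $\mu$ supplied by the lemma. Setting $n := \mu N$, additivity (firing $\tau$ on each of the $\mu$ disjoint copies) gives $n M \reachN{*} n M'$ as well, and Lemma~\ref{lem:mcnugget} then produces a cut-off.

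For the ($\Rightarrow$) direction, assume a cut-off exists. By Lemma~\ref{lem:mcnugget} there is an $n$ with $n M \reachN{\sigma^*} n M'$ and $(n+1) M \reachN{\sigma'^*} (n+1) M'$. The delicate point is to produce an integer solution $\boldy$ whose support lies inside the support of a single continuous run; a raw difference of the Parikh images of $\sigma^*$ and $\sigma'^*$ need not have this property. I get around this by symmetrizing: put $\hat n := 2n+1$ and use additivity to build $\hat n M \reachN{\sigma_1} \hat n M'$ from one copy of $\sigma^*$ and one of $\sigma'^*$ (so $\supp{\sigma_1} = \supp{\sigma^*} \cup \supp{\sigma'^*}$), and $(\hat n + 1) M \reachN{\sigma_2} (\hat n + 1) M'$ from two copies of $\sigma'^*$ (so $\supp{\sigma_2} = \supp{\sigma'^*}$). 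By construction $\supp{\sigma_2} \subseteq \supp{\sigma_1}$. Scaling the discrete run $\sigma_1$ down by the factor $\hat n$ yields a continuous run $\sigma$ with $M \reachQ{\sigma} M'$ and $\supp{\sigma} = \supp{\sigma_1}$. Finally, subtracting the two marking equations $\hat n M' = \hat n M + \incidence \parikh{\sigma_1}$ and $(\hat n + 1) M' = (\hat n + 1) M + \incidence \parikh{\sigma_2}$ gives $M' = M + \incidence \boldy$ for $\boldy := \parikh{\sigma_2} - \parikh{\sigma_1} \in \zn^T$, with $\supp{\boldy} \subseteq \supp{\sigma_1} = \supp{\sigma}$, as required.

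The main obstacle is the support condition $\supp{\boldy} \subseteq \supp{\sigma}$ in the forward direction: it is what forces the symmetrization step above, and it is also exactly the condition that makes the Insertion Lemma applicable in the backward direction. I expect the remaining steps---scaling a discrete firing sequence down to a continuous one with identical support, and the bookkeeping with marking equations and additivity---to be routine given the lemmas already in hand.
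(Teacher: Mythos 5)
Your proposal is correct and follows essentially the same route as the paper's proof: the ($\Leftarrow$) direction is the identical combination of the Scaling Lemma, the Insertion Lemma instantiated with $L:=M$, $L':=M'$, and Lemma~\ref{lem:mcnugget}, while your ($\Rightarrow$) direction makes explicit (via the $\hat n = 2n+1$ symmetrization) the same additivity trick the paper uses implicitly to obtain runs $\tau,\tau'$ for consecutive sizes with $\supp{\tau'}\subseteq\supp{\tau}$, before scaling down to a continuous run and subtracting marking equations. No gaps; the only difference is that you spell out the support-containment construction that the paper leaves as a one-line remark.
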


\begin{proof}
	($\Rightarrow$): Assume $(\net,M,M')$ admits a cut-off. By Lemma~\ref{lem:mcnugget},
	there exist $n \in \nn$ and firing sequences $\tau, \tau'$ such that 
	$nM \xrightarrow{\tau} nM'$, $(n+1)M \xrightarrow{\tau'} (n+1)M'$
	and $\supp{\tau'} \subseteq \supp{\tau}$.
	
	Let $\tau = t_1 t_2 \cdots t_k$ and let $M_0 := nM \reachN{t_1} M_1 \reachN{t_2} M_2 \dots \reachN{t_k} M_k := nM'$.
	It is then easy to verify that that $M_0/n \reachN{t_1/n} M_1/n \reachN{t_2/n} M_2/n \dots \reachN{t_k/n} M_k/n$.
	This means that if we set $\sigma := t_1/n, \ t_2/n, \
	\ldots\ , \ t_k/n$ then $M \reachQ{\sigma} M'$.
	Further, by the marking equation we have $nM' = nM + \incidence \parikh{\tau}$ and $(n+1)M' = (n+1)M + \incidence \parikh{\tau'}$. Let $\boldy = \parikh{\tau'} - \parikh{\tau}$. Then $\boldy \in \zn^T$ and
	$M' = M + \incidence \boldy$. Since $\supp{\tau'} \subseteq \supp{\tau} = \supp{\sigma}$, we have $\supp{\boldy} \subseteq \supp{\sigma}$.

	\medskip\noindent ($\Leftarrow$): Assume that there exists a continuous firing sequence $\sigma$ and 
	a vector $\boldy' \in \zn^T$ such that $\supp{\boldy'} \subseteq \supp{\sigma}$,
	$M \reachQ{\sigma} M'$ and $M' = M + \incidence \boldy'$. Let $s = |\net| (\log \norm{M} + \log \norm{M'})$.
	We first claim that we can find a vector $\boldy$ such that $\supp{\boldy} \subseteq \supp{\sigma}$, $M' = M + \incidence \boldy$ and 
	$\boldy$ can be described using a polynomial number of bits in $s$. 
	Indeed, let $t_1,\dots,t_k$ be the set of transitions not in $\supp{\sigma}$ and 
	consider the system of equations given by $M' = M + \incidence \vect, \ \vect_{t_1} = 0, \ \vect_{t_2} = 0, \ \dots, \ \vect_{t_k} = 0$.
	We know that there is at least one integer solution to this system, namely $\boldy'$.
	It is well known that if a system of linear equations 
	over the integers is feasible, then there is a solution
	which can be described using a number of bits which
	is polynomial in the size of the input (see e.g.\ \cite{IntProg2}). 
	Applying this result to our system of equations proves our claim.

	Now, since $M \reachQ{\sigma} M'$, by Lemma~\ref{lem:fhshort} 
	there exists $n,m$ (both of which can be described using a polynomial number of bits in $s$) and a firing sequence $\tau$ such that
	$\supp{\tau} = \supp{\sigma}, \norm{\parikh{\tau}} \le m$ and $nM \xrightarrow{\tau} nM'$.
	Since $\boldy$ can be described by a polynomial number of bits in $s$, by Lemma~\ref{lem:cut-paste}, 
	there exists $\mu$ (which can once again be described using a polynomial number of bits in $s$) such
	that $\mu n M + M \xrightarrow{*} \mu n M' + M'$. 
	By Lemma~\ref{lem:mcnugget} the system $(\net,M,M')$ admits a cut-off which can be described by a polynomial number of bits in $s$.
\end{proof}

Notice that we have actually proved that if a net system 
admits a cut-off then it admits a cut-off which is expressible by a polynomial number of
bits in its size.
Since the cut-off problem for a rendez-vous protocol $\prot$
can be reduced to the cut-off problem for the Petri net system
$(\net_{\prot},\multiset{\init},\multiset{\fin})$, 
it follows that:
\begin{cor}
	If the system $(\net,M,M')$ admits a cut-off then
	it admits a cut-off which is expressible by a polynomial number of bits
	in $|\net|(\log \norm{M} + \log \norm{M'})$.
	Hence, if a rendez-vous protocol $\prot$ admits a cut-off
	then it admits a cut-off which is at most $2^{|\prot|^{O(1)}}$.
\end{cor}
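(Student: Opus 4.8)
The plan is to observe that the corollary is essentially a by-product of the constructive $(\Leftarrow)$ direction of Theorem~\ref{th:main-general}: the heavy lifting is done by the Scaling and Insertion Lemmas, so all that remains is to track the bit-sizes of the objects produced there and then specialise to the protocol setting. Concretely, if $(\net,M,M')$ admits a cut-off then by Theorem~\ref{th:main-general} there is a rational firing sequence $\sigma$ with $M \reachQ{\sigma} M'$ and an integer solution $\boldy \in \zn^T$ of the marking equation with $\supp{\boldy} \subseteq \supp{\sigma}$. Writing $s := |\net|(\log\norm{M} + \log\norm{M'})$ for the instance size, I would first replace $\boldy$ by a small witness: since $\boldy$ solves a system of linear diophantine equations whose coefficients are the entries of $\incidence$ and of $M'-M$, the standard bound on integer solutions (\cite{IntProg2}) lets me assume $\norm{\boldy}$ has $s^{O(1)}$ bits.

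Next I would feed $\sigma$ into the Scaling Lemma (Lemma~\ref{lem:fhshort}) to obtain a number $n$ of $s^{O(1)}$ bits and a genuine firing sequence $\tau$ with $\supp{\tau} = \supp{\sigma}$ and $nM \reachN{\tau} nM'$; iterating $\tau$ on disjoint copies then gives $k n M \reachN{*} k n M'$ for every $k$. I then apply the Insertion Lemma (Lemma~\ref{lem:cut-paste}) to the \emph{discrete} sequence $\tau$, taking $L := M$ and $L' := M'$, which is legitimate since $M' = M + \incidence\boldy$ and $\supp{\boldy} \subseteq \supp{\sigma} = \supp{\tau}$. This produces a number $\mu$, again of $s^{O(1)}$ bits (its value being polynomial in $\norm{\boldy}$, $\norm{\parikh{\tau}}$, $\weight$ and $|\net|$), such that $\mu n M + M \reachN{*} \mu n M' + M'$, that is $(\mu n + 1)M \reachN{*}(\mu n + 1)M'$. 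Combining this with $\mu n M \reachN{*} \mu n M'$ and Lemma~\ref{lem:mcnugget}, the number $N := \mu n$ witnesses a cut-off and $N^2 = (\mu n)^2$ is an explicit one. Since bit-sizes add under the products involved and under the single squaring, $(\mu n)^2$ still has $s^{O(1)}$ bits, proving the first statement.

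For the second statement I would instantiate the first at the net system $(\net_\prot, \multiset{\init}, \multiset{\fin})$ arising from the reduction behind Proposition~\ref{prop:equiv}. Here $|\net_\prot|$ is polynomial in $n := |\prot|$ and the two markings are single tokens, so the relevant instance size is $n^{O(1)}$; hence the cut-off constructed above has $n^{O(1)}$ bits, i.e. it is at most $2^{n^{O(1)}}$. The only delicate point is the size bookkeeping — that each of $\boldy$, $n$, $\mu$ stays polynomial in $s$ and that their combination $(\mu n)^2$ does too — together with reading $s$, when $\norm{M}=\norm{M'}=1$, as the total encoding length of the instance (which for $(\net_\prot,\multiset{\init},\multiset{\fin})$ is still $n^{O(1)}$). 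Beyond this accounting there is no real obstacle, as all the mathematical content is absorbed into the Scaling and Insertion Lemmas and into Lemma~\ref{lem:mcnugget}.
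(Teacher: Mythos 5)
Your proposal is correct and takes essentially the same approach as the paper: the corollary there is obtained by observing that the $(\Leftarrow)$ direction of Theorem~\ref{th:main-general} already tracks all bit-sizes (bounding $\norm{\boldy}$ via the standard integer-programming bound, then invoking the Scaling Lemma, the Insertion Lemma with $L:=M$, $L':=M'$, and Lemma~\ref{lem:mcnugget}), and then specializing to $(\net_\prot,\multiset{\init},\multiset{\fin})$ exactly as you do. The only difference is that you spell out the bookkeeping that the paper leaves implicit in the phrase ``we have actually proved.''
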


It is already known from~\cite[Section 6.3]{HornS20} that there are protocols whose smallest cut-off is at least exponential in the size of the protocol.
Hence, the above corollary gives almost tight bounds on the smallest cut-off of a protocol.

\begin{exa}
	Let us consider the Petri net $\net_{\prot}$ from Figure~\ref{fig:net-example-prelims} given in Example~\ref{example:four-net}. We have seen that 4 is the smallest cut-off for 
	the system $(\net_{\prot},\init,\fin)$. We now show that the conditions of Theorem~\ref{th:main-general} are satisfied for this system. Note that 
	if we set $\sigma = t_1/8, \ t_2/4, \ t_3/2$ then
	$\multiset{\init} \reachQ{\sigma} \multiset{\fin}$ is a valid run of this net. Further, the marking equation for this system is:
	
	$$\begin{array}{rclcl}
		0 &= & 1 - 2\vect_{t_1} -\vect_{t_2} -\vect_{t_3} & & (\text{Equation for the place $\init$})\\[0.2cm]
		0 &=  &0 + 2\vect_{t_1} -\vect_{t_2} & & (\text{Equation for the place $q_1$})\\ [0.2cm]
		1 &= &0 +2\vect_{t_2} + \vect_{t_3} & & (\text{Equation for the place $f$})\\[0.2cm]
	\end{array}
	$$
	Notice that $\boldy = (1,2,-3)$ is a solution to the marking equation such that $\supp{\boldy} \subseteq \supp{\sigma}$. Hence, the conditions of Theorem~\ref{th:main-general}
	are satisfied.	
\end{exa}

\subsection{Polynomial time algorithm}

We use the characterization given in the
previous section to provide a polynomial time algorithm for the cut-off
problem.  The following lemma, which is very similar to Lemma~\ref{lem:max-support}, was proved in~\cite{FracaH15} and enables us to find 
a firing sequence between two markings with maximum support.

\begin{lemC}[{\cite[Lemma 12 and Proposition 26]{FracaH15}}]~\label{lem:max-support-general}
	Let $FS$ be the set of all continuous firing sequences $\tau$ such that $M \reachQ{\tau} M'$.
	Then, there exists a sequence $\sigma \in FS$ such that $\supp{\tau} \subseteq \supp{\sigma}$ for every $\tau \in FS$.
	Moreover, the support of such a sequence $\sigma$ can be computed in polynomial time.
\end{lemC}

We now have all the ingredients to prove the existence of a polynomial time
algorithm. \medskip

\begin{thm}~\label{thm:main-poly-general}
	The cut-off problem for Petri nets can be solved in polynomial time.
\end{thm}

\begin{proof}
	The proof is exactly the same as the proof of Theorem~\ref{thm:main-poly-acyclic}, except that instead of checking if the marking equation over $\qn_{\ge 0}$ 
	is feasible, we first check if $M$ can reach $M'$ over the continuous semantics and if so, obtain the maximum support of all
	the firing sequences $\tau$ such that $M \reachQ{\tau} M'$ and then use this maximum support to construct additional constraints for the marking equation over the integers. 
\end{proof}

\subsection{\texorpdfstring{\P}{P}-hardness}
We now have the following lemma, which enables us to derive a \P-completeness result for the cut-off problem.

\begin{lem}\label{lem:hardness-cutoff}
	The cut-off problem for rendez-vous protocols is \P-hard.
\end{lem}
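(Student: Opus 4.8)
The plan is to reduce from the (monotone) Circuit Value Problem, which is \P-complete: given a Boolean circuit with AND/OR gates of fan-in two, fixed input values, and output gate $g_m$, we must decide whether $g_m$ evaluates to \emph{true}. In logarithmic space I would build a rendez-vous protocol $\prot$ that admits a cut-off if and only if $g_m$ is true. The state set consists of $\init$, $\fin$, one \emph{truth-token} state $t_g$ for every gate $g$ (intended meaning: ``this agent certifies that gate $g$ is true''), and one auxiliary state $h_g$ for each AND gate. All rules are rendez-vous rules; the circuit structure is processed gate by gate, so the encoding is clearly logspace.

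The heart of the construction is a set of \emph{catalytic} (non-consuming) gadgets whose correctness I would phrase as the invariant: a reachable configuration from some $C_{\init}^n$ contains an agent in $t_g$ if and only if gate $g$ is true. For a true input gate $g$ I add $(\init,!a,\init)$ and $(\init,?a,t_g)$, so two $\init$-agents interact and one moves to $t_g$ (false inputs get no such rule). For an OR gate $g=g_j\vee g_k$ I add $(t_j,!e,t_j),(\init,?e,t_g)$ and $(t_k,!e',t_k),(\init,?e',t_g)$, so a fresh agent may be promoted to $t_g$ whenever a witness for $g_j$ \emph{or} $g_k$ is present, leaving the witness in place. For an AND gate $g=g_j\wedge g_k$ I route an agent through $h_g$ in two steps, $(t_j,!c,t_j),(\init,?c,h_g)$ followed by $(t_k,!d,t_k),(h_g,?d,t_g)$, so $t_g$ can be populated only after encountering witnesses for \emph{both} inputs. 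Completeness (true $\Rightarrow$ reachable) follows by processing gates in topological order, keeping all witness tokens as catalysts; soundness (reachable $\Rightarrow$ true) follows by induction on the topological order, since the only rules entering $t_g$ (or $h_g$) require the predecessor witnesses to be present.

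Finally I gate the transfer to $\fin$ on the \emph{output} token $t_m$. I add a release rule $(t_m,!r,t_m),(s,?r,\fin)$ for every state $s\neq\fin$ (so a surviving $t_m$-agent can drain any $\init$-, $t_g$-, or $h_g$-agent into $\fin$), together with $(t_m,!u,\fin),(t_m,?u,\fin)$ to send the catalysts themselves to $\fin$ pairwise and $(t_m,!v,\fin),(\init,?v,\fin)$ to absorb a single leftover catalyst against a reserved $\init$-agent. The correctness argument then splits cleanly. If $g_m$ is true, then for all $n$ above a bound depending only on the circuit size I would exhibit a firing sequence that first produces two $t_m$-tokens, then releases every remaining agent into $\fin$, and finally drains the catalysts, so $C_{\init}^n \xRightarrow{*} C_{\fin}^n$ and a cut-off exists. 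If $g_m$ is false, the soundness invariant guarantees $t_m$ is never reachable; since every rule that produces a token in $\fin$ requires a partner in $t_m$, no agent ever reaches $\fin$, and so $C_{\fin}^n$ is unreachable for every $n\ge 1$, whence no cut-off exists.

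The step I expect to be the main obstacle is the $\fin$-side of the construction rather than the gate gadgets. Two things must hold simultaneously and must be argued carefully: first, that \emph{every} rule depositing a token in $\fin$ genuinely requires a live $t_m$-witness, so that falsity of the circuit makes $\fin$ unpopulable (this is what makes the ``only if'' direction work); and second, that in the true case one can drain \emph{all} agents to reach exactly $C_{\fin}^n$ for every large $n$, including the leftover computation tokens in the $t_g$ and $h_g$ states and the catalytic $t_m$-agents themselves, where the pairwise nature of rendez-vous forces the small parity fix via the reserved $\init$-agent. Establishing these two facts, together with the topological-order induction for the invariant, is the technical core; the logspace bound and the reduction's overall correctness then follow directly.
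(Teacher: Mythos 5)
Your proof is correct and follows the same overall strategy as the paper's: a logspace reduction from a \P-complete circuit value problem in which circuit evaluation is encoded as a coverability question, the transfer to $\fin$ is gated on a witness for the output gate, and drain rules plus monotonicity of coverability in the number of agents turn ``the output is true'' into ``a cut-off exists''. The differences are at the gadget level. The paper reduces from general CVP and uses a dual-rail encoding: states $q_h^0,q_h^1$ for every variable or gate $h$ and \emph{both} truth values, with gate rules that \emph{consume} the two input witnesses; soundness is the claim that $q_h^b$ is coverable iff $v(h)=b$, which also handles negation gates. You instead reduce from \emph{monotone} CVP (also \P-complete) and use a single-rail, \emph{catalytic} encoding: only true gates get a state $t_g$, and the witness for an input stays in place while a fresh $\init$-agent is promoted, so soundness follows from the topological-order induction without having to track false values. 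Both choices work; yours buys a simpler invariant at the cost of restricting to monotone circuits, while the paper's handles arbitrary circuits. Your end-game (draining with a $t_m$-catalyst, then absorbing the catalysts pairwise or against a reserved $\init$-agent) is a correct variant of the paper's scheme, where two agents at $q_{out}^1$ jump to $\fin$ and an agent already at $\fin$ then drains everyone else. One point you should make explicit: the letters $e,e',c,d$ of your gate gadgets must be indexed by the gate (the paper does this by using letters of the form $(g,b_1,b_2)$); if gadgets for different gates shared letters, a sender belonging to one gate could rendez-vous with a receiver belonging to another, and the soundness invariant would fail.
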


\begin{proof}
	We give a logspace reduction from the \emph{Circuit Value Problem (CVP)}
	which is known to be P-hard~\cite{CVP}.
	
	CVP is defined as follows: We are given a Boolean circuit $C$ with $n$ 
	input variables $x_1,\dots,x_n$ and $m$ gates $g_1,\dots,g_m$.
	We are also given an assignment $\mathit{v}$ for the input variables and a gate $out$.
	We have to check if the output of the gate $out$ is 1, 
	when the input variables are assigned values according to the
	assignment $\mathit{v}$.
	
	We represent each binary gate $g$ as a tuple $(\circ,s_1,s_2)$ where
	$\circ \in \{\lor,\land\}$ denotes the operation of $g$
	and $s_1,s_2 \in \{x_1,\dots,x_n,g_1,\dots,g_m\}$ are the inputs
	to $g$. In a similar fashion each unary gate $g$ is 
	represented as a tuple $(\lnot,s)$. 
	
	Let $\mathit{v}(x_i) \in \{0,1\}$ denote the value assigned to the variable $x_i$ by the assignment $\mathit{v}$.
	Similarly, let $\mathit{v}(g_i)$ denote the output of the 
	gate $g_i$ when the input variables are assigned values according to the assignment $\mathit{v}$.
	Hence, the problem is to determine if $v(out)$ is 1.
	
	\paragraph*{The reduction}
	
	We now define a rendez-vous protocol as follows:
	Our alphabet $\Sigma$ will be $\{a,z\} \ \cup \ \{(g,b_1,b_2) : g \text{ is a binary gate and } b_1,b_2 \in \{0,1\}\}
	\ \cup \ \{(g,b) : g \text{ is an unary gate and } \\ b \in \{0,1\} \}$.
	We will have an initial state $\init$,
	$2n$ states $q_{x_1}^0,q_{x_1}^1,\dots,q_{x_n}^0,q_{x_n}^1$ and 
	$2m$ states $q_{g_1}^0,q_{g_1}^1,\dots,q_{g_m}^0,q_{g_m}^1$.
	The rules are defined as follows:
	\begin{itemize}
		\setlength \itemsep{1mm}
		\item For each $1 \le i \le n$, we have the rules $(\init,!a,q_{x_i}^{\mathit{v}(x_i)})$ and
		$(\init,?a,q_{x_i}^{\mathit{v}(x_i)})$. These
		two rules correspond to setting the value of $x_i$ to $\mathit{v}(x_i)$.
		\item For each binary gate $g = (\circ,s_1,s_2)$ 
		and for each $b_1,b_2 \in \{0,1\}$, we have the rules
		$(q_{s_1}^{b_1},!(g,b_1,b_2),q_g^{b_1 \circ b_2})$ and
		$(q_{s_2}^{b_2},?(g,b_1,b_2),q_g^{b_1 \circ b_2})$.
		These rules say that if the output of $s_1$ is $b_1$
		and the output of $s_2$ is $b_2$, then the output of
		$g$ is $b_1 \circ b_2$.
		\item For each unary gate $g = (\lnot,s)$ and each $b \in \{0,1\}$, we have the rules
		$(q_s^b,!(g,b),q_g^{\lnot b})$ and $(q_s^b,?(g,b),q_g^{\lnot b})$.
		These rules guarantee that the output of $g$ is the negation
		of the output of $s$.
		\item Finally we have the rule $(q_{out}^1,!z,q_{out}^1)$ and also the rule $(q,?z,q_{out}^1)$ for 
		every state $q$. These rules guarantee that once a process
		reaches the state $q_{out}^1$, it can make all the other
		processes reach $q_{out}^1$ as well.
	\end{itemize}
	
	We then set our final state $\fin$ to be $q_{out}^1$.
	Notice that this construction can be performed using a logarithmic amount of space. Constructing the states and the letters of the alphabet can be accomplished by iterating over all
	the input variables and the gates of the circuit which can be performed using a logarithmic amount of space on the work-tape. Constructing each rule requires a constant number of pointers to the input, which can also be 
	maintained by using a logarithmic amount of space. Hence, the entire reduction can be carried out by using only a logarithmic amount of space on the work-tape.
	
	\paragraph*{Some preliminary observations. }
	
	Before we proceed to the correctness of the construction, we set up some notation and make some preliminary observations. 
	For a state $q$ 
	and an initial configuration $C^k_{\init}$, we say that $C^k_{\init}$ can \emph{cover} $q$ if there
	exists a run $C^k_{\init} \xRightarrow{*} D$ such that $D(q) > 0$. We say that $q$ is coverable if it can be covered
	from some initial configuration $C^k_{\init}$.
	
	\begin{rem}\label{rem:cov-to-reach}
		In our construction the final state $\fin$ is taken to be $q_{out}^1$. Notice that we have the rule $(\fin,!z,\fin)$ in our protocol and also the rules
		$(q,?z,\fin)$ for every state $q$. By using this collection of rules, it is easy to see that if $D$ is a configuration such that $D(\fin) > 0$ and $|D| = k$,
		then $D$ can reach the configuration $C_{\fin}^k$. This implies that $C^k_{\init}$ can cover $\fin$ if and only if $C^k_{\init}$ can reach $C^k_{\fin}$.
	\end{rem}

	\begin{rem}
		By definition of coverability, notice that if $C^k_{\init}$ can cover a state $q$ then $C^{l}_{\init}$ can also cover $q$ for any $l \ge k$.
		By the previous remark this means that there exists some $B$ such that $C^B_{\init}$ can cover $\fin$ if and only if there exists some $B$ such that
		for all $k \ge B$, $C^k_{\init}$ can reach $C^k_{\fin}$.
	\end{rem}

	By these two remarks, to prove that the reduction is correct it suffices to prove the following statement: $v(out) = 1$ if and only if $\fin$ is coverable.
	
	\paragraph*{Proof of correctness. }
	
	We prove a stronger statement than what is required.
	We claim that 
	\begin{quote}
		For any $h \in \{x_1,\dots,x_n,g_1,\dots,g_m\}$, $v(h) = b$ if and only if the state $q_h^b$ is coverable. 	
	\end{quote}

	\noindent $(\Rightarrow)$: Let $h_1,\dots,h_{n+m}$ be a topological ordering of the underlying DAG of the circuit $C$. 
	We will prove by induction on this ordering that if $v(h) = b$ for some gate $h$, then $q_{h}^b$ is coverable.
	For the base case of $h = h_1$, notice that since $h_1$ has no predecessors, it must be some input gate $x_i$. Let $v(x_i) = b$. By definition of the rules $(\init,!a,q_{x_i}^b)$
	and $(\init,?a,q_{x_i}^b)$, it follows that from $C^2_{\init}$ we can cover $q_{x_i}^b$. For the induction step, suppose for some $i > 1$, we have already proved the claim for all $h \in \{h_1,\dots,h_{i-1}\}$. Let $v(h_i) = b_i$. There are now multiple cases: 
	\begin{itemize}
		\item Suppose $h_i$ is an input gate. Then by the same argument which was given for the base case, we can show that 
		$q_{h_i}^{b_i}$ is coverable. 
		\item Suppose $h_i$ is an unary gate of the form $(\lnot, h_j)$ where $j < i$. Let $v(h_j) = b_j$. We then have $v(h_i) = b_i = \lnot b_j$.
		By induction hypothesis, $q_{h_j}^{b_j}$ is coverable
		and so there is some $k$ such that $C^k_{\init}$ can cover $q_{h_j}^{b_j}$. Hence, this means that from $C^{2k}_{\init}$ we can reach a configuration $D$
		such that $D(q_{h_j}^{b_j}) \ge 2$. From $D$, by using 
		the rules $(q_{h_j}^{b_j},!(h_i,b_j),q_{h_i}^{\lnot b_j})$
		and $(q_{h_j}^{b_j},?(h_i,b_j),q_{h_i}^{\lnot b_j})$, we can now cover $q_{h_i}^{\lnot b_j} = q_{h_i}^{b_i}$.
		\item Suppose $h_i$ is a binary gate of the form $(\circ, h_j, h_k)$ where $j, k < i$. Let $v(h_j) = b_j$ and $v(h_k) = b_k$. We then have
		$v(h_i) = b_i = b_j \circ b_k$. By induction hypothesis,  $q_{h_j}^{b_j}$ and $q_{h_k}^{b_k}$ are coverable
		and so there exist some $\ell$ and $\ell'$ such that $C^{\ell}_{\init}$ can cover $q_{h_j}^{b_j}$ and $C^{\ell'}_{\init}$ can cover
		$q_{h_k}^{b_k}$. Hence, this means that from $C^{\ell + \ell'}_{\init}$ we can reach a configuration $D$
		such that $D(q_{h_j}^{b_j}) \ge 1$ and $D(q_{h_k}^{b_k}) \ge 1$. From $D$, by using
		the rules $(q_{h_j}^{b_j},!(h_i,b_j,b_k),q_{h_i}^{b_j \circ b_k})$
		and $(q_{h_k}^{b_k},?(h_i,b_j,b_k),q_{h_i}^{b_j \circ b_k})$, we can now cover $q_{h_i}^{b_j \circ b_k} = q_{h_i}^{b_i}$.
	\end{itemize}
	
	\medskip
	\noindent $(\Leftarrow)$: We now show that if for some $h$, there is a $k$ such that $C^k_{\init} \xRightarrow{*} D$ is a run satisfying $D(q_h^b) > 0$, then $v(h) = b$. We do this by induction on the length of the
	run from $C^k_{\init}$ to $D$. For the base case of a single step given by $C^k_{\init} \xRightarrow{r,r'} D$,
	notice that since we start from a configuration where all the agents are in the state $\init$,
	it must be the case that $r = (\init,!a,q_{x_i}^{v(x_i)})$ and $r' = (\init,?a,q_{x_j}^{v(x_j)})$ for some input variables $x_i$ and $x_j$.
	Hence, $h$ can only be either $x_i$ or $x_j$ and in both of these cases, the claim is true.
	
	For the induction step, assume that we have proven the claim for all runs of length at most $i$ and suppose for some $\ell$, we have a run from $C^{\ell}_{\init}$ to $D$ of length
	$i+1$
	satisfying $D(q_h^b) > 0$. Let $C^{\ell}_{\init} \xRightarrow{*} D' \xRightarrow{r,r'} D$. If $D'(q_h^b) > 0$, then by induction hypothesis we are already done.
	Otherwise, by construction of the protocol $\prot$, one of the following cases must hold:
	\begin{itemize}
		\item $r = (\init,!a,q_{x_i}^{v(x_i)})$ and $r' = (\init,?a,q_{x_j}^{v(x_j)})$ for some input variables $x_i$ and $x_j$: This case is similar to the base case.
		\item $r = (q_s^b,!(h,b),q_h^{\lnot b})$ and $r' = (q_{s}^{b},?(h,b),q_h^{\lnot b})$ where $b \in \{0,1\}$ and $h$ is an
		unary gate of the form $h = (\lnot,s)$. Hence $D'(q_{s}^b) > 0$ and so
		by induction hypothesis, we have that $v(s) = b$. Hence $v(h) = \lnot v(s) = \lnot b$.
		\item $r = (q_{s_1}^{b_1},!(h,b_1,b_2),q_h^{b})$ and $r' = (q_{s_2}^{b_2},?(h,b_1,b_2),q_h^{b})$ where $b_1,b_2 \in \{0,1\}$, $h$ is a 
		binary gate of the form $h = (\circ,s_1,s_2)$ and $b = b_1 \circ b_2$. Hence $D'(q_{s_1}^{b_1}) > 0$ and $D'(q_{s_2}^{b_2}) > 0$ and so
		by induction hypothesis, we have that $v(s_1) = b_1$ and $v(s_2) = b_2$. Hence $v(h) = v(s_1) \circ v(s_2) = b_1 \circ b_2 = b$.
		\item $r = (q_{out}^1,!z,q_{out}^1)$ and $r' = (q,?z, q_{out}^1)$ for some state $q$. In this case, notice that $h$ must be $q_{out}$ and $b$ must be 1.
		By construction of the run, it must be the case that $D'(q_{out}^1) > 0$ and so by induction hypothesis we are already done.
	\end{itemize}
	Hence, the induction step is complete and we have proved our claim, which also completes the proof of correctness of the reduction.
\end{proof}

Since rendez-vous protocols are a special case of Petri nets, 
this also proves that the cut-off problem for Petri nets is \P-hard. Therefore we get:
\begin{thm}
	The cut-off problems for Petri nets and rendez-vous protocols are \P-complete.
\end{thm}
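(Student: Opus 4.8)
The plan is to combine the polynomial-time algorithm established in Theorem~\ref{thm:main-poly-general} with the \P-hardness established in Lemma~\ref{lem:hardness-cutoff}, handling both the Petri net and the rendez-vous protocol formulations. The only care needed is to transfer each bound across the reduction between the two models, so that both problems come out \P-complete under logspace reductions, as is standard for \P-completeness.

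For membership in \P\ I would argue as follows. For Petri nets this is immediate from Theorem~\ref{thm:main-poly-general}. For rendez-vous protocols, recall from Proposition~\ref{prop:equiv} that a protocol $\prot$ and its associated net $\net_\prot$ induce the same reachability relation, so that $\prot$ admits a cut-off iff the net system $(\net_\prot, \multiset{\init}, \multiset{\fin})$ does. Since $\net_\prot$ together with the two markings can be built from $\prot$ in logarithmic space (each transition $t_{r,r'}$ is produced from a pair of rules by inspecting a constant amount of information), this is a valid reduction, and membership of the protocol problem in \P\ follows.

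For \P-hardness I would proceed symmetrically. For rendez-vous protocols this is exactly Lemma~\ref{lem:hardness-cutoff}. To obtain hardness for Petri nets, observe that the same logspace map $\prot \mapsto (\net_\prot, \multiset{\init}, \multiset{\fin})$ reduces the protocol cut-off problem to the Petri net cut-off problem; composing it with the reduction of Lemma~\ref{lem:hardness-cutoff} yields a logspace reduction from the Circuit Value Problem to the Petri net cut-off problem. Hence both problems are \P-hard.

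Combining the two directions gives \P-completeness for both the Petri net and the rendez-vous protocol cut-off problems. I do not expect any genuine obstacle here: all the substantial work resides in Theorem~\ref{thm:main-poly-general} and Lemma~\ref{lem:hardness-cutoff}, and the present statement is purely a matter of assembling them and checking that the translation between the two models is computable in logarithmic space, so that both the membership and the hardness results compose correctly.
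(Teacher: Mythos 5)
Your proposal is correct and follows exactly the paper's own route: membership in \P\ comes from Theorem~\ref{thm:main-poly-general}, hardness for protocols from Lemma~\ref{lem:hardness-cutoff}, and both bounds transfer between the two models via the logspace translation $\prot \mapsto (\net_\prot, \multiset{\init}, \multiset{\fin})$ of Proposition~\ref{prop:equiv}. The paper treats this assembly as immediate (``Clearly, this also proves that the cut-off problem for Petri nets is \P-hard''), and your slightly more explicit bookkeeping of the reductions is the same argument spelled out.
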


\section{The bounded-loss cut-off problem for rendez-vous protocols}\label{sec:bounded-loss}
In this section, we consider the following \emph{bounded-loss cut-off problem for rendez-vous protocols}, which was suggested as a variant of the cut-off 
problem in Section 7 of~\cite{HornS20}. \medskip

\begin{quote}
	\textit{Given: } A rendez-vous protocol $\prot = (Q,\Sigma,\init,\fin,R)$\\
	\textit{Decide: } Is there $B \in \nn$ such that for every $n \in \nn$, there is a configuration $D_n$ with
	$C_{\init}^n \xRightarrow{*} D_n$ and $D_n(\fin) \ge n-B$?
\end{quote}

If such a $B$ exists, then we say that the protocol $\prot$ has the bounded-loss cut-off property and that $B$ is a bounded-loss
cut-off for $\prot$. 

Intuitively, the cut-off problem asks if for all large enough population sizes $n$, we can move $n$ agents from the 
state $\init$ to the state $\fin$. The bounded-loss cut-off problem asks if there is a bound $B$ such that
for all population sizes $n$, we can move $n-B$ agents from the state $\init$ to the state $\fin$ and leave the remaining $B$ 
agents in any state of the protocol $\prot$. Intuitively, we are allowed to ``leave out'' a bounded number of agents while moving everybody
else to the final state.

\begin{exa}
	Let us consider the protocol $\prot$ from Figure~\ref{fig:expo} given in Example~\ref{example:four}. We have seen that 4 is a cut-off for $\prot$.
	Let us modify $\prot$ so that we get the protocol in Figure~\ref{fig:expo-1}.
		\begin{figure}[h]
		\begin{center}
			\tikzstyle{node}=[circle,draw=black,thick,minimum size=7mm,inner sep=0.75mm,font=\normalsize]
			\tikzstyle{edgelabelabove}=[sloped, above, align= center]
			\tikzstyle{edgelabelbelow}=[sloped, below, align= center]
			\begin{tikzpicture}[->,node distance = 2cm, auto, thick]
				\node[node] (init) {\small{$\init$}};
				\node[node, right = of init] (q1) {$q_1$};
				\node[node, right = of q1] (fin) {\small{$\fin$}};
				
				\draw (init) edge[bend left=10 ,edgelabelabove] node[pos=0.5]{$!a$} (q1); 
				\draw (q1) edge[edgelabelabove,loop above] node{$!b$} (q1);
				
				\draw (init) edge[bend right=10,edgelabelbelow,] node[pos=0.5]{$?a$} (q1);
				\draw (init) edge[bend right=35,edgelabelbelow,] node{$?b$} (fin);		
			\end{tikzpicture}
			\caption{A modification of the protocol from Figure~\ref{fig:expo}}
			\label{fig:expo-1}
		\end{center}
	\end{figure}

	We first observe that this protocol does not admit a cut-off. Indeed, the only possible step from any initial configuration consists of moving two agents from the
	state $\init$ to $q_1$ by using the message $a$. However, since there are no outgoing rules from $q_1$, it follows that these two agents can never leave $q_1$ from here on.
	Hence, this protocol does not admit a cut-off. But 2 is a bounded-loss cut-off for this protocol because once we move 2 agents from $\init$ to $q_1$,
	we can move all the remaining agents from $\init$ to $\fin$ by using the rules $(q_1,!b,q_1)$ and $(\init,?b,\fin)$. 
\end{exa}

Our main result in this section is that
\begin{thm}
	The bounded-loss cut-off problem is \P-complete.
\end{thm}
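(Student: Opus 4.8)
The plan is to prove the two matching bounds separately, working throughout with the net $\net_\prot$ and the markings $C_{\init}^n = n\cdot\multiset{\init}$ and $C_{\fin}^n = n\cdot\multiset{\fin}$.

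For \textbf{$\P$-hardness} I would reuse the reduction from the Circuit Value Problem built in the proof of Lemma~\ref{lem:hardness-cutoff}, observing that on those instances the cut-off and the bounded-loss cut-off properties coincide. The protocol constructed there has the flooding rules $(\fin,!c,\fin)$ and $(q,?c,\fin)$ for every state $q$, so a single token in $\fin$ can drag all remaining agents into $\fin$. Hence if $v(out)=1$ then $\fin$ is coverable and for every sufficiently large $n$ we even get $C_{\init}^n \reachN{*} C_{\fin}^n$, i.e.\ zero loss, so $\prot$ has the bounded-loss cut-off property. If $v(out)=0$ then $q_{out}^1$, and therefore $\fin$, is never coverable, so every $D$ reachable from $C_{\init}^n$ satisfies $D(\fin)=0$; the loss is exactly $n$ and thus unbounded. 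So $\prot$ admits a bounded-loss cut-off iff $v(out)=1$, and the logspace reduction already in place gives $\P$-hardness (for protocols, and a fortiori for nets).

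For \textbf{membership in $\P$} I would first recast the definition. Writing $f(n) := \max\{D(\fin) : C_{\init}^n \reachN{*} D\}$ and $g(n) := n-f(n)$ (total mass is preserved, so $f(n)\le n$), the protocol has the bounded-loss cut-off property iff the loss $g$ is bounded. Running two witnessing sequences side by side shows that $f$ is superadditive, so $g$ is nonnegative, integer-valued and subadditive; but subadditivity is too weak to force boundedness (a subadditive loss could grow like $\sqrt n$), which is why the condition $g=O(1)$ is strictly stronger than mere continuous reachability of $\fin$-mass $1$ and Theorem~\ref{th:main-general} cannot be quoted verbatim. The decisive structural object is a \emph{catalytic transfer}: a marking $R$ (the re-usable residual) and a number $d\ge 1$ with $R + d\cdot\multiset{\init}\reachN{*} R + d\cdot\multiset{\fin}$. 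I would prove that $\prot$ has the bounded-loss cut-off property iff there is a catalytic transfer whose residual $R$ is itself producible from finitely many $\init$-agents.

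The two directions of this characterization carry the work. For the easy direction, given a catalytic transfer I would first fire a fixed prefix that produces $R$ (together with some finished agents) out of $n_0$ copies of $\init$, and then iterate the transfer $k$ times on the remaining agents, each iteration converting $d$ further agents while returning $R$ untouched; for $n$ with $n-n_0 = kd+r$ this leaves only $\norm{R}+r<\norm{R}+d$ agents outside $\fin$, a bound independent of $n$. For the hard direction I would start from bounded loss with bound $B$: for infinitely many $n$ we have $C_{\init}^n \reachN{\tau_n} (n-e_n)\cdot\multiset{\fin}+R_n$ with $e_n=\norm{R_n}\le B$. Since there are finitely many residuals of mass at most $B$ and finitely many possible supports $\supp{\tau_n}$, a double pigeonhole yields a fixed residual $R$, a fixed support $S$, and indices $n_1<n_2$ with $C_{\init}^{n_i}\reachN{\tau_i}(n_i-\norm R)\cdot\multiset{\fin}+R$ and $\supp{\tau_1}=\supp{\tau_2}=S$. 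Then $\boldy := \parikh{\tau_2}-\parikh{\tau_1}\in\zn^S$ is an integer solution of the marking equation of $(\net_\prot,(n_2-n_1)\multiset{\init},(n_2-n_1)\multiset{\fin})$ with $\supp{\boldy}\subseteq S$, and from $\boldy$ together with the realizable sequence $\tau_1$ I would extract the catalytic transfer.

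\textbf{Main obstacle.} The crux is exactly this extraction, and it is where the Scaling and Insertion Lemmas cannot be applied off the shelf: both lemmas \emph{scale} the transferred marking, which here would scale the residual $R$ too and destroy the bound on the loss. I therefore expect to need a bespoke pumping argument that keeps $R$ as a single, re-used catalyst — morally an Insertion-Lemma argument (Lemma~\ref{lem:cut-paste}) in which the context $L\to L'$ is instantiated by $R\to R$, while the clean block $d\cdot\multiset{\init}\to d\cdot\multiset{\fin}$, certified by $\boldy$ and by the forward saturation of Lemma~\ref{lem:saturation}, is the part that gets iterated; the tokens of $R$ supply precisely the enabling that pure $\fin$ lacks. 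Finally, every test in the characterization is polynomial: continuous forward reachability and a maximum-support sequence are obtained as in Theorem~\ref{thm:reachQ} and Lemma~\ref{lem:max-support-general}, and integer feasibility of the marking equation with prescribed support is decided via Hermite or Smith normal forms. Hence the whole decision procedure runs in polynomial time, and together with the hardness above the bounded-loss cut-off problem is $\P$-complete.
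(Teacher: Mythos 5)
Your \P-hardness argument is exactly the paper's: it reuses the CVP reduction of Lemma~\ref{lem:hardness-cutoff} and observes that on those instances the cut-off and bounded-loss cut-off properties coincide, so that half is fine. Your ``catalytic transfer'' is also the right picture, and it is precisely what the paper's sufficiency proof builds (a saturated marking that is reused while a fixed sequence $\xi$ converts $d$ agents per round). The gap lies in the certificate you extract in the necessity direction, and it is fatal to your algorithm. Pigeonholing residuals and supports only gives $\boldy = \parikh{\tau_2}-\parikh{\tau_1} \in \zn^T$, an integer solution of the scaled marking equation; the paper instead applies Dickson's lemma to the Parikh images, obtaining $\parikh{\tau_1} \le \parikh{\tau_2}$ componentwise and hence a \emph{nonnegative} solution $\boldy \in \qnz^T$ (Theorem~\ref{thm:charac-bounded-loss}). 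Nonnegativity is exactly what makes your ``main obstacle'' disappear: after scaling, $\bz = d\boldy \in \nn^T$ is the Parikh image of an actual firing sequence whose net effect is precisely $d\cdot(\multiset{\fin}-\multiset{\init})$, so it can be fired again and again against the catalyst without touching it. With a genuinely negative component, each iteration must be padded with extra copies of the covering sequence, and each such copy deposits a fresh residual, so the loss grows with the number of rounds.

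Moreover, no bespoke pumping argument can close this gap, because your condition is strictly weaker than bounded loss. Take the protocol with rules $(\init,!x,\fin)$, $(\init,?x,a)$, $(\init,!y,a)$, $(\init,?y,a)$, whose net has two transitions: $t_1$ consuming $2\cdot\init$ and producing $\fin + a$, and $t_2$ consuming $2\cdot\init$ and producing $2\cdot a$. Continuous coverability of $\fin$ holds (fire $t_1$ with factor $\frac12$), the maximal covering support is $\{t_1,t_2\}$, and $\boldy = (2,-1)$ is an integer solution of $2\cdot\multiset{\fin} = 2\cdot\multiset{\init} + \incidence\boldy$ with $\supp{\boldy} \subseteq \{t_1,t_2\}$. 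Yet no transition ever consumes $a$, so every configuration reachable from $n\cdot\multiset{\init}$ has at least as many agents in $a$ as in $\fin$; the loss is at least $n/2$, there is no bounded-loss cut-off, no catalytic transfer exists, and the extraction you hope for is impossible. (Reading your final test instead as integer feasibility of the unscaled equation fails in the opposite direction: the protocol with transitions $2\init \to \fin+a$, $\init+a\to\fin+b$, $\init+b\to\fin+a$ has bounded-loss cut-off $1$, but the unique solution of its marking equation is $(0,\frac12,\frac12)$, which is not integral --- a parity obstruction typical of rendez-vous nets.) The repair is exactly what the paper does: replace the support-pigeonhole by Dickson's lemma to get $\boldy \ge 0$, and replace Hermite/Smith integer feasibility by nonnegative rational feasibility of the marking equation restricted to the maximal covering support, which is a linear program and keeps the whole procedure in \P.
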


We prove this by adapting the techniques developed for the cut-off problem. Similar to the cut-off problem, we first give
a characterization of protocols having the bounded-loss cut-off property and then use this characterization to give a polynomial-time
algorithm for deciding the bounded-loss cut-off problem.

\subsection{Characterization of protocols having a bounded-loss cut-off} 
For the rest of this section, we fix a rendez-vous protocol $\prot = (Q,\Sigma,\init,\fin,R)$. 
We need a couple of notations to state the required characterization. Given the protocol $\prot$, we consider the Petri net $\net_\prot = (P,T,Pre,Post)$ with incidence matrix $\incidence = Post-Pre$
that we constructed in Section~\ref{sec:prelim}, and then consider the associated Petri net system $(\net_\prot,\multiset{\init},\multiset{\fin})$.
In this Petri net, we say that $\multiset{\init}$ can \emph{cover} $\multiset{\fin}$ by a continuous firing sequence $\sigma$ if there exists a marking $M$ such that $\multiset{\init} \reachQ{\sigma} M$ 
and $M(\fin) > 0$. We now prove that

\begin{thm}\label{thm:charac-bounded-loss}
	The protocol $\prot$ has a bounded-loss cut-off if and only if in the Petri net system $(\net_\prot,\multiset{\init},\multiset{\fin})$,
	$\multiset{\init}$ can cover $\multiset{\fin}$ using a continuous firing sequence $\sigma$
	and the marking equation has a solution $\boldy \in \qn_{\ge 0}^T$ such that $\supp{\boldy} \subseteq \supp{\sigma}$.
\end{thm}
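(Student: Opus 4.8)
The plan is to mirror the proof of Theorem~\ref{th:main-general} for the ordinary cut-off, replacing \emph{reachability} of $\multiset{\fin}$ by \emph{coverability} of $\multiset{\fin}$, and the integer solution $\boldy\in\zn^T$ by a non-negative rational one $\boldy\in\qnz^T$. Throughout I use that $\net_\prot$ conserves tokens (every transition consumes two tokens and produces two), so every marking reachable from $n\multiset{\init}$ has total size $n$.

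($\Rightarrow$) Assume $\prot$ has a bounded-loss cut-off $B$, so for each $n>B$ there is a discrete firing sequence $\sigma_n$ with $n\multiset{\init}\reachN{\sigma_n}D_n$ and $D_n(\fin)\ge n-B$. Scaling $\sigma_n$ down by the factor $1/n$ (replacing each $t_i$ by $\tfrac1n t_i$ and arguing by induction exactly as in the forward direction of Theorem~\ref{th:main-general}) yields $\multiset{\init}\reachQ{\bar\sigma_n}\tfrac1n D_n$ with $\tfrac1n D_n(\fin)\ge\tfrac{n-B}{n}>0$, so $\bar\sigma_n$ covers $\multiset{\fin}$. As there are finitely many possible supports, I fix an infinite index set $\mathcal N$ on which $\supp{\sigma_n}=S$ is constant and take $\sigma:=\bar\sigma_{n_0}$ for some $n_0\in\mathcal N$, a covering sequence of support $S$. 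For the solution, put $\bx_n:=\tfrac1n\parikh{\sigma_n}\in\qnz^T$, so $\supp{\bx_n}=S$ and $\incidence\bx_n=\tfrac1n D_n-\multiset{\init}$. By conservation $\sum_{q\neq\fin}\tfrac1n D_n(q)=1-\tfrac1n D_n(\fin)\le\tfrac Bn$, hence $\tfrac1n D_n\to\multiset{\fin}$ and $\incidence\bx_n\to\multiset{\fin}-\multiset{\init}$ as $n\to\infty$ in $\mathcal N$. The cone $\{\incidence\mathbf z:\mathbf z\in\qnz^T,\ \supp{\mathbf z}\subseteq S\}$ is finitely generated, hence closed, and contains every $\incidence\bx_n$; therefore it contains the limit $\multiset{\fin}-\multiset{\init}$. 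This produces $\boldy\in\qnz^T$ with $\supp{\boldy}\subseteq S=\supp{\sigma}$ and $\multiset{\fin}=\multiset{\init}+\incidence\boldy$, as required.

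($\Leftarrow$) Assume $\sigma$ covers $\multiset{\fin}$ with support $S$ and $\boldy\in\qnz^T$ satisfies $\multiset{\fin}=\multiset{\init}+\incidence\boldy$ and $\supp{\boldy}\subseteq S$. Clearing denominators, write $\boldy=\tfrac1d\mathbf w$ with $\mathbf w\in\nn^T$, $\supp{\mathbf w}\subseteq S$ and $\incidence\mathbf w=d(\multiset{\fin}-\multiset{\init})$. The decisive observation is that $\incidence\mathbf w$ is supported on $\{\init,\fin\}$ only: firing $\mathbf w$ removes $d$ tokens from $\init$, adds $d$ to $\fin$, and has net effect $0$ on every other place; I will use $\mathbf w$ as a \emph{catalytic pump}. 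Since $\multiset{\init}\reachQ{\sigma}$ with $\supp{\sigma}=S$, Lemma~\ref{lem:saturation} gives a constant $s:=(\weight+1)^{|S|}$ and a sequence $s\multiset{\init}\reachN{*}L$ with $L(p)>0$ for all $p\in\preset S\cup\postset S\supseteq\preset{\supp{\mathbf w}}$; firing it $c_0$ times from $c_0 s\,\multiset{\init}$ reaches the saturated marking $L^\ast:=c_0L$. Fix a sequence $\xi$ with $\parikh{\xi}=\mathbf w$ and choose $c_0$ so that $c_0\min_p L(p)\ge\weight\max_t\mathbf w(t)$; then Proposition~\ref{prop:easy} shows $\xi$ is firable from $L^\ast$, and since $\incidence\mathbf w$ vanishes off $\{\init,\fin\}$, each complete firing of $\xi$ restores the buffer on the intermediate places while moving $d$ tokens from $\init$ to $\fin$. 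Now set $a:=c_0 s$, $\theta:=\weight\max_t\mathbf w(t)$ and $B:=\norm{L^\ast}+\theta+d$, all independent of the population size. For $m\ge a$, from $m\multiset{\init}=a\multiset{\init}+(m-a)\multiset{\init}$ I reach $L^\ast+(m-a)\multiset{\init}$ by saturation and then fire $\xi$ repeatedly, which stays enabled while at least $\theta$ tokens remain on $\init$; stopping below that threshold I reach $D_m$ with $D_m(\fin)\ge m-B$. For the finitely many $m<a$ the bound is trivial since $m\le a\le B$, so $B$ is a bounded-loss cut-off.

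I expect the backward direction to be the main obstacle: one must check that the pump stays enabled across unboundedly many iterations and that the only tokens ever lost are the (constantly many) catalyst tokens produced during saturation together with the final sub-threshold remainder on $\init$. The feature that makes this work — and that cleanly distinguishes bounded-loss from the ordinary cut-off, where an integer solution is needed — is that $\incidence\mathbf w=d(\multiset{\fin}-\multiset{\init})$ is forced to be supported on $\{\init,\fin\}$, so repeatedly firing $\mathbf w$ is genuinely catalytic; correspondingly, a mere \emph{rational} non-negative solution suffices.
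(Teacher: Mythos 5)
Your proof is correct, but it takes a genuinely different route from the paper's in the forward direction. The paper argues combinatorially: since at most $B$ tokens lie outside $\fin$, a pigeonhole argument yields an infinite subsequence of runs all ending in $D + \multiset{(i_j-B')\cdot\fin}$ for one \emph{fixed} residue marking $D$; Dickson's lemma then gives two runs with comparable Parikh images, and the difference quotient $\boldy = (\parikh{\sigma_{i_l}}-\parikh{\sigma_{i_k}})/(i_l-i_k)$ is the desired solution --- non-negative by Dickson, and solving the marking equation exactly because the fixed residues cancel. You instead fix only the support $S$ by pigeonhole and pass to the limit $\incidence\bx_n\to\multiset{\fin}-\multiset{\init}$, concluding by closedness of the finitely generated cone on the columns of $\incidence$ indexed by $S$. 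This is slicker and avoids Dickson's lemma entirely, at the price of a polyhedral fact; one imprecision to repair is that the cone you write down, with coefficients in $\qnz$, is \emph{not} closed as a subset of $\mathbb{R}^P$ --- you should work with the real cone (finitely generated, hence closed), observe that the limit lies in it, and then use that a rational vector in a cone with integer generators is a non-negative \emph{rational} combination of them (rational LP feasibility equals real feasibility for rational data). Since your limit point $\multiset{\fin}-\multiset{\init}$ is rational, the conclusion stands. Your backward direction is essentially the paper's argument: saturate, then repeatedly fire a sequence whose Parikh image is an integer multiple $\mathbf{w}=d\boldy$ of the rational solution, which acts as a pump moving $d$ tokens from $\init$ to $\fin$ per round while leaving every other place untouched (this is exactly where $\supp{\boldy}\subseteq\supp{\sigma}$ and the specific form of $\incidence\mathbf{w}$ are used in both proofs). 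The difference is organizational: the paper first invokes the Scaling Lemma (Lemma~\ref{lem:fhshort}) to get a discrete run $n\multiset{\init}\reachN{\tau}nM$ and saturates using its intermediate markings, mimicking the Insertion Lemma, whereas you saturate directly with Lemma~\ref{lem:saturation} scaled by a constant $c_0$; your version is a mild simplification, as it bypasses the Scaling Lemma altogether, and your threshold bookkeeping on $\init$ (stop pumping below $\theta$) matches the paper's computation of the bound $\mu+n'$.
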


\begin{proof}
	($\Rightarrow$): Assume $\prot$ has the bounded-loss cut-off property. 
	By Proposition~\ref{prop:equiv}, there must exist $B \in \nn$ 
	such that
	for all $n \in \nn$, there exists $D_n$ with 
	$C^n_{\init} \reachN{*} D_n$ in the Petri net $\net_\prot$ and $D_n(\fin) \ge n-B$. 
	
	Consider the infinite sequence of markings $D_1,D_2,\dots$. Notice that in all of these markings, at most $B$ tokens are not in
	the place $\fin$. This means that there must exist a subsequence $D_{i_1},D_{i_2},\dots$ with $B < i_1 < i_2 < \dots$ and a marking $D$ of size $B' \le B$
	such that each $D_{i_j} = D + \multiset{(i_j - B') \cdot \fin}$.
	
	Now, consider the sequence of runs $C^{i_1}_{\init} \reachN{\sigma_{i_1}} D_{i_1}, C^{i_2}_{\init} \reachN{\sigma_{i_2}} D_{i_2}, \dots$
	and consider the corresponding Parikh vectors $\parikh{\sigma_{i_1}}, \parikh{\sigma_{i_2}}, \dots$.
	By Dickson's lemma, there must exist indices $k < l$ such that $\parikh{\sigma_{i_k}} \le \parikh{\sigma_{i_l}}$. 
	Let $\sigma_{i_l} = t_1 t_2 \cdots t_m$.
	Construct the continuous firing sequence $\sigma := t_1/{i_l}, t_2/{i_l},
	\cdots, t_m/{i_l}$. From the fact that $C^{i_l}_{\init} \xrightarrow{\sigma_{i_l}} D_{i_l}$,
	we can easily conclude by induction on $m$
	that $\multiset{\init} \reachQ{\sigma} D_{i_l}/i_l$.
	Since $D_{i_l}(\fin) > 0$, this means that
	$\multiset{\init}$ can cover $\multiset{\fin}$ using $\sigma$.
	
	Further, by the marking equation we have $D + \multiset{(i_k - B') \cdot \fin} = C^{i_k}_\init + \incidence \cdot \parikh{\sigma_{i_k}}$ and $D + \multiset{(i_l - B') \cdot \fin} = C^{i_l}_\init + \incidence \cdot \parikh{\sigma_{i_l}}$.
	Setting $\boldy = (\parikh{\sigma_{i_l}} - \parikh{\sigma_{i_k}})/(i_l-i_k)$ gives us that
	$\multiset{\fin} = \multiset{\init} + \incidence \boldy$.
	By assumption on $k$ and $l$, it follows that $\boldy \in \qn_{\ge 0}^T$
	and $\supp{\boldy} \subseteq \supp{\sigma_{i_l}} = \supp{\sigma}$.

	\medskip\noindent ($\Leftarrow$): Assume that there exists a continuous firing sequence $\sigma$
	and a vector $\boldy \in \qn_{\ge 0}^T$ such that $\supp{\boldy} \subseteq \supp{\sigma}$,
	$\multiset{\init}$ can cover $\multiset{\fin}$ using $\sigma$ 
	and $\multiset{\fin} = \multiset{\init} + \incidence \boldy$. 
	
	Let $\multiset{\init} \reachQ{\sigma} M''$. By the Scaling lemma~(Lemma~\ref{lem:fhshort}), there exists some $n \in \nn$ and some firing sequence $\tau$ such that $\multiset{n \cdot \init} \reachN{\tau} n\cdot M''$ and $\supp{\tau} = \supp{\sigma}$.
	Further, since $\boldy \in \qn_{\ge 0}^T$, it follows that there exists a $k \in \nn$ such that $k\boldy \in \nn^T$ and so the smallest number $\alpha \in \nn$ such that
	$\alpha \boldsymbol{1} + k\boldy \ge \mathbf{0}$ is in fact 0. 
	
	Let $M = \multiset{n \cdot \init}$ and $M' = n \cdot M''$. To summarize, 
	we have that $M \reachN{\tau} M'$ and $\multiset{k \cdot \fin} = \multiset{k \cdot \init} + \incidence (k \boldy)$ where $\supp{k \boldy} = \supp{\boldy} \subseteq \supp{\sigma} = \supp{\tau}$. Let $\mu = \norm{\boldy} n' \weight$, where $n'$ is the number of places in $\pre{\supp{\tau}}$ and $w$ is the weight of $\net_\prot$. 
	By the Insertion lemma~(Lemma~\ref{lem:cut-paste}), it follows that for any $s \in \nn$, $\mu M + \multiset{sk \cdot \init} \reachN{*} \mu M' + \multiset{sk \cdot \fin}$.
	Unpacking the definition of $M$, it follows that for any $s \in \nn$, 
	\begin{equation}\label{eq:one}
		\multiset{(\mu n + sk) \cdot \init} \reachN{*} \mu M' + \multiset{sk \cdot \fin}	
	\end{equation}
 
	We now claim that $B := \mu n + k$ is a bounded-loss cut-off for $\prot$. To prove this, we have to show that for any $a \in \nn$, there exists $D_a$ with 
	$C^a_{\init} \reachN{*} D_a$ and $D_a(\fin) \ge a - B$. Suppose $a \le \mu n + k$. Then we simply set $D_a$ to be $C^a_{\init}$. 
	On the other hand, suppose $a > \mu n + k$. Let $b = a - \mu n - k$. Write $b$ as $b = qk + r$ for some $q \in \nn$ and some $r$ such that $0 \le r \le k-1$. By equation~\eqref{eq:one} and the monotonicity property, it follows that $C^a_{\init} = \multiset{(\mu n + (q+1)k + r) \cdot \init} \reachN{*} \mu M' + \multiset{(q+1)k \cdot \fin} + \multiset{r \cdot \init}$. Notice that $(q+1)k \ge b = a - B$. Hence, we can set $D_a$ to be $\mu M' + \multiset{(q+1)k \cdot \fin} + \multiset{r \cdot \init}$, which
	completes the proof.
\end{proof}

Note that this characterization is very similar to the characterization for the cut-off property, where the reachability condition
is replaced with a coverability condition and the condition of the marking equation having a solution over the integers is replaced
with it having a solution over the non-negative rationals. 

\begin{exa}
	Let us consider the protocol from Figure~\ref{fig:expo-1}. We have seen that 2 is a bounded-loss cut-off for this protocol. The Petri net corresponding to this
	protocol is given in Figure~\ref{fig:net-example-1}.
	
		\begin{figure}[ht]
		\begin{center}
			\begin{tikzpicture}[->, node distance=2cm, auto, thick]
				\node[place] (init) {};
				\node[transition, right of =init] (t1) {};
				\node[place, right of=t1] (q1) {};
				\node[transition, right of=q1] (t2) {};
				\node[place, right of=t2] (fin) {};
				
				\path[->]
				(init) edge node {2} (t1)
				(t1) edge node {2} (q1)
				(q1) edge [bend right = 30] node {} (t2)
				(init) edge[bend right = 50] node {} (t2)
				(t2) edge node {} (fin)
				(t2) edge [bend right = 30] node {} (q1)
				;
				
				\node[] () [left= -1pt of init] {$\init$};
				\node[] () [above= -1pt of q1] {$q_1$};
				\node[] () [right= -1pt of fin] {$\fin$};
				\node[] () [above= -1pt of t1] {$t_1$};
				\node[] () [above= -1pt of t2] {$t_2$};
				
			\end{tikzpicture}
			\caption{Petri net corresponding to the protocol from Figure~\ref{fig:expo-1}}
			\label{fig:net-example-1}
		\end{center}
	\end{figure}
	
	We now show that the conditions of Theorem~\ref{thm:charac-bounded-loss} are satisfied here. Indeed, if we set $\sigma = t_1/4, \ t_2/2$ then $\multiset{\init}$ can cover
	$\multiset{\fin}$ using $\sigma$. Further, the marking equation for the markings $\multiset{\init}$ and $\multiset{\fin}$ is given by,
	
	$$\begin{array}{rclcl}
		0 &= & 1 - 2\vect_{t_1} -\vect_{t_2} & & (\text{Equation for the place $\init$})\\[0.2cm]
		0 &=  &0 + 2\vect_{t_1} & & (\text{Equation for the place $q_1$})\\ [0.2cm]
		1 &= &0 + \vect_{t_2} & & (\text{Equation for the place $f$})\\[0.2cm]
	\end{array}
	$$
	Notice that $\boldy = (0,1)$ is a solution to the marking equation such that $\supp{\boldy} \subseteq \supp{\sigma}$. Hence, the conditions of Theorem~\ref{thm:charac-bounded-loss}
	are satisfied.
\end{exa}

\subsection{Polynomial time algorithm} We use the characterization given in the previous subsection to provide a polynomial time algorithm
for the bounded-loss cut-off problem. Similarly to Lemma~\ref{lem:max-support-general}, we have the following lemma which could be 
inferred from the results of~\cite{FracaH15}.

\begin{lemC}[{\cite[Lemma 12 and Proposition 29]{FracaH15}}]~\label{lem:max-support-bounded}
	Let $FS$ be the set of all continuous firing sequences $\tau$ such that $\multiset{\init}$ can cover $\multiset{\fin}$.
	Then, there exists a sequence $\sigma \in FS$ such that $\supp{\tau} \subseteq \supp{\sigma}$ for every $\tau \in FS$.
	Moreover, the support of such a sequence $\sigma$ can be computed in polynomial time.
\end{lemC}

We now have all the ingredients to prove the existence of a polynomial time
algorithm. \medskip

\begin{thm}~\label{thm:main-poly-bounded-loss}
	The bounded-loss cut-off problem can be solved in polynomial time.
\end{thm}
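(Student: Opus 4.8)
The plan is to mirror the polynomial-time algorithm of Theorem~\ref{thm:main-poly-general}, replacing its reachability ingredients by their coverability counterparts. The characterization in Theorem~\ref{thm:charac-bounded-loss} says that $\prot$ has a bounded-loss cut-off if{}f there is a rational firing sequence $\sigma$ such that $\multiset{\init}$ covers $\multiset{\fin}$ using $\sigma$ and the marking equation $\multiset{\fin} = \multiset{\init} + \incidence \boldy$ has a solution $\boldy \in \qn_{\ge 0}^T$ with $\supp{\boldy} \subseteq \supp{\sigma}$. First I would check, using (\cite{FracaH15}, Proposition~29), whether any covering rational firing sequence $\sigma$ exists at all; this is a polynomial-time test. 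If none exists, the first conjunct of the characterization fails for every $\sigma$, so by Theorem~\ref{thm:charac-bounded-loss} the protocol has no bounded-loss cut-off and we reject.

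Second, assuming a covering sequence exists, I would invoke Lemma~\ref{lem:max-support-bounded} to compute, in polynomial time, the maximum support $S$ among all rational firing sequences covering $\multiset{\fin}$ from $\multiset{\init}$. The key point is that the existential quantifier over $\sigma$ in the characterization can be collapsed to this single maximum-support witness: if some covering $\sigma$ and some $\boldy \in \qn_{\ge 0}^T$ satisfy $\supp{\boldy} \subseteq \supp{\sigma}$, then $\supp{\boldy} \subseteq \supp{\sigma} \subseteq S$; conversely, any $\boldy \in \qn_{\ge 0}^T$ with $\supp{\boldy} \subseteq S$ is compatible with the covering sequence of support exactly $S$ guaranteed by Lemma~\ref{lem:max-support-bounded}. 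Hence it suffices to decide whether the marking equation has a solution $\boldy \in \qn_{\ge 0}^T$ with $\boldy_t = 0$ for every $t \notin S$.

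Third, this last test is simply the feasibility of the linear system $\multiset{\fin} = \multiset{\init} + \incidence \boldy$ together with $\boldy \ge \mathbf{0}$ and $\boldy_t = 0$ for $t \notin S$, over the non-negative rationals, which is a linear program and hence solvable in polynomial time. By Theorem~\ref{thm:charac-bounded-loss} this system is feasible if{}f $\prot$ has a bounded-loss cut-off, which establishes the result.

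The individual steps are routine once the earlier machinery is in place, so the only conceptual point — and the only place one could go wrong — is the support-collapsing argument of the second paragraph: one must be sure that replacing the existential over all covering sequences by the single maximum-support sequence is sound in both directions. A secondary simplification worth highlighting is that, unlike the general cut-off case of Theorem~\ref{thm:main-poly-general}, the final feasibility question here is posed over $\qn_{\ge 0}^T$ rather than $\zn^T$, so it is a plain linear program and does not require Smith or Hermite normal form computations.
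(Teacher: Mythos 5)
Your proposal is correct and follows essentially the same route as the paper's own proof: check coverability via (\cite{FracaH15}, Proposition~29), compute the maximum support $S$ via Lemma~\ref{lem:max-support-bounded}, and then test feasibility of the marking equation over $\qn_{\ge 0}^T$ restricted to $S$, invoking Theorem~\ref{thm:charac-bounded-loss} for correctness. The support-collapsing argument you single out as the crucial step is indeed the heart of the matter (the paper leaves it implicit), and your justification of it in both directions is sound.
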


\begin{proof}
	First, we check that $\multiset{\init}$ can cover $\multiset{\fin}$ over the continuous semantics, which can be done in polynomial time~\cite[Proposition 29]{FracaH15}.
	If this is not true, then by Theorem~\ref{thm:charac-bounded-loss}, we can immediately reject. Otherwise, using Lemma~\ref{lem:max-support-bounded}, in polynomial time
	we compute the maximum support $U$ of all the firing sequences $\tau$ such that $\multiset{\init}$ can cover $\multiset{\fin}$ using $\tau$. 
	We now check in polynomial time, if the marking equation over $(\net_\prot,\multiset{\init},\multiset{\fin})$ has a solution $\bx$ over the non-negative
	rationals such that $\bx_t = 0$ for any $t \notin U$. By Theorem~\ref{thm:charac-bounded-loss} such a solution exists if and only if the protocol admits
	a bounded-loss cut-off.
\end{proof}

\subsection{\texorpdfstring{\P}{P}-Hardness}
We now have the following lemma, which enables us to derive a \P-completeness result for the bounded-loss cut-off problem.

\begin{lem}
	The bounded-loss cut-off problem is \P-hard.
\end{lem}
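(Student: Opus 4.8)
The plan is to reuse verbatim the protocol $\prot$ and the logspace reduction from the Circuit Value Problem constructed in the proof of Lemma~\ref{lem:hardness-cutoff}, and to show that the \emph{same} reduction already decides the bounded-loss cut-off problem; that is, $v(out) = 1$ if and only if $\prot$ has the bounded-loss cut-off property. Since that reduction is already logspace and CVP is \P-hard, this suffices. The structural feature of $\prot$ I would exploit is that the only way to populate $\fin$ is ``global'': as established in Lemma~\ref{lem:hardness-cutoff}, $\fin$ is coverable if and only if $v(out) = 1$, and moreover whenever $\fin$ is coverable from $C_{\init}^k$ one in fact has $C_{\init}^k \xRightarrow{*} C_{\fin}^k$, thanks to the broadcast rules $(\fin,!c,\fin)$ and $(q,?c,\fin)$ that move every agent to $\fin$ once a single agent reaches it.

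For the forward direction, suppose $v(out) = 1$. By the analysis in Lemma~\ref{lem:hardness-cutoff} there is a threshold $B \in \nn$ with $C_{\init}^n \xRightarrow{*} C_{\fin}^n$ for all $n \ge B$. I claim this same $B$ is a bounded-loss cut-off. For $n \ge B$ take $D_n := C_{\fin}^n$, so that $D_n(\fin) = n \ge n - B$. For $n < B$ take $D_n := C_{\init}^n$; then $D_n(\fin) = 0 \ge n - B$ holds trivially because $n - B < 0$. Hence $\prot$ has the bounded-loss cut-off property.

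For the converse I would argue by contraposition. If $v(out) = 0$, then by Lemma~\ref{lem:hardness-cutoff} the state $\fin$ is not coverable from any initial configuration, so every configuration $D$ reachable from any $C_{\init}^n$ satisfies $D(\fin) = 0$. Consequently, for any candidate bound $B$, choosing any $n > B$ gives $D_n(\fin) = 0 < n - B$ for every reachable $D_n$, so no bounded-loss cut-off can exist. Combining the two directions yields $v(out) = 1 \iff \prot$ has a bounded-loss cut-off, completing the reduction and establishing \P-hardness.

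I do not anticipate a genuine obstacle here, since the heavy lifting (the coverability characterisation and the ``once-populated-then-all'' behaviour of $\fin$) is already proved in Lemma~\ref{lem:hardness-cutoff}. The one point needing care is the treatment of the small population sizes $n < B$ in the forward direction, where $\fin$ need not yet be coverable; this is harmless precisely because the condition $D_n(\fin) \ge n - B$ becomes vacuous once $n - B < 0$, so the same constant $B$ that serves as the coverability threshold also absorbs all the small cases. The crux is really the contrapositive: it rests on the observation that non-coverability of $\fin$ forces the loss of \emph{all} $n$ agents, and since $n$ ranges over all of $\nn$ this loss cannot be bounded by any fixed $B$.
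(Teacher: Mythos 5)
Your proposal is correct and takes essentially the same approach as the paper: both reuse the CVP reduction and protocol of Lemma~\ref{lem:hardness-cutoff} verbatim and derive correctness from the facts already established there (the coverability characterization of $v(out)=1$ and the fact that one agent in $\fin$ suffices to move everyone there). The only, immaterial, difference is in the backward direction: the paper shows that a bounded-loss cut-off implies a genuine cut-off via the broadcast rules and then invokes the earlier reduction, whereas you argue by contraposition that $v(out)=0$ makes $\fin$ non-coverable, so every reachable configuration has zero agents in $\fin$ and the loss is unbounded.
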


\begin{proof}
	Similar to the \P-hardness proof of the cut-off problem, we reduce from CVP. 
	Let $C$ be a Boolean circuit and let $v$ be an assignment to the input variables of $C$.
	We consider the same rendez-vous protocol $\prot$ that we constructed in the reduction for the cut-off problem in Lemma~\ref{lem:hardness-cutoff}. 
	We claim that 
	\medskip
	\begin{quote}
		$\prot$ has a cut-off if and only if $\prot$ has a bounded-loss cut-off.	
	\end{quote}
	\medskip 
	Note that if this claim is true, then by the reduction in Lemma~\ref{lem:hardness-cutoff}, this would immediately
	imply \P-hardness for the bounded-loss cut-off problem. We now proceed to prove this claim.
	
	Suppose $B$ is a cut-off for $\prot$. This means that for every $n \ge B$, $C^n_{\init} \reachN{*} C^n_{\fin}$.
	For $n < B$, let $D_n$ be $C^n_{\init}$ and for $n \ge B$, let $D_n$ be $C^n_{\fin}$. By definition of $B$, it follows that
	for every $n$, $C^n_{\init}$ can reach $D_n$ and $D_n(\fin) \ge n - B$. Hence, $B$ is also a bounded-loss cut-off for $\prot$.
	
	Suppose $B$ is a bounded-loss cut-off for $\prot$. Hence, for every $n$, there exists a marking $D_n$ such that $C^n_{\init}$ can reach $D_n$ and $D_n(\fin) \ge n - B$.
	In particular for any $n > B$, it follows that $C^n_{\init}$ can reach a marking $D_n$ such that $D_n(\fin) > 0$. By Remark~\ref{rem:cov-to-reach}, it follows
	that $C^n_{\init}$ can reach $C^n_{\fin}$. Hence, for every $n > B$, $C^n_{\init}$ can reach $C^n_{\fin}$ and so $B+1$ is a cut-off for $\prot$.
\end{proof}

\section{The cut-off and bounded-loss cut-off problems\texorpdfstring{\\}{ }for symmetric rendez-vous protocols}\label{sec:symmetric}
In ~\cite{HornS20}, Horn and Sangnier introduced symmetric rendez-vous protocols, where sending and receiving a message at each state has the same effect, and showed that the cut-off problem for this class of protocols is in \NP. We improve on their result and show that we can decide this problem in \NC.
We now formally define symmetric protocols.

\begin{defi}
A rendez-vous protocol $\prot = (Q,\Sigma,\init,\fin,R)$ is \emph{symmetric}
if its set of rules is symmetric under swapping $!a$ and $?a$ for every $a\in\Sigma$, i.e., for every $a \in \Sigma$, we have that $(q,!a,q') \in R$
if and only if $(q,?a,q') \in R$. 
\end{defi}

\begin{rem}
	Since $(q,!a,q') \in R$ if and only if $(q,?a,q') \in R$ for a symmetric protocol $\prot$,
	in the following, we will simply denote rules of a symmetric protocol as a tuple in $Q \times \Sigma \times Q$, with the understanding that
	$(q,a,q')$ denotes that there are two rules
	of the form $(q,!a,q')$ and $(q,?a,q')$ in the protocol. 	
\end{rem}

\begin{exa}
	Let us consider the symmetric protocol in Figure~\ref{fig:sym}, where the alphabet $\Sigma$ is taken to be $\{a,b,c\}$.
		\begin{figure}[h]
		\begin{center}
			\tikzstyle{node}=[circle,draw=black,thick,minimum size=7mm,inner sep=0.75mm,font=\normalsize]
			\tikzstyle{edgelabelabove}=[sloped, above, align= center]
			\tikzstyle{edgelabelbelow}=[sloped, below, align= center]
			\begin{tikzpicture}[->,node distance = 2cm, auto, thick]
				\node[node] (init) {\small{$\init$}};
				\node[node, right = of init] (q1) {$q_1$};
				\node[node, right = of q1] (fin) {\small{$\fin$}};
				
				\draw (init) edge[edgelabelabove] node[pos=0.5]{$a$} (q1); 
				\draw (q1) edge[edgelabelabove] node{$b$} (fin);
				\draw (fin) edge[edgelabelabove, loop above] node{$c$} (fin);
				
				\draw (init) edge[bend right=35,edgelabelbelow,] node{$c$} (fin);		
			\end{tikzpicture}
			\caption{A symmetric protocol}
			\label{fig:sym}
		\end{center}
	\end{figure}
	
	Notice that 2 is a cut-off for this protocol. Indeed, starting from any initial configuration of size at least 2, we can first move 2 agents from $\init$ to $\fin$ by making them pass
	through $q_1$. Once we have put these two agents in the $\fin$ state, we can move all the remaining agents from the $\init$ state to the $\fin$ state by means
	of the rules $(\fin,!c,\fin)$ and $(\init,?c,\fin)$.
\end{exa}

\subsection{Characterization of symmetric protocols admitting a cut-off.}
Let us fix a symmetric protocol $\prot = (Q,\Sigma,\init,\fin,R)$ for the rest of this section.
Horn and Sangnier proved the following nice characterization of symmetric protocols that admit a cut-off.

\begin{propC}[{\cite[Lemma 18]{HornS20}}]~\label{prop:even-odd}
	The protocol $\prot$ admits a cut-off if and only if
	there exists an even number $e$ and an odd number $o$
	such that $C^e_{\init}$ can reach $C^e_\fin$
	and $C^o_\init$ can reach $C^o_\fin$.
\end{propC}

We will now translate this characterization into one that is more amenable to algorithmic analysis.
To begin with, we use the symmetric protocol $\prot$ 
to define a graph $\graph_\prot$ whose 
vertices are the states of $\prot$ and there is an edge between $q$ and $q'$ in $\graph_\prot$ if and only if there exists $a \in \Sigma$
such that $(q,a,q') \in R$. The following lemma is immediate from the definition of $\prot$.
\medskip
\begin{lem}~\label{lem:graph-reach}
	There exists $k \in \nn$ such that $C^{2k}_\init$ can reach $C^{2k}_\fin$ in $\prot$
	if and only if there is a path from $\init$ to $\fin$ in the graph $\graph_\prot$.
\end{lem}

\begin{proof}
	The left-to-right implication follows from the definition of $\graph_\prot$. For the other direction,
	suppose there is a path $\init, q_1, q_2, \dots, q_{m-1}, \fin$
	in the graph $\graph_\prot$. Then notice that
	$\multiset{2 \cdot \init} \Rightarrow \multiset{2 \cdot q_1} \Rightarrow \multiset{2 \cdot q_2} \Rightarrow \dots \Rightarrow \multiset{2 \cdot q_{m-1}} \Rightarrow \multiset{2 \cdot \fin}$ is 
	a valid run of $\prot$.
\end{proof}

Intuitively, the above lemma takes care of the ``even'' case in the characterization given in Proposition~\ref{prop:even-odd}. 
To handle the ``odd'' case, we first need a couple of definitions.

A state $q$ of $\prot$ will be called \emph{good} if there is a path from $\init$ to $q$ and a path from~$q$ to~$\fin$ in the graph $\graph_\prot$.
A state which is not good is \emph{bad}. Given the protocol $\prot$, we consider the Petri net $\net_\prot = (Q,T,Pre,Post)$ with incidence matrix $\incidence = Post-Pre$
that we constructed in Section~\ref{sec:prelim}. Note that the set of places of $\net_\prot$ is the set of states of $\prot$. 
A transition $t$ of $\net_\prot$ is called \emph{useless} if $\pre{t} \cup \post{t}$ contains a bad state.
We now have the following propositions about good states.
\medskip

\begin{prop}~\label{prop:impor}
	If $q$ is good, then $\multiset{2 \cdot \init} \xrightarrow{*} \multiset{2 \cdot q}$ and $\multiset{2 \cdot q} \xrightarrow{*} \multiset{2 \cdot \fin}$ in 
	$\net_\prot$.
\end{prop}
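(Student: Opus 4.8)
The plan is to reuse the single-step observation that drives the proof of Proposition~\ref{prop:graph-reach}, but apply it to arbitrary endpoints rather than only to $\init$ and $\fin$. The key claim is that whenever there is an edge between two states $s$ and $s'$ in the graph $\gr(\prot)$, we have $\multiset{2 \cdot s} \rightarrow \multiset{2 \cdot s'}$. Indeed, an edge between $s$ and $s'$ means that some rule over a letter $a \in \Sigma$ connects $s$ to $s'$; by the symmetry of $\prot$ this yields both $(s,!a,s') \in R$ and $(s,?a,s') \in R$. Then, taking two agents in state $s$, one can send $a$ and the other receive $a$, with both moving to $s'$: instantiating the transition rule with sender and receiver source state $s$ and target state $s'$ gives $\multiset{2 \cdot s} - \multiset{s,s} + \multiset{s',s'} = \multiset{2 \cdot s'}$.

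Next I would lift this to paths. Since $q$ is good, by definition there is a path $\init = s_0, s_1, \dots, s_r = q$ in $\gr(\prot)$. Applying the single-step observation to each consecutive pair $(s_i, s_{i+1})$ and composing the moves, I obtain
\[\multiset{2 \cdot \init} = \multiset{2 \cdot s_0} \rightarrow \multiset{2 \cdot s_1} \rightarrow \cdots \rightarrow \multiset{2 \cdot s_r} = \multiset{2 \cdot q},\]
and hence $\multiset{2 \cdot \init} \xrightarrow{*} \multiset{2 \cdot q}$. Goodness of $q$ also guarantees a path from $q$ to $\fin$, so the identical composition yields $\multiset{2 \cdot q} \xrightarrow{*} \multiset{2 \cdot \fin}$, which together establish both halves of the proposition.

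There is essentially no hard part here: the statement is a direct generalization of Proposition~\ref{prop:graph-reach} from the fixed endpoints $\init, \fin$ to an arbitrary good state $q$, and the composition along a path is routine. The one point that genuinely needs the hypotheses is the justification that a single graph edge induces the paired move $\multiset{2 \cdot s} \rightarrow \multiset{2 \cdot s'}$, which is precisely where the symmetric definition of $R$ (both polarities of each rule present) is used; without symmetry the edge would only guarantee a rule of one polarity and the synchronized two-agent step would fail.
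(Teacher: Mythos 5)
Your proof is correct and matches the paper's intent exactly: the paper states Proposition~\ref{prop:impor} without explicit proof, noting it ``easily follows from the symmetric nature of $\prot$,'' and the argument it relies on is precisely the one you spell out --- the paired two-agent step along each edge (valid because symmetry provides both $(s,!a,s')$ and $(s,?a,s')$) composed along the paths from $\init$ to $q$ and from $q$ to $\fin$ guaranteed by goodness, just as in the proof of Proposition~\ref{prop:graph-reach}.
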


\begin{proof}
	Since $q$ is good, there are paths $\init,p_1,\dots,p_n,q$ and $q,q_1,\dots,q_m,\fin$ in $\graph_\prot$. By definition of symmetric protocols,
	it follows that $\multiset{2 \cdot \init} \Rightarrow \multiset{2 \cdot p_1} \Rightarrow \dots \Rightarrow \multiset{2 \cdot q}$ 
	and $\multiset{2 \cdot q} \Rightarrow \multiset{2 \cdot q_1} \Rightarrow \dots \Rightarrow \multiset{2 \cdot \fin}$ are valid runs of $\prot$. 
	By Proposition~\ref{prop:equiv}, these two runs are also valid in the Petri net $\net_{\prot}$.
\end{proof}

The next proposition intuitively asserts that in any run from an initial configuration to a final configuration, only good states may occur.
\medskip 

\begin{prop}~\label{prop:init-fin-graph}
	Suppose $C$ is such that $C^n_{\init} \xrightarrow{*} C$ (resp. $C \xrightarrow{*} C^n_\fin$) for some $n$. 
	Then for every $q$ such that $C(q) > 0$, there is a path from $\init$ to $q$ (resp. from $q$ to $\fin$) in the graph $\graph_\prot$.
\end{prop}

\begin{proof}
	In both of these cases, we can prove the claim by induction on the length of the underlying run.
\end{proof}

The following lemma now allows us to handle the ``odd'' case in the characterization of symmetric protocols that admit a cut-off.
\medskip 
\begin{lem}~\label{lem:Boolean-field}
	There exists $k \in \nn$ such that $C_{\init}^{2k+1} \xrightarrow{*} C_{\fin}^{2k+1}$ if and only if the marking equation for $(\net_{\prot},\multiset{\init},\multiset{\fin})$ has a solution $\bx$ over the field $\mathbb{F}_2$ such that $\bx[t] = 0$ for every useless transition $t$.
\end{lem}

\begin{proof}
	We first provide an intuition behind the proof. The left-to-right implication is true because we can perform a 
	``modulo 2'' operation on both sides of the marking equation. 
	For the other direction, we use an idea similar to the Insertion Lemma~(Lemma~\ref{lem:cut-paste}).
	Let $\bx$ be a solution to the marking equation over $\mathbb{F}_2$ such that $\bx[t] = 0$ for every useless transition $t$.
	Using Proposition~\ref{prop:impor}, we first
	populate all the good states of $Q$ with enough agents
	such that all the good states except $\init$ have an even number of agents.
	Then, we fire exactly once all the transitions $t$ such that $\bx[t] = 1$.
	Since $\bx$ satisfies the marking equation over $\mathbb{F}_2$, we can now argue that
	in the resulting configuration, the number of agents at each
	bad state is 0 and the number of agents in each good state except
	$\fin$ is even. Hence, we can once again use Proposition~\ref{prop:impor}
	to conclude that we can move all the agents which are not at $\fin$
	to the final state $\fin$. We now proceed to the formal proof.

	\medskip \noindent $(\Rightarrow)$: Suppose there exists $k \in \nn$
	and a firing sequence $\sigma$ such that
	$C_{\init}^{2k+1} \xrightarrow{\sigma} C_{\fin}^{2k+1}$. We claim that $\parikh{\sigma}[t] = 0$ for every useless transition $t$.
	For the sake of contradiction, suppose $\parikh{\sigma}[t] > 0$ for some useless transition $t$. Let $\sigma = \sigma' t \sigma''$ and let
	$C_{\init}^{2k+1} \xrightarrow{\sigma'} C \xrightarrow{t} C' \xrightarrow{\sigma''} C_{\fin}^{2k+1}$. By definition of a useless transition,
	it follows that there is a bad state $q$ such that either $C(q) > 0$ or $C'(q) > 0$. However, this is a direct contradiction to Proposition~\ref{prop:init-fin-graph}.
	Hence, $\parikh{\sigma}[t] = 0$ for every useless transition $t$.

	By the marking equation it follows that $C_{\fin}^{2k+1} = C_{\init}^{2k+1} + \incidence \parikh{\sigma}$. Taking modulo 2 on both sides of this
	equation, we have that $\parikh{\sigma} \bmod 2$
	is a solution to the marking equation over the field $\mathbb{F}_2$.
	
	\medskip \noindent $(\Leftarrow)$: Suppose the marking equation has a solution $\bx$ over $\mathbb{F}_2$ such that $\bx[t] = 0$ for every useless transition $t$. 
	Let $T'$ be the set of all transitions $t$ such that $\bx[t] = 1$. Let $n = 2|Q|+3$.
	We will now construct a run from $C_{\init}^n$ to $C_{\fin}^n$ as follows:
	
	\paragraph*{\textbf{First stage - Saturate: } } 
	Let $G \subseteq Q$ be the set of good states and
	let $\ell = 2|Q| - 2|G|$.
	By Proposition~\ref{prop:impor} and the monotonicity property, from $C_{\init}^n$
	we can reach the marking $\mathtt{First} := \multiset{(2l+3) \cdot \init} + \sum_{q \in G} \ \multiset{2q}$. 
	
	\paragraph*{\textbf{Second stage - Insert: } } 
	Since $\bx[t] = 0$ for every useless transition $t$, it follows that for every transition $t \in T'$, $\pre{t} \cup \post{t} \subseteq G$.
	Hence, every transition $t \in T'$ is enabled 
	at $\mathtt{First}$. Therefore,
	we can fire all these transitions exactly once, in any order,
	from $\mathtt{First}$ to reach some
	marking $\mathtt{Second}$. Since we fire only transitions from $T'$, it follows that $\mathtt{Second}(q) = 0$ for any bad state $q$.

	We now claim that $\mathtt{Second}(q)$ is even for every $q \neq \fin$ and $\mathtt{Second}(\fin)$ is odd.
	Indeed, suppose $q \notin \{\init,\fin\}$.
	Since we fired each transition in $T'$ exactly once to go from $\mathtt{First}$ to $\mathtt{Second}$ and since $T' = \{t : \bx[t] = 1\}$,
	by the marking equation for $(\net_\prot,\mathtt{First},\mathtt{Second})$ it follows that $\mathtt{Second}(q)-\mathtt{First}(q) = \sum_{t \in T} \incidence[q,t] \cdot \bx[t]$. 
	Since $\bx$ is also a solution to the marking equation for $(\net_\prot,\multiset{\init},\multiset{\fin})$ over the field $\mathbb{F}_2$,
	it follows that $\sum_{t \in T} \incidence[q,t]~\cdot~\bx[t]~\equiv~0 \bmod 2$, and so we can conclude that
	$\mathtt{Second}(q)-\mathtt{First}(q)$ is even.
	Since $\mathtt{First}(q)$ is even, it follows that
	so is $\mathtt{Second}(q)$. 
	Similarly we can argue that $\mathtt{Second}(\init)$ is even
	and $\mathtt{Second}(\fin)$ is odd.
	
	\paragraph*{\textbf{Third stage - Desaturate: } } Hence, we have shown that
	$\mathtt{Second}(q) = 0$ for any bad state $q$ and
	$\mathtt{Second}(q)$ is even for every good state $q \neq \fin$.
	By Proposition~\ref{prop:impor} and by the monotonicity property, it follows that
	from $\mathtt{Second}(q)$ we can reach the configuration
	$C_{\fin}^{n}$, thereby constructing the required run.
\end{proof}

\begin{exa}
	Consider the symmetric protocol $\prot$ from Figure~\ref{fig:sym}. We have seen that 2 is a cut-off for this protocol. The Petri net
	associated with this protocol is given in Figure~\ref{fig:sym-net}. 
		\begin{figure}[ht]
		\begin{center}
			\begin{tikzpicture}[->, node distance=2cm, auto, thick]
				\node[place] (init) {};
				\node[transition, right of =init] (t1) {};
				\node[place, right of=t1] (q1) {};
				\node[transition, right of=q1] (t2) {};
				\node[place, right of=t2] (fin) {};
				\node[transition, below of=t2, yshift = -0cm] (t3) {};
				\node[transition, above of=q1, yshift = -0cm] (t4) {};
				\node[transition, right of=fin] (t5) {};
				
				\path[->]
				(init) edge node {2} (t1)
				(t1) edge node {2} (q1)
				(q1) edge node {2} (t2)
				(t2) edge node {2} (fin)
				(fin) edge[bend right = 15] node {} (t3) 
				(init) edge[bend right = 30] node {} (t3)
				(t3) edge[bend right = 15] node[below, xshift = 0.1cm] {2} (fin)
				(init) edge[bend left = 15] node {2} (t4)
				(t4) edge[bend left = 15] node {2} (fin)
				(fin) edge [bend left = 15] node {2} (t5)
				(t5) edge [bend left = 15] node {2} (fin)
				;
				
				\node[] () [above= -1pt of init] {$\init$};
				\node[] () [above= -1pt of q1] {$q_1$};
				\node[] () [above= -1pt of fin] {$\fin$};
				\node[] () [above= -1pt of t1] {$t_1$};
				\node[] () [above= -1pt of t2] {$t_2$};
				\node[] () [below = -1pt of t3] {$t_3$};
				\node[] () [above = -1pt of t4] {$t_4$};
				\node[] () [right = -1pt of t5] {$t_5$};
				
			\end{tikzpicture}
			\caption{Petri net corresponding to the protocol from Figure~\ref{fig:sym}}
			\label{fig:sym-net}
		\end{center}
	\end{figure}
	We now show that the conditions of Lemmas~\ref{lem:graph-reach} and~\ref{lem:Boolean-field} are satisfied in this protocol. Indeed, it is easy to see that there is a path from $\init$ to $\fin$ in the graph of $\prot$ and that there is no bad state. 
	Further, notice that the entries in the incidence matrix $\incidence$ of this Petri net are all either 0, 2 or $-2$, except for $\incidence[\init,t_3]$ and $\incidence[\fin,t_3]$
	which are -1 and 1 respectively. It follows that if we were to write down the marking equation for the markings $\multiset{\init}$ and $\multiset{\fin}$ over the field
	$\mathbb{F}_2$, then the only non-trivial equations that we would get are $0 = 1 -\vect_{t_3}$ and $1 = \vect_{t_3}$, which are the equations corresponding to the
	places $\init$ and $\fin$ respectively. Hence if we set $\bx = (0,0,1,0,0)$, then $\bx$ is a solution to the marking equation over the field $\mathbb{F}_2$. Since
	there are no useless transitions in this net, it follows that the conditions of Lemmas~\ref{lem:graph-reach} and~\ref{lem:Boolean-field} are satisfied.

\end{exa}

\subsection{An \NC \ algorithm.} We now use Proposition~\ref{prop:even-odd} and Lemmas~\ref{lem:graph-reach} and~\ref{lem:Boolean-field} to give an \NC \ algorithm for the
cut-off problem for symmetric protocols. Recall that \NC \ is the class of problems that can be solved by \emph{parallel random access machines} (PRAMs) in polylogarithmic parallel time
with a polynomially many processors~\cite{Papadimitriou}. Intuitively, a PRAM is a parallel computer with some number of random access machines $M_1,M_2,\dots,M_k$, also known as processors, and a central memory which can be accessed by any of these
processors. These processors have random access to any of the cells in the central memory and they can all read and write to these cells as dictated by their instruction set.
A single step of the PRAM consists of all of these processors executing a single step, with contradictory writes to the same cell avoided by adopting the convention that
the write of the processors with the smallest index prevails. A uniform family of polynomial-sized PRAMS is a set $\{P_n : n \in \nn\}$ such that each $P_n$ only has a polynomial 
number of processors in $n$ and there is a logarithmic space Turing machine,
which when given input $1^n$ can output $P_n$. A language $L$ is said to be in \NC \ if there is a uniform family of polynomial-sized PRAMS $\{P_n : n \in \nn\}$ such that
given input $x$, the PRAM $P_{|x|}$ correctly decides if $x \in L$ and the number of steps $P_{|x|}$ takes is polylogarithmic in the size of $|x|$. Similarly, we can
define the notion of a uniform family of polynomial-sized PRAMS computing a function, and hence a reduction between two languages. 

From the definition of \NC, it follows that this class is closed under intersection of languages. Further, it also follows that if $L \in \NC$ and there is a \NC-reduction
from another language $L'$ to $L$, then $L' \in \NC$. With these two facts in mind, by Proposition~\ref{prop:even-odd}, to prove that the cut-off problem for symmetric protocols
is in \NC, it suffices to give two \NC \ algorithms: one to check if there is an even number $e$ such that $C^e_\init$ can reach $C^e_\fin$ and another to check
if there is an odd number $o$ such that $C^o_\init$ can reach $C^o_\fin$.

\subsubsection*{Algorithm for the even case. } We now give an \NC \ algorithm to determine if there is an even number $e$ such that $C^e_\init$ can reach $C^e_\fin$.
To do this, we show that given the protocol $\prot$ as input, we can construct the graph $\graph_\prot$ using an \NC \ algorithm. 
By Lemma~\ref{lem:graph-reach}, we have then reduced the problem to an instance of graph reachability, which is in \NC~\cite{Papadimitriou}. Since \NC \ is closed under \NC-reductions,
the required claim will then follow.

Let $\prot = (Q,\Sigma,\init,\fin,R)$. We assume that $\prot$ is given as input in the central memory in some fixed encoding, i.e., the first $|Q|$ cells contain the 
encoding of the states, the next $|\Sigma|$ cells contain the encoding of the alphabet and so on.
To construct $\graph_\prot$, we will have $n+n^2 \cdot |\Sigma|$ processors working in parallel, where $n = |Q|$.
The $i^{th}$ processor in this collection will simply write down the encoding of the $i^{th}$ state onto the memory.
Each of the other $n^2 \cdot |\Sigma|$ processors uniquely correspond to an element from $Q \times \Sigma \times Q$.
A processor responsible for the triple consisting of the $i^{th}$ and the $j^{th}$ states and the $k^{th}$ letter 
first checks if there is a rule in $R$ for this triple by means of random access to the input. Depending on this check, it will then write down whether or not there is an edge in $\graph_\prot$ corresponding
to the $i^{th}$ and the $j^{th}$ states. 
Notice that we only have a polynomial number of processors and each of these processors only performs a polylogarithmic amount of work.
Hence, we have shown how to construct $\graph_\prot$ using an \NC \ algorithm, which completes the proof of the claim.

\subsubsection*{Algorithm for the odd case. } We now give an \NC \ algorithm to determine if there is an odd number $o$ such that $C^o_\init$ can reach $C^o_\fin$.
First note that similar to the \NC \ algorithm which constructs $\graph_\prot$ from $\prot$, we can prove that there is an \NC \ algorithm
which constructs $\net_\prot$ from $\prot$. So we can assume that we have access to both $\graph_\prot$ and $\net_\prot$ for our \NC \ algorithm.
We now show that there is an \NC \ algorithm which allows us to identify the set of all bad states and useless transitions of $\net_\prot$. 

Let $n = |Q|$. Suppose the \NC \ algorithm to decide graph reachability on $n$ vertices requires $f(n)$ processors and takes $g(n)$ (parallel) steps.
Then, to compute the set of bad states of $\graph_\prot$, we take $2nf(n)$ processors. For each $i$, the $i^{th}$ collection of $f(n)$ processors will decide
if there is a path from $\init$ to the $i^{th}$ state of $\prot$ in $\graph_\prot$ and the $(n+i)^{th}$ collection of $f(n)$ processors will decide
if there is a path from the $i^{th}$ state of $\prot$ to $\fin$ in $\graph_\prot$. Notice that each collection of $f(n)$ processors can work in a parallel manner
and hence the overall (parallel) time of this algorithm will still be $g(n)$. Finally, we will have $n$ processors, one for each state, which will wait
for the result from the processors responsible for graph reachability and based on this decide if the state corresponding to themselves is bad or not.

Once we have computed the set of bad states, computing the set of useless transitions is easy.
The shared memory will have one bit for each transition of $\net_\prot$, initially set to 0.
For each transition $t$ and each place $p$ of $\net_\prot$, 
we will have one processor which will check if $p \in \pre{t} \cup \post{t}$ and also check if $p$ is a bad state. If this is the case, then that process
changes the bit corresponding to $t$ to 1 and otherwise, it does nothing. 

Finally, we show that given $\net_\prot$ and its set of useless transitions, there is an \NC \ algorithm which constructs the marking
equation for $(\net_\prot,\multiset{\init},\multiset{\fin})$ and additionally adds constraints specifying that the solution must be 0 on all useless transitions.
Composing this \NC \ algorithm, with the ones for constructing $\net_\prot$ and useless transitions, by Lemma~\ref{lem:Boolean-field}, 
we have then reduced the problem to an instance of solving equations over the field $\mathbb{F}_2$, which is in \NC~\cite{BooleanEqns}  and so the required
claim will then follow.

To construct the marking equation, we will have one processor responsible for each entry of the equation. We will have $n$ processors for constructing the column corresponding
to the marking $\multiset{\init}$ and another $n$ processors for the column corresponding to the marking $\multiset{\fin}$. Further, for each transition $t$ and each place $p$, we will have
one processor for outputting $\incidence[p,t]$. This processor will consult the memory and get the values of $Pre[p,t]$ and $Post[p,t]$ and will then compute
$\incidence[p,t]$. Finally, we will have one processor for each transition $t$, which will check if $t$ is a useless transition (by consulting the memory)
and then depending on this check, will write down a constraint specifying that the solution to the marking equation must have value 0 corresponding to the index $t$. This completes
the construction of the \NC \ algorithm and so we get,

\begin{thm}
	The cut-off problem for symmetric protocols is in \NC.
\end{thm}

\subsection{The bounded-loss cut-off problem. }  We can also consider the bounded-loss cut-off problem for symmetric protocols. 
However, in the case of symmetric protocols, this problem becomes rather easy. More specifically, the bounded-loss cut-off problem is \NL-complete for symmetric protocols.

Indeed, for a symmetric protocol to have a bounded-loss cut-off, it is easy to see that there must be 
a path from $\init$ to $\fin$ in the graph $\graph_\prot$. Moreover, this is also a sufficient condition, because if such a path exists, 
then by Proposition~\ref{prop:impor} and the monotonicity property, for every $k \ge 1$, $\multiset{(2k-1) \cdot \init} \act{*} \multiset{2k-2 \cdot \fin} + \multiset{\init}$ and
$\multiset{2k \cdot \init} \act{*} \multiset{2k \cdot \fin}$ and so we can simply take the 
bounded-loss cut-off to be $1$. Notice that we can compute~$\graph_\prot$ from $\prot$ by a reduction which uses a logarithmic amount of space.
This proves that the bounded-loss cut-off problem for symmetric protocols reduces to graph reachability which is known to be \NL-complete.
	
Moreover, given a graph reachability instance $(G,s,t)$, we can interpret it in a straightforward manner as the graph of a symmetric protocol $\prot$ whose communication alphabet is a single letter and where the initial state is $s$ and the final state is $t$. 
This then immediately implies that $\prot$ would have a bounded-loss cut-off if and only if $s$ can reach $t$ in the graph $G$, 
which leads to \NL-hardness of the bounded-loss cut-off problem for symmetric protocols.

\section{The cut-off problem for symmetric rendez-vous protocols with a leader}\label{sec:symmetric-leader}
In~\cite{HornS20}, Horn and Sangnier proposed an extension to symmetric rendez-vous protocols by adding a special
agent called a leader. They defined the cut-off problem for that extension and showed that it is in \PSPACE.
We improve on their result and and prove that the problem is \NP-complete. We proceed by introducing this new model and
its cut-off problem.

\begin{defi}
	A symmetric rendez-vous protocol with a leader, or simply a \emph{symmetric leader protocol}, is a pair of symmetric protocols
	$\prot = (\prot^L,\prot^F)$ where 
	$\prot^L = (Q^L,\Sigma,\init^L,\fin^L,R^L)$ is the \emph{leader protocol}
	and $\prot^F = (Q^F,\Sigma,\init^F,\fin^F,R^F)$ is the \emph{follower protocol}
	such that $Q^L \cap Q^F = \emptyset$.
\end{defi}

Intuitively, we will have exactly one agent in our collection executing the leader protocol and all the other agents will execute the follower protocol.
To this end, we define a \emph{configuration} of a symmetric leader protocol $\prot$ as a multiset over $Q^L \cup Q^F$ such that $\sum_{q \in Q^L} C(q) = 1$.
For each $n \in \nn$, let $C_{\init}^n$ (resp.\ $C_{\fin}^n$) denote
the initial (resp.\ final) configuration 
of $\prot$ given by 
$C(\init^L) = 1$ (resp.\ $C(\fin^L) = 1$) and 
$C(\init^F) = n$ (resp.\ $C(\fin^F) = n$). 

Suppose $a$ is a message in $\Sigma$ and $r = (p,a,p')$ and $r' = (q,a,q')$ are two rules of $R^L \cup R^F$.
For any two configurations $C$ and $C'$, we say that $C \xRightarrow{r,r'} C'$ if 
$C \ge \multiset{p,q}$ and $C' = C - \multiset{p,q} + \multiset{p',q'}$.
Since we allow exactly one agent to execute the leader protocol,
given a configuration $C$, we let $\lead(C)$ denote
the unique state $q$ of the leader protocol such that $C(q) > 0$.
\emph{The cut-off problem for symmetric leader protocols} is defined as the following decision problem.\medskip

\begin{quote}
	\textit{Given: } A symmetric leader protocol $\prot = (\prot^L,\prot^F)$.\\
	\textit{Decide: } If
	there exists $B \in \nn$ such that for all $n \ge B$,
	$C_{\init}^n \xRightarrow{*} C_{\fin}^n$.
\end{quote}

If such a $B$ exists, then we say that $\prot$ admits a cut-off and that $B$ is a cut-off for $\prot$.

\begin{exa}
	Let us consider the symmetric leader protocol $\prot = (\prot^L,\prot^F)$ where $\prot^L$ is given in Figure~\ref{fig:sym-lead} and $\prot^F$ is given in Figure~\ref{fig:sym-follo}.
	Note that the follower protocol is obtained from the one in Figure~\ref{fig:sym}, by removing the self-loop at the state $\fin$.
	
	\begin{figure}[h]
		\begin{center}
			\tikzstyle{node}=[circle,draw=black,thick,minimum size=7mm,inner sep=0.75mm,font=\normalsize]
			\tikzstyle{edgelabelabove}=[above, align= center]
			\tikzstyle{edgelabelbelow}=[below, align= center]
			\begin{tikzpicture}[->,node distance = 2cm, auto, thick]
				\node[node] (p3) {$p_3$};
				\node[node, left = of p3, xshift = -1cm] (init) {\small{$\init^L$}};
				\node[node, above left = of p3] (p1) {$p_1$};
				\node[node, below left = of p3] (p2) {$p_2$};
				\node[node, above right = of p3] (p4) {$p_4$};
				\node[node, below right = of p3] (p5) {$p_5$};
				\node[node, right = of p3, xshift= 1cm] (fin) {\small{$\fin^L$}};
				
				\draw(init) edge[edgelabelabove] node{$c$} (p1);
				\draw(init) edge[edgelabelabove] node{$c$} (p2);
				\draw(p1) edge[edgelabelabove] node{$c$} (p3);
				\draw(p2) edge[edgelabelabove] node{$c$} (p3);
				\draw(p3) edge[edgelabelabove] node{$c$} (p4);
				\draw(p3) edge[edgelabelabove] node{$c$} (p5);
				\draw(p4) edge[edgelabelabove] node{$c$} (fin);
				\draw(p5) edge[edgelabelabove] node{$c$} (fin);
				\draw(p1) edge[edgelabelabove, loop above] node{$a$} (p1);
				\draw(p2) edge[edgelabelabove, loop below] node{$b$} (p2);
				\draw(p4) edge[edgelabelabove, loop above] node{$a$} (p4);
				\draw(p5) edge[edgelabelabove, loop below] node{$b$} (p5);
			\end{tikzpicture}
			\caption{The leader protocol}
			\label{fig:sym-lead}
		\end{center}
	\end{figure}

	\begin{figure}[h]
		\begin{center}
			\tikzstyle{node}=[circle,draw=black,thick,minimum size=7mm,inner sep=0.75mm,font=\normalsize]
			\tikzstyle{edgelabelabove}=[sloped, above, align= center]
			\tikzstyle{edgelabelbelow}=[sloped, below, align= center]
			\begin{tikzpicture}[->,node distance = 2cm, auto, thick]
				\node[node] (init) {\small{$\init^F$}};
				\node[node, right = of init] (q1) {$q_1$};
				\node[node, right = of q1] (fin) {\small{$\fin^F$}};
				
				\draw (init) edge[edgelabelabove] node[pos=0.5]{$a$} (q1); 
				\draw (q1) edge[edgelabelabove] node{$b$} (fin);
				
				\draw (init) edge[bend right=35,edgelabelbelow,] node{$c$} (fin);		
			\end{tikzpicture}
			\caption{The follower protocol}
			\label{fig:sym-follo}
		\end{center}
	\end{figure}
	
	We claim that 4 is a cut-off for this protocol. Indeed, notice that by only using the message $c$ we have the run $\multiset{\init^L + 4 \cdot \init^F} \Rightarrow 
	\multiset{p_1 + 3 \cdot \init^F + \fin^F} \Rightarrow \multiset{p_3 + 2 \cdot \init^F + 2 \cdot \fin^F} \Rightarrow \multiset{p_4 + \init^F + 3 \cdot \fin^F} \Rightarrow 
	\multiset{\fin^L + 4 \cdot \init^F}$. In addition to these steps, if we also make the leader take the rules corresponding to the messages $a$ and $b$ at $p_1$ and $p_5$ respectively we have the run
	$\multiset{\init^L + 5 \cdot \init^F} \Rightarrow \multiset{p_1 + 4 \cdot \init^F + \fin^F} \Rightarrow \multiset{p_1 + 3 \cdot \init^F + q_1 + \fin^F}
	\xRightarrow{*} \multiset{p_5 + \init^F + q_1 + 3 \cdot \fin^F} \Rightarrow \multiset{p_5 + \init^F  + 4 \cdot \fin^F} \Rightarrow\multiset{\fin^L + 5 \cdot \fin^F}$.
	
	If $n \ge 4$ and $n$ is even (resp. odd), notice that $C^n_{\init}$ can reach the configuration $C^4_{\init} + \multiset{n-4 \cdot \fin^F}$ (resp. $C^5_{\init} + \multiset{n-5 \cdot \fin^F}$) by using the rules in the follower protocol corresponding to the message $c$. From this observation and from the above paragraph, it follows that 4 is a cut-off for this symmetric leader protocol.
	
	Further, suppose $\rho := C^{2k+1}_{\init} \xRightarrow{r_1,r_1'} C_1 \xRightarrow{r_2, r_2'} C_2 \dots \xRightarrow{r_w,r_w'} C_{\fin}^{2k+1}$ is a valid run for some $k$. 
	We shall now give an informal sketch of a proof that there must be an occurrence of at least one of the rules in $\{(p_1,a,p_1), (p_4,a,p_4)\}$ along $\rho$.
	Let $S:= \{(\init^L,c,p_i), (p_i,c,p_3) : 1 \le i \le 2\} \cup \{(p_3,c,p_i),(p_i,c,\fin^L): 4 \le i \le 5\}$. 
	By the construction of the leader protocol, it follows that there are exactly 4 occurrences of rules from $S$ in the run $\rho$.
	For every $i$, notice that the rule $(\init^F,c,\fin^F)$ can appear exactly once in the pair $(r_i,r_i')$ if and only if the other rule which
	appears in the pair is in $S$. It then follows that the number of occurrences of $(\init^F,c,\fin^F)$ along $\rho$ is even.
	Since we begin with an odd number of agents in $\init^F$, it follows that an odd number of agents must have taken the rule $(\init^F,a,q_1)$ along $\rho$.
	Once again notice that for every $i$, the rule $(\init^F,a,q_1)$ can appear exactly once in the pair $(r_i,r_i')$ if and only if the other
	rule which appears in the pair is in $\{(p_1,a,p_1),(p_4,a,p_4)\}$. Since $(\init^F,a,q_1)$ is fired by an odd number of agents, it follows
	that there must be an occurrence of at least one of the rules in $\{(p_1,a,p_1), (p_4,a,p_4)\}$ along $\rho$.
	
	A generalization of this protocol along with the argument given above will be useful to prove the lower bound on the cut-off problem in Subsection~\ref{subsec:leader-lower-bound}.
\end{exa}

The main theorem of this section is that
\begin{thm}
	\label{th:sym-leader-NPc}
	The cut-off problem for symmetric leader protocols is \NP-complete.
\end{thm}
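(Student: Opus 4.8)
The plan is to establish membership in \NP\ and \NP-hardness separately. For the upper bound, the key insight is Proposition~\ref{prop:even-odd-leader}, which reduces the cut-off problem to finding one even witness $e$ and one odd witness $o$ with $C_{\init}^e \xRightarrow{*} C_{\fin}^e$ and $C_{\init}^o \xRightarrow{*} C_{\fin}^o$. I would mimic the structure developed for leaderless symmetric protocols in Section~\ref{sec:symmetric}, but now the presence of a single leader breaks the clean ``everything comes in pairs'' argument, so the parity bookkeeping must track the leader separately. The idea is to encode a witness execution by a Parikh-style vector over transitions together with a guess of the leader's trajectory through $Q^L$, and then verify feasibility using a system of linear (in)equations. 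Concretely, I would construct the Petri net $\net_\prot$ for the combined state space $Q^L \cup Q^F$, guess a polynomial-size certificate (the support of the transitions used, or the leader's path through $Q^L$), and write down the marking equation constraints enforcing that the net effect moves all followers from $\init^F$ to $\fin^F$ and the leader from $\init^L$ to $\fin^L$. Since an \NP\ machine may guess and then check feasibility of a linear system in polynomial time, this yields the \NP\ upper bound, provided I can show that a witness of polynomial certificate size always exists when a cut-off exists.

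For that existence claim I would lean on the Scaling and Insertion Lemmas (Lemmas~\ref{lem:fhshort} and~\ref{lem:cut-paste}) exactly as in the general cut-off argument of Theorem~\ref{th:main-general}, combined with the symmetric parity structure. The even case is handled by graph reachability as in Proposition~\ref{prop:graph-reach}, adapted so that the leader follows a genuine path in its own graph while the followers move in pairs; this contributes only a polynomial-size guess. The odd case is where the single leftover follower plus the leader must be accounted for: here I would guess the transitions fired an odd number of times and solve the marking equation over the integers (or over $\mathbb{F}_2$ together with an integrality/nonnegativity check) to certify that a valid odd-size execution exists. The crucial point is that by the corollary following Theorem~\ref{th:main-general}, whenever a cut-off exists it exists with polynomially many bits, so the witness $n$ and the associated firing data can be described succinctly and guessed by the \NP\ machine.

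For \NP-hardness I would reduce from a standard \NP-complete problem; the natural candidate is \textbf{3SAT} or \textbf{Subset-Sum}, exploiting the fact that a single leader can act as a sequential ``control'' process that nondeterministically commits to truth assignments while the followers serve as counters or tokens that record and verify the choices. The leader, being symmetric, can still broadcast-style interact with followers to force them into states encoding variable values, and the requirement that the leader reach $\fin^L$ only along a satisfying computation path encodes the formula. I expect the main obstacle to be the \NP-hardness construction rather than the upper bound: because the protocol is \emph{symmetric}, I cannot freely use the asymmetry of $!a$ versus $?a$ to enforce one-directional control flow, so I must engineer the gadgets so that the symmetric rules still simulate a deterministic-enough check of the guessed assignment, and I must ensure the parity conditions of Proposition~\ref{prop:even-odd-leader} align with satisfiability (i.e.\ that both an even and an odd witness exist precisely when the instance is a yes-instance). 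Getting the leader's parity interaction with the follower population to faithfully encode the combinatorial constraint, while respecting symmetry, will be the delicate part of the proof.
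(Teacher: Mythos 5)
Your high-level skeleton (parity characterization of Proposition~\ref{prop:even-odd-leader}, then guess-plus-integer-programming for membership, then a 3SAT reduction for hardness) matches the paper, but the proposal has a genuine gap at the core of the \NP\ membership argument: soundness of the certificate. Guessing a support of transitions (or a leader trajectory) and checking feasibility of the marking equation is not enough, because the marking equation can have nonnegative integer solutions that assign positive counts to leader transitions lying in a part of the leader's state space that the leader can never actually visit --- such transitions can form cycles with zero net effect on the marking, so the equation is satisfied while no real run exists. The paper closes exactly this hole with the notion of a \emph{compatible pair}: the certificate is a set $S$ of leader transitions such that every vertex of $\gr(S)$ is reachable from $\init^L$, and the substantive work is Lemma~\ref{lem:compatible}, which shows (by a delicate case analysis on whether the leader's current state lies on a cycle of $\gr(\bx)$) that compatibility plus a marking-equation solution plus follower saturation via Proposition~\ref{prop:important} yields an actual firing sequence; Lemma~\ref{lem:sym-leader-charac} then turns this into the \NP\ characterization. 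Your proposal never identifies this connectivity condition nor proves a realization lemma. Moreover, the tools you invoke to fill the gap cannot do so: the Scaling Lemma~\ref{lem:fhshort} and Insertion Lemma~\ref{lem:cut-paste} operate by multiplying entire markings by $n$, which would turn the single leader into $n$ leaders and leave the semantics of leader protocols; this is precisely why the leader case does not inherit the machinery of Theorem~\ref{th:main-general}. The $\mathbb{F}_2$ variant you mention is also a dead end, since the problem is \NP-hard and the $\mathbb{F}_2$ reduction is what gives the \emph{leaderless} symmetric case its \NC\ bound.

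On the hardness side you correctly anticipate the two difficulties (symmetry of the rules, and aligning the even/odd witnesses of Proposition~\ref{prop:even-odd-leader} with satisfiability), but you do not resolve either; what you offer is a plan, not a construction. The paper's reduction resolves them with a specific parity mechanism: the leader walks a chain $p_0,\dots,p_n$ choosing $\top_j$ or $\bot_j$ for each variable and can emit clause messages $c_k$ only for clauses made true; followers either jump directly from $q_0$ to $q_m$ via the message $a$ (consumed exactly $2n$ times by the leader, hence by an even number of followers) or traverse the clause chain $q_0,q_1,\dots,q_m$, which --- by symmetry --- they can do only in pairs unless the leader supplies the clause message. With an odd follower population an odd number must use the chain, forcing the leader to unlock every clause, i.e., forcing the chosen assignment to be satisfying. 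Without this (or an equivalent) gadget and parity argument, the hardness half of the theorem remains unproved.
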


The following characterization of symmetric leader protocols that admit a cut-off, proved by Horn and Sangnier, will be central to us for proving both the upper bound and the
lower bound from Theorem~\ref{th:sym-leader-NPc}.

\medskip
\begin{propC}[{\cite[Lemma 18]{HornS20}}]~\label{prop:even-odd-leader}
	A symmetric leader protocol admits a cut-off if and only if
	there exist an even number $e$ and an odd number $o$ such that
	$C_{\init}^e$ can reach $C_{\fin}^e$ and $C_{\init}^o$ can reach $C_{\fin}^o$.
\end{propC}

\subsection{The upper bound: An \NP \ algorithm}

Let $\prot_1 = (\prot_1^L,\prot_1^F)$ be a symmetric leader protocol where $\prot_1^L = (Q_1^L,\Sigma,\init_1^L,$ $\fin_1^L,R_1^L)$ and $\prot^F = (Q_1^F,\Sigma,\init_1^F,\fin_1^F,R_1^F)$. 

\paragraph*{Preprocessing the protocol. } As a first step, we preprocess the protocol to remove certain states, whilst preserving the cut-off property.
Since the protocols $\prot_1^L$ and $\prot_1^F$ are symmetric, we can consider the graphs $\graph_{\prot_1^L}$ and $\graph_{\prot_1^F}$ as defined in
Section~\ref{sec:symmetric}. Recall that a bad state of a symmetric protocol is one which either does not have a path from the initial state or to the final
state in the graph of its protocol. Similar to Proposition~\ref{prop:init-fin-graph}, we have the following proposition for symmetric leader protocols,
which intuitively states that we can discard all the bad states of $\prot_1^L$ and $\prot_1^F$ without sacrificing the validity of the cut-off property.

\begin{prop}~\label{prop:init-fin-graph-leader}
	Suppose $C$ is such that $C^n_{\init} \xrightarrow{*} C$ (resp. $C \xrightarrow{*} C^n_\fin$) for some $n$. 
	Then for every $q \in Q_1^F$ such that $C(q) > 0$, there is a path from $\init_1^F$ to $q$ (resp. from $q$ to $\fin_1^F$) in the graph $\graph_{\prot_1^F}$.
	Similarly for every $q \in Q_1^L$ such that $C(q) > 0$, there is a path from $\init_1^L$ to $q$ (resp. from $q$ to $\fin_1^L$) in the graph $\graph_{\prot_1^L}$.
\end{prop}

\begin{proof}
	Straightforward induction on the length of the underlying run.
\end{proof}

Let $\prot^L$ (resp. $\prot^F$) be the symmetric protocol obtained from $\prot_1^L$ (resp. $\prot_1^F$) by removing all of its bad states (and the associated rules) 
and let $\prot = (\prot^L,\prot^F)$. Note that $\prot$ can be constructed from $\prot_1$ in polynomial time.
By Proposition~\ref{prop:init-fin-graph-leader}, it follows that,

\begin{lem}
	$\prot$ admits a cut-off if and only if $\prot_1$ admits a cut-off.	
\end{lem}

\begin{proof}
	The left-to-right implication is true because $\prot$ is obtained from $\prot_1$ by potentially discarding some of its transitions and so if a configuration $C$
	can reach a configuration $C'$ in $\prot$, then the same is true for $\prot_1$ as well. For the other direction, suppose $\rho := C^n_{\init} \act{} C_1 \act{} C_2 \act{} \dots \act{} C_{k} \act{} C^n_{\fin}$ is a valid run in $\prot_1$. By Proposition~\ref{prop:init-fin-graph-leader}, it follows that if $C_i(q) > 0$ for any $1 \le i \le n$ and any state $q$, then $q$ is not 
	a bad state of either $\prot_1^L$ or $\prot_1^F$. Since we obtained $\prot$ from $\prot_1$ by only pruning the bad states of the underlying protocols, it follows
	that $\rho$ is also a valid run in $\prot$. Hence, if $\prot_1$ admits a cut-off, then $\prot$ also admits a cut-off.
\end{proof}

Because $\prot^F$ is symmetric and does not contain any bad states, we have the following fact, which is easy to verify.

\begin{prop}\label{prop:important}
	If $q^L$ and $q^F$ are states of $\prot^L$ and $\prot^F$ respectively, then $\multiset{q^L + 2 \cdot \init^F} \xrightarrow{*} \multiset{q^L + 2\cdot q^F} \xrightarrow{*} \multiset{q^L + 2 \cdot \fin^F}$.
\end{prop}

In the rest of this subsection, we will concern ourselves with only $\prot = (\prot^L,\prot^F)$ and show how to decide if it admits a cut-off.

\paragraph*{Constructing a Petri net. } Similar to the case of rendez-vous protocols, we will construct a Petri net $\net_\prot$
from the symmetric leader protocol $\prot = (\prot^L,\prot^F)$. However, the construction is a bit different now because of the leader protocol.

Let $\prot^L = (Q^L,\Sigma,\init^L,\fin^L,R^L)$ and $\prot^F = (Q^F,\Sigma,\init^F,\fin^F,R^F)$.
Our Petri net $\net_\prot = (P,T,Pre,Post)$ will be constructed as follows: 
The set of places $P$ will be $Q^L \cup Q^F$. 
For each $a \in \Sigma$ and $r = (q,a,s), r' = (q',a,s') \in R^L \cup R^F$ such that \emph{not both $r$ and $r'$ belong to $R^L$}, 
we will have a transition $t_{r,r'}$ in $\net_\prot$ satisfying
\begin{itemize}
	\item $Pre[p,t_{r,r'}] = 0$ for every $p \notin \{q,q'\}$,
	$Post[p,t_{r,r'}] = 0$ for every $p \notin \{s,s'\}$
	\item If $q = q'$ then $Pre[q,t_{r,r'}] = 2$, otherwise
	$Pre[q,t_{r,r'}] = Pre[q',t_{r,r'}] = 1$
	\item If $s = s'$ then $Post[s,t_{r,r'}] = 2$, otherwise
	$Post[s,t_{r,r'}] = Post[s',t_{r,r'}] = 1$.
\end{itemize}

The transitions defined here have the same intuitive meaning as the ones defined in Subsection~\ref{subsec:prot-to-Petri}. The reason we only consider 
pairs of rules $(r,r')$ such that not both of them belong to $R^L$ is that since we will only have one agent executing the leader protocol, we can never
have a step between configurations in which two rules in $R^L$ are executed.

If $t_{r,r'}$ is a transition such that \emph{exactly one of $r$ and $r'$} is in $R^L$,
then it will be called a \emph{leader transition}. All other transitions 
will be called 
\emph{follower-only transitions}. Notice that if $t$ is a leader transition,
then there is a unique place $t.\fromstate \in \pre{t} \cap Q^L$ 
and a unique place $t.\tostate \in \post{t} \cap Q^L$.

As usual, we let $\incidence = Post-Pre$ denote the incidence matrix of $\net_\prot$. 
The next proposition follows from the construction of $\net_\prot$. \medskip
\begin{prop}~\label{prop:equiv-leader}
	For any two configurations $C$ and $C'$ we have that
	$C \xRightarrow{*} C'$ over the protocol $\prot$ if and only if
	$C \reachN{*} C'$ over the Petri net $\net_\prot$.
\end{prop}

Because of Proposition~\ref{prop:equiv-leader} and Proposition~\ref{prop:even-odd-leader} to decide whether $\prot$ admits a cut-off, it suffices to give an \NP \ algorithm
which decides if there is an even number $e$ and an odd number $o$ such that $C^e_\init$ can reach $C^e_\fin$ and $C^o_\init$ can reach $C^o_\fin$ over 
$\net_\prot$. The forthcoming subsections are dedicated to proving that such an \NP \ algorithm exists.

\paragraph*{The notion of compatibility. } To give our \NP \ algorithm, we define the notion of \emph{compatibility} between a configuration $C$ and a vector $\bx \in \nn^T$, where
$T$ is the set of transitions of $\net_{\prot}$. We show that compatible pairs of the form $(C,\bx)$ admit certain nice properties in terms of runs in the net $\net_\prot$.

For any set of leader transitions $S$ of $\net_\prot$, define a graph $\gr(S)$ as follows: 
The set of vertices of $\gr(S)$ is given by $\{t.\fromstate : t \in S\} \cup \{t.\tostate : t \in S\}$ and its set of edges is given by $\{(t.\fromstate,t.\tostate) : t \in S\}$.
Notice that every edge in $\gr(S)$ corresponds to a unique leader transition in $S$. For this reason, we will often identify each edge in $\gr(S)$ with its
corresponding leader transition in $S$. 

For any vector $\bx \in \nn^T$, define $\lset(\bx)$ to be
the set of all leader transitions such that $\bx[t] > 0$.
The graph of the vector $\bx$, denoted by $\gr(\bx)$ is defined as the graph $\gr(\lset(\bx))$.
Recall that for a configuration $C$, there is exactly one state $q$ of the leader protocol for which $C(q) > 0$ and this state is denoted by $\lead(C)$.
We are now in a position to define the notion of compatibility.

\begin{defi}
	Let $C$ be a configuration and let $\bx \in \nn^T$.
	We say that the pair $(C,\bx)$ is \emph{compatible} if 
	$C + \incidence \bx \ge \textbf{0}$ and 
	every vertex in $\gr(\bx)$ is reachable from the state $\lead(C)$.
\end{defi}

Note that if $C \xrightarrow{\sigma} C'$ is a valid run in $\net_{\prot}$, then $(C,\parikh{\sigma})$ is a compatible pair.
Indeed, we have $\textbf{0} \le C' = C + \incidence \parikh{\sigma}$ and so the first condition of being compatible is satisfied.
The other condition can be shown by induction on the length of $\sigma$. The following lemma acts as a sort of converse to this property. 
It states that \emph{as long as there are enough followers in every state of the follower protocol}, it is possible to come up with
a firing sequence from a compatible pair.

\begin{lem}[\textbf{Compatibility Lemma}]~\label{lem:compatible}
	Suppose $(C,\bx)$ is a compatible pair such that
	$C(q) \ge 2\|\bx\|$ for every $q \in Q^F$. Then
	there is a firing sequence $\xi$ such that $C \xrightarrow{\xi}$ 
	and $\parikh{\xi} = \bx$.
\end{lem}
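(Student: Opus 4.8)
The plan is to decouple the two kinds of constraints that a firing sequence with Parikh image $\bx$ must respect: the constraints coming from the follower places $Q^F$, and the constraint coming from the single leader token. The hypothesis $C(q) \ge 2\norm{\bx}$ for every $q \in Q^F$ makes the follower constraints essentially free, so the heart of the matter is to schedule the \emph{leader transitions} so that the unique leader token is always sitting on the correct place $t.\fromstate$ at the moment $t$ is fired. First I would split $\supp{\bx}$ into leader transitions and follower-only transitions, noting that follower-only transitions do not touch any place of $Q^L$ and hence impose no constraint whatsoever on the leader token; they can therefore be inserted anywhere once a legal schedule of the leader transitions is fixed.

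The central step is to realize the leader transitions as an Eulerian trail. I would form the directed multigraph $H$ whose vertices are the leader places and which contains, for each leader transition $t$ with $\bx[t] > 0$, exactly $\bx[t]$ parallel edges from $t.\fromstate$ to $t.\tostate$ (so $H$ is $\gr(\bx)$ with edge multiplicities). The key observation is that the first compatibility condition $D := C + \incidence\bx \ge \textbf{0}$, together with conservation of leader tokens (each leader transition moves one token from one place of $Q^L$ to another, so $\sum_{q \in Q^L} (\incidence\bx)[q] = 0$ and hence $D$ carries exactly one leader token), pins down the in/out-degree balance of $H$ exactly. Writing the net inflow $(\incidence\bx)[q]$ as $\mathrm{indeg}_H(q) - \mathrm{outdeg}_H(q)$, the requirement $D[q] = C[q] + (\incidence\bx)[q] \ge 0$ with a single leader token in $D$ forces either every vertex to be balanced (when $D$ returns the leader to $lead(C)$) or exactly the vertex $lead(C)$ to have $\mathrm{outdeg} = \mathrm{indeg}+1$ and one other vertex to have $\mathrm{indeg} = \mathrm{outdeg}+1$. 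This is precisely the degree condition for an Eulerian trail starting at $lead(C)$. The second compatibility condition --- every vertex of $\gr(\bx)$ is reachable from $lead(C)$ --- supplies the connectivity (it puts all edges of $H$ into one weakly connected component containing $lead(C)$), so the classical Euler theorem for directed multigraphs yields a trail $t_{j_1}, \dots, t_{j_r}$ that starts at $lead(C)$ and traverses each edge of $H$ exactly once, i.e. fires each leader transition $t$ exactly $\bx[t]$ times.

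Finally I would assemble $\xi$ by following this trail for the leader transitions and prepending each follower-only transition $t$ exactly $\bx[t]$ times, so that $\parikh{\xi} = \bx$ by construction. Along the trail the leader token sits on $t.\fromstate$ whenever $t$ fires, so all leader pre-conditions are met. For the follower pre-conditions I would argue directly: every transition removes at most $\weight = 2$ tokens from any single place, so the total number of tokens ever removed from a follower place $q$ during $\xi$ is at most $2\norm{\bx}$; consequently, just before any firing $t$ with $q \in \preset{t}$, the place $q$ still holds at least $C(q) - 2\norm{\bx} + \Pre[q,t] \ge \Pre[q,t]$ tokens, using $C(q) \ge 2\norm{\bx}$ (this is the same slack computation underlying Proposition~\ref{prop:easy}). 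Hence every transition of $\xi$ is enabled when it is reached, giving $C \xrightarrow{\xi} D$ with $\parikh{\xi} = \bx$, as desired. I expect the main obstacle to be the degree-balance bookkeeping of the second paragraph and the attendant degenerate cases --- self-loops with $t.\fromstate = t.\tostate$, an empty leader part $lead(\bx) = \emptyset$, and the circuit-versus-trail dichotomy when $lead(C)$ is itself the endpoint --- all of which must be folded cleanly into a single invocation of Euler's theorem.
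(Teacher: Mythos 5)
Your proof is correct, but it takes a genuinely different route from the paper's. The paper proves the lemma by induction on $\norm{\bx}$: it greedily constructs a sequence of compatible pairs $(C_i,\bx_i)$ with $\bx_{i+1}=\bx_i[\cmm{t_{i+1}}]$, maintaining compatibility as an invariant, and the heart of the argument is a case analysis on whether $lead(C_i)$ lies on a cycle of $\gr(\bx_i)$ --- if it does, fire the first edge of the cycle and reroute reachability paths around the cycle; if it does not, use $C_i + \incidence \bx_i \ge \mathbf{0}$ to show there is exactly one outgoing edge of $\gr(\bx_i)$ at $lead(C_i)$, so firing it cannot destroy reachability. You instead resolve the leader scheduling in one shot: the multigraph $H$ with $\bx[t]$ parallel edges from $t.\fromstate$ to $t.\tostate$ has, by token conservation on $Q^L$ and $C + \incidence\bx \ge \mathbf{0}$ with a single leader token, exactly the in/out-degree profile of an Eulerian trail (or circuit) starting at $lead(C)$, and the reachability half of compatibility supplies the connectivity needed for Euler's theorem; the follower places are then handled by the same uniform slack bound ($\le 2$ tokens consumed per place per firing, so $2\norm{\bx}$ in total) that both proofs share. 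Your approach is shorter and more conceptual --- it names the exact combinatorial object the paper's induction is implicitly constructing, and it avoids the delicate invariant-preservation and rerouting arguments of cases 2a/2b --- at the cost of invoking an external classical theorem and of having to fold in the boundary cases you correctly flag (self-loops, $lead(\bx)=\emptyset$, and verifying that $lead(C)$ is actually a vertex of $\gr(\bx)$ in the balanced, nonempty case, which follows since otherwise no vertex of $\gr(\bx)$ would be reachable from it). The paper's induction is more elementary and self-contained, and it produces the firing sequence together with the explicit quantitative invariant $C_i(q)\ge 2n-2i$, which is the style used throughout the rest of the paper; but both arguments are sound and rest on the same two hypotheses in the same way.
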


\begin{proof}
	First, we give an intuition behind the proof. The proof proceeds by induction on $\|\bx\|$. 
	Suppose $\bx[t] > 0$ for some follower-only transition. In this case, we simply execute
	$t$ from $C$ to reach $C'$, and we reduce the value corresponding to $t$ in $\bx$ by 1 to get a vector $\bx'$.
	We can then show
	that $(C',\bx')$ is compatible and $C'(q) \ge 2\|\bx'\|$ for every $q \in Q^F$ and so we can apply the induction hypothesis to handle this case.
	
	Suppose $\bx[t] > 0$ for some leader transition. Let $p = \lead(C)$. 
	Suppose $p$ belongs to some cycle $S = p,r_1,p_1,r_2,p_2,\dots,p_k,r_{k+1},p$ in the graph $\gr(\bx)$,
	where $r_1, r_2, \dots, r_{k+1}$ are leader transitions, $p = r_1.\fromstate = r_{k+1}.\tostate$ and for each $1 \le i \le k$, we have $p_i = r_i.\tostate = r_{i+1}.\fromstate$.
	We then let $C \xrightarrow{r_1} C'$ and let $\bx'$ be the same vector as $\bx$, except that $\bx'[r_1] = \bx[r_1] - 1$.
	We can then verify that $C' + \incidence \bx' \ge \textbf{0}$,
	$C'(q) \ge 2\|\bx'\|$ for every $q \in Q^F$ and $\lead(C') = p_1$.
	Any path $W$ in $\gr(\bx)$ from $p$ to some vertex $s$ either goes through $p_1$ or we can use the cycle $S$ to go from $p_1$
	to $p$ first and then use $W$ to reach $s$. This gives a path
	from $p_1$ to every vertex $s$ in $\gr(\bx')$, which will enable us to prove that $(C',\bx')$ is also compatible.
	
	If $p$ does not belong to any cycle in $\gr(\bx)$, then
	using the fact that $C + \incidence \bx \ge 0$, we can show
	that there is exactly one out-going edge $t$ from $p$ in $\gr(\bx)$.
	We then let $C \xrightarrow{t} C'$ and let~$\bx'$ be the same vector as $\bx$, except that $\bx'[t] = \bx[t] - 1$.
	Since any path in $\gr(\bx)$ from $p$ has to necessarily use
	this edge $t$, it follows that in $\gr(\bx')$ there is a path
	from $t.\tostate = \lead(C')$ to every vertex. This will then enable us to prove that $(C',\bx')$ is also compatible.
	We now proceed to the formal proof.
	\medskip

	For any transition $t \in T$, we let $\boldsymbol{t}$ denote the vector in $\nn^T$ which has a 1 in the co-ordinate corresponding to $t$ and 0 everywhere else.
	We recall from the construction of $\net_\prot$ that its weight is 2, i.e., the largest absolute value 
	appearing in the $\Pre$ and $\Post$ matrices of $\net_\prot$ is 2. 
	We will use this fact implicitly throughout the proof.
	
	Let $n = \|\bx\|$. We will construct a sequence of 
	pairs $(C_0,\bx_0),(C_1,\bx_1),\dots,(C_n,\bx_n)$, each of 
	which are compatible,
	and also a sequence of transitions $t_1,\dots,t_n$ satisfying the following conditions.
	\begin{itemize}[(1)]\setlength\itemsep{3pt}
		\item $C_0 = C, \ \bx_0 = \bx$
		\item[(2)] $C_i \xrightarrow{t_{i+1}} C_{i+1}$ for every $i < n$.
		\item[(3)] $\bx_{i+1} = \bx_i - \boldsymbol{t_{i+1}}$ for every $i < n$.
		\item[(4)] $C_i(q) \ge 2n - 2i$ for every $q \in Q^F$ and every $i < n$.
	\end{itemize}
	\medskip
	
	Assume that we have already constructed the pair $(C_i,\bx_i)$ for 
	some $i < n$. We now show how to construct the pair $(C_{i+1},\bx_{i+1})$.
	We consider three cases.
	
	\subparagraph*{\textbf{Case 1: } } Suppose $\bx_i[t] > 0$ for some follower-only transition $t$.
	By assumption $C_i(q) \ge 2n- 2i \ge 2$ for every $q \in Q^F$.
	Hence, it follows that the transition $t$ can be fired from $C_i$. We let $t_{i+1} := t$, $C_{i+1}$ be the configuration satisfying $C_i \xrightarrow{t} C_{i+1}$ and $\bx_{i+1} = \bx_{i} - \boldsymbol{t}$. By construction, $(C_{i+1},\bx_{i+1})$ satisfies conditions (1), (2) and (3). Further, since $C_i$ satisfies condition (4), and since the weight of $\net_{\prot}$ is
	at most 2, it follows that $C_{i+1}$ also satisfies condition (4). Also, 
	$C_{i+1} + \incidence \bx_{i+1} = C_i + \incidence \boldsymbol{t} + \incidence (\bx_i - \boldsymbol{t}) = C_i + \incidence \bx_i \ge \boldsymbol{0}$.
	Finally, note that $\graph(\bx_{i+1}) = \graph(\bx_i)$ and $\lead(C_{i+1}) = \lead(C_i)$ and since $(C_i,\bx_i)$ is compatible, it follows that
	$(C_{i+1},\bx_{i+1})$ is also compatible.
	
	\subparagraph*{\textbf{Case 2: } } Suppose $\bx_i[t] > 0$ for some leader transition $t$. Therefore the graph $\gr(\bx_i)$ has at least one edge. Let $p = \lead(C_i)$. We split this case into two further subcases:
	
	\subparagraph*{\textbf{Subcase 2a: } } Suppose $p$ is part of some cycle in $\gr(\bx_i)$.
	Let $p,r_1,p_1,r_2,p_2,\dots,p_k,r_{k+1},p$ be a shortest cycle containing $p$ 
	in $\gr(\bx_i)$, where $r_1, r_2, \dots, r_{k+1}$ are leader transitions, $p = r_1.\fromstate = r_{k+1}.\tostate$ and for each $1 \le j \le k$, we have $p_j = r_j.\tostate = r_{j+1}.\fromstate$. By construction of $\gr(\bx_i)$, $\bx_i[r_j] > 0$ for each $j$.
	Since $p = \lead(C_i)$ it follows that $C_i(p) > 0$.
	Further, by assumption 
	$C_i(q) \ge 2n - 2i \ge 2$ for every $q \in Q^F$.
	It follows then that $r_1$ is a transition
	which can be fired at the configuration $C_i$. 
	We let $t_{i+1} := r_1$, $C_{i+1}$ be the configuration satisfying $C_i \xrightarrow{r_1} C_{i+1}$
	and $\bx_{i+1} := \bx_i - \boldsymbol{r_1}$. By arguments that are similar to the one given in Case 1, we can conclude that $(C_{i+1},\bx_{i+1})$ satisfies Conditions (1), (2), (3), (4) and also that $C_{i+1} + \incidence \bx_{i+1} = C_i + \incidence \bx_i \ge \boldsymbol{0}$.

	Hence, all that is left to prove is that every vertex in $\gr(\bx_{i+1})$ is reachable from $\lead(C_{i+1})$ which by construction is $p_1$.	
	Note that $\gr(\bx_{i+1})$ is necessarily a subgraph of $\gr(\bx_i)$ which further satisfies the property that 
	if $tr \neq r_1$ is some transition in $\gr(\bx_i)$, then $tr$ is also present in $\gr(\bx_{i+1})$. 
	Now, let $s$ be any vertex of $\gr(\bx_{i+1})$. Since $\gr(\bx_{i+1})$ is a subgraph of $\gr(\bx_i)$, it follows that $s$ is also present in $\gr(\bx_i)$.
	By compatibility of $(C_i,\bx_i)$, $s$ is reachable from $p$ in $\gr(\bx_i)$.
	Let $p,tr_1,s_1,tr_2,s_2,\dots,s_l,tr_{l+1},s$ be a path in $\gr(\bx_i)$.

	If $tr_1 = r_1$ then $s_1 = p_1$
	and hence $p_1,tr_2,s_2,\dots,s_l,tr_{l+1},s$ is a path in $\gr(\bx_{i+1})$.
	If $tr_1 \neq r_1$ then $p_1,r_2,p_2,\dots,p_k,r_{k+1},p,tr_1,s_1,\dots,s_l,tr_{l+1},s$
	is a walk in $\gr(\bx_{i+1})$ and so there must be a path from $p_1$ to $s$ in $\gr(\bx_{i+1})$. In either case, we are done.
	
	\paragraph*{\textbf{Subcase 2b: } } Suppose $p$ is not part of any cycle in $\gr(\bx_i)$.
	Let $Out = \{out_1,\dots,out_b\}$ be the set of leader transitions $t$
	such that $t.\fromstate = p$ and let
	$In = \{in_1,\dots,in_a\}$ be the set of leader transitions
	$t$ such that $t.\tostate = p$. Notice that $Out$ and
	$In$ depend only on $\prot^L$ and \emph{not}
	on $\gr(\bx_i)$.
	
	We claim that for every $in \in In$, $\bx_i[in] = 0$ 
	and there is exactly one $out \in Out$ such that $\bx_i[out] > 0$. 
	Since $(C_i,\bx_i)$ is compatible, every vertex in $\gr(\bx_i)$ is reachable from $p$ and by assumption $p$ is not 
	part of any cycle in $\gr(\bx_i)$. It follows
	then that there are no incoming edges to $p$ in $\gr(\bx_i)$
	and hence $\bx_i[in] = 0$ for every $in \in In$. 
	Also since every vertex in $\gr(\bx_i)$ is reachable
	from $p$ and since $\gr(\bx_i)$ has at least one edge, it follows
	that there is at least one $out \in Out$
	such that $\bx_i[out] > 0$. 
	Finally, since $C_i + \incidence\bx_i \ge \textbf{0}$
	it follows that $C_i(p) + \sum_{in \in In} \bx_i[in] - \sum_{out \in Out} \bx_i[out] \ge 0$. Since $\bx_i[in] = 0$ for every $in \in In$ and since $C_i(p) = 1$, it follows that
	there is exactly one $out \in Out$ such that
	$\bx_i[out] > 0$. Hence, by construction of $\gr(\bx_i)$,
	there is exactly one outgoing edge from $p$ in $\gr(\bx_i)$, which we will denote by $r$.

	Let $t_{i+1} := r$, $C_{i+1}$ be the configuration satisfying $C_i \xrightarrow{r} C_{i+1}$
	and $\bx_{i+1} := \bx_i - \boldsymbol{r}$.  By arguments that are similar to the one given in Case 1, we can conclude that $(C_{i+1},\bx_{i+1})$ satisfies Conditions (1), (2), (3), (4) and also that $C_{i+1} + \incidence \bx_{i+1} = C_i + \incidence \bx_i \ge \boldsymbol{0}$.
	
	All that is left to prove is that every vertex in $\gr(\bx_{i+1})$ is reachable from $\lead(C_{i+1})$ which by construction is $r.\tostate$.
	Note that $\gr(\bx_{i+1})$ is necessarily a subgraph of $\gr(\bx_i)$ which further satisfies the property that 
	if $tr \neq r$ is some transition in $\gr(\bx_i)$, then $tr$ is also present in $\gr(\bx_{i+1})$. 
	Now, let $s$ be any vertex of $\gr(\bx_{i+1})$. Since $\gr(\bx_{i+1})$ is a subgraph of $\gr(\bx_i)$, it follows that $s$ is also present in $\gr(\bx_i)$.
	By compatibility of $(C_i,\bx_i)$, $s$ is reachable from $p$ in $\gr(\bx_i)$.
	Let $p,tr_1,s_1,tr_2,s_2,\dots,s_l,tr_{l+1},s$ be a path in $\gr(\bx_i)$. 
	Since $p$ has only one outgoing 
	edge in $\gr(\bx_i)$ (which is $r$), it follows
	that $tr_1 = r$ and $s_1 = r.\tostate$. Hence 
	$s_1,tr_2,s_2,\dots,s_l,tr_{l+1},s$
	is a path in $\gr(\bx_{i+1})$ and so the proof of this case is also complete.
	\medskip
	
	Let $\xi := t_1,t_2,\dots,t_n$. By construction $C_0 \xrightarrow{\xi} C_n$. Further, since each $\bx_{i+1} = \bx_i - \boldsymbol{t_{i+1}}$,
	it follows that $\bx_n =  \bx_0 - \parikh{\sigma}$. Since $n = \|\bx\|$, it follows that $\bx_n = \boldsymbol{0}$. Hence, $\parikh{\sigma} = \bx_0 = \bx$ and 
	so the claim of the lemma is true.
\end{proof}

\paragraph*{Characterization of symmetric leader protocols that admit a cut-off. }  We use the Compatibility lemma to prove another characterization of symmetric leader protocols that
admit a cut-off, which will help us construct our final \NP \ algorithm.
\medskip

\begin{lem}~\label{lem:sym-leader-charac}
	For every $par \in \{0,1\}$, there exists $k \in \nn$ such that $C_{\init}^{k} \xrightarrow{*} C_{\fin}^{k}$ and $k\equiv par\bmod2$
	if and only if there exist $n \in \nn, \ \bx \in \nn^T$ such that 
	$n \equiv par  \bmod \ 2$, $(C_{\init}^n,\bx)$ is compatible
	and $C_\fin^{n} = C_\init^{n} + \incidence \bx$.
\end{lem}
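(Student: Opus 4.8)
The plan is to prove the two implications separately; the forward direction is essentially immediate, while the backward direction mirrors the insertion-style argument of Lemma~\ref{lem:cut-paste}. For $(\Rightarrow)$ I would take $n := k$ and $\bx := \parikh{\sigma}$, where $\sigma$ witnesses $C_\init^k \xrightarrow{\sigma} C_\fin^k$. The marking equation gives $C_\fin^n = C_\init^n + \incidence\bx$ directly, and $n \equiv par \pmod 2$ holds by the choice of $k$. It then remains to check that $(C_\init^n, \bx)$ is compatible: the condition $C_\init^n + \incidence\bx = C_\fin^n \ge \mathbf{0}$ is clear, and for the reachability condition I would observe that along $\sigma$ the leader starts at $\init^L = lead(C_\init^n)$ and traces a walk whose traversed edges are exactly the leader transitions occurring in $\sigma$, i.e. the edges of $\gr(\bx)$. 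Hence every vertex of $\gr(\bx)$ lies on this walk and is reachable from $\init^L$ using only edges of $\gr(\bx)$, which is precisely what compatibility requires.

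For $(\Leftarrow)$, given $n \equiv par$ and a compatible pair $(C_\init^n, \bx)$ with $C_\fin^n = C_\init^n + \incidence\bx$, I would construct a run $C_\init^k \xrightarrow{*} C_\fin^k$ for a suitably large $k \equiv par \pmod 2$ in three stages. In the first (\emph{saturate}) stage, starting from $C_\init^k$, I repeatedly move pairs of followers out of $\init^F$ via Proposition~\ref{prop:important} (which involves only follower states, leaving the leader at $\init^L$) to reach a configuration $C_1$ such that $lead(C_1)=\init^L$, every $C_1(q)$ for $q \in Q^F$ is even and at least $2\norm{\bx}$, and $C_1(\init^F) \ge n$. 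Taking $k \ge 2\norm{\bx}\cdot|Q^F| + n$ of the correct parity makes this possible. Since $\gr(\bx)$ depends only on $\bx$ and $lead(C_1)=\init^L=lead(C_\init^n)$, the reachability part of compatibility transfers verbatim from $(C_\init^n,\bx)$ to $(C_1,\bx)$; moreover, because $\net$ has weight $2$, firing $\bx$ decreases any place by at most $2\norm{\bx}$, so the abundance bound forces $C_1 + \incidence\bx \ge \mathbf{0}$, making $(C_1,\bx)$ compatible as well.

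In the second (\emph{insert}) stage, I apply Lemma~\ref{lem:compatible} to $(C_1,\bx)$, whose hypothesis $C_1(q) \ge 2\norm{\bx}$ now holds, obtaining $\xi$ with $\parikh{\xi}=\bx$ and $C_1 \xrightarrow{\xi} C_2$, where $C_2 = C_1 + \incidence\bx = C_1 + (C_\fin^n - C_\init^n)$. Since $\incidence\bx = C_\fin^n - C_\init^n$ is supported only on $\init^L,\fin^L,\init^F,\fin^F$, this moves the leader to $\fin^L$, decreases $\init^F$ by $n$, increases $\fin^F$ by $n$, and leaves every other follower state unchanged. In the third (\emph{desaturate}) stage, a parity check shows every follower state other than $\fin^F$ holds an even count: states $q \ne \init^F$ retain their even saturation values, while $C_2(\init^F) = C_1(\init^F) - n \equiv k - n \equiv 0 \pmod 2$. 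Using Proposition~\ref{prop:important} once more to move follower pairs into $\fin^F$, I collect all remaining followers there, reaching exactly $C_\fin^k$.

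The whole argument is routine once Lemma~\ref{lem:compatible} and Proposition~\ref{prop:important} are available; the only delicate point is the bookkeeping in the backward direction. Specifically, one must verify that the saturation can simultaneously (i) achieve the abundance bound $2\norm{\bx}$ at \emph{every} follower state while keeping at least $n$ tokens at $\init^F$, and (ii) leave all non-$\fin^F$ follower counts even after the insert stage, so that the final desaturation closes the run precisely at $C_\fin^k$ with the correct parity. Confirming that the insert stage perturbs only the four designated places, which follows from $\incidence\bx = C_\fin^n - C_\init^n$, is the key fact that keeps this bookkeeping under control.
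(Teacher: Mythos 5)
Your proposal is correct and follows essentially the same route as the paper: the forward direction combines the marking equation with the observation that the leader's moves along $\sigma$ trace a walk in $\gr(\bx)$ starting at $\init^L$ (the paper phrases this as an induction on the leader transitions ordered by first occurrence), and the backward direction is exactly the paper's saturate/insert/desaturate construction built on Proposition~\ref{prop:important} and Lemma~\ref{lem:compatible}, with $k = n + 2\norm{\bx}\cdot|Q|$. One small slip worth fixing: in your saturation stage you ask that \emph{every} follower count of $C_1$, including $C_1(\init^F)$, be even, which is impossible when $k$ is odd (the counts must sum to $k$); the correct invariant---which your stage-3 parity computation actually uses, since you write $C_2(\init^F) \equiv k - n \equiv 0 \pmod 2$---is that all follower states other than $\init^F$ carry even counts while $C_1(\init^F) \equiv k \equiv n \pmod 2$, exactly as in the paper's configuration $\mathtt{First} = C_{\init}^{n} + \sum_{q} \multiset{2\norm{\bx} \cdot q}$.
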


\begin{proof}
	($\Rightarrow$): Suppose there exist $k \in \nn$ and
	a firing sequence $\sigma$ such that
	$k \equiv par \bmod 2$ and $C_{\init}^{k} \xrightarrow{\sigma} C_{\fin}^{k}$. 
	Let $\bx = \parikh{\sigma}$.
	By the marking equation $C_{\fin}^{k} = C_{\init}^{k} + \incidence \bx$. Since $C_{\fin}^k \ge \textbf{0}$, to prove
	that $(C_{\init}^k,\bx)$ is compatible it is enough to
	prove that every vertex in $\gr(\bx)$ is
	reachable from $\lead(C_{\init}^k) = \init^L$.
	
	Let $t_1,\dots,t_l$ be leader transitions 
	sorted by the order in which they first appear in 
	$\sigma$. We have to show that for every $i$, the states
	$t_i.\fromstate$ and $t_i.\tostate$ are reachable from $\init^L$
	in $\gr(\bx)$. Since $\sigma$ is a firing sequence from $C_{\init}^k$ to $C_{\fin}^k$,
	it must be the case that for each $t_i$, the place
	$t_i.\fromstate \in \{\init^L,t_1.\tostate,t_2.\tostate,\dots,t_{i-1}.\tostate\}$.
	Our claim then follows by means of this observation and induction on $i$.

	\medskip \noindent ($\Leftarrow$): Suppose there exist $n \in \nn, \ \bx \in \nn^T$ such that $n \equiv par \bmod 2$, $(C_{\init}^n,\bx)$ is compatible and $C_{\fin}^{n} = C_{\init}^{n} + \incidence \bx$.
	We will now find a $k \in \nn$ such that $k \equiv par \bmod 2$ and $C_{\init}^{k} \xrightarrow{*} C_{\fin}^{k}$. 
	Let $\lambda = 2\|\bx\| \cdot |Q^F|$ and let $k = \lambda + n$. Note that $k \equiv par \bmod 2$.
	Similar to the proof of the Insertion Lemma (Lemma~\ref{lem:cut-paste}), we now construct the required run in three stages.
	
	First, by Proposition~\ref{prop:important} and the monotonicity property, 
	from $C_{\init}^k$ we can reach the marking $\mathtt{First} := C_{\init}^{n} + \sum_{q \in Q^F} \ \multiset{2\|\bx\| \cdot q}$.
	Next, notice that since $(C^n_{\init},\bx)$ is compatible, so is $(\mathtt{First},\bx)$.
	By the Compatibility lemma~(Lemma~\ref{lem:compatible}), it follows that there is a firing sequence $\xi$ and a configuration $\mathtt{Second}$ such that $\parikh{\xi} = \bx$
	and $\mathtt{First} \xrightarrow{\xi} \mathtt{Second}$. By the marking equation, it follows that
	$\mathtt{Second} = C_{\fin}^n + \sum_{q \in Q^F} \ \multiset{2\|\bx\| \cdot q}$. Finally, by Proposition~\ref{prop:important} and the monotonicity property,
	we can reach the marking $C_{\fin}^k$ from $\mathtt{Second}$, which concludes the construction.
\end{proof}

\begin{lem}
	Given a symmetric leader protocol, deciding whether it
	admits a cut-off can be done in \NP.
\end{lem}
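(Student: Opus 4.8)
The plan is to convert the characterizations of Proposition~\ref{prop:even-odd-leader} and Lemma~\ref{lem:sym-leader-charac} into a guess-and-verify procedure. By Proposition~\ref{prop:even-odd-leader}, the protocol admits a cut-off if and only if for \emph{both} parities $par \in \{0,1\}$ there is some $k \equiv par \pmod{2}$ with $C_\init^k \xrightarrow{*} C_\fin^k$; and by Lemma~\ref{lem:sym-leader-charac} each of these two conditions is equivalent to the existence of $n \in \nn$ and $\bx \in \nn^T$ with $n \equiv par \pmod{2}$, $(C_\init^n,\bx)$ compatible, and $C_\fin^n = C_\init^n + \incidence \bx$. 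The nondeterministic algorithm therefore runs, separately for each of the two parities, a subroutine that searches for such a pair $(n,\bx)$, and accepts iff both subroutines succeed.

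First I would note that, once $\bx$ is fixed, verifying compatibility is cheap. The nonnegativity clause $C_\init^n + \incidence \bx \ge \mathbf{0}$ is automatic, since the marking equation forces $C_\init^n + \incidence \bx = C_\fin^n \ge \mathbf{0}$. The only substantive clause is that every vertex of $\gr(\bx)$ be reachable from $\init^L = lead(C_\init^n)$, and since $\gr(\bx)$ depends only on which \emph{leader} transitions occur in $\supp{\bx}$, this is one graph-reachability query, decidable in polynomial time.

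The remaining and main difficulty is to ensure a witness of \emph{polynomial bit-size} exists whenever any witness does, without destroying the support structure on which the graph condition depends. To this end I would first guess the set $S_L$ of leader transitions meant to lie in $\supp{\bx}$ and check directly that every vertex of $\gr(S_L)$ is reachable from $\init^L$. I then set up the integer feasibility problem given by the marking equation $C_\fin^n = C_\init^n + \incidence \bx$ together with $\bx[t] \ge 1$ for $t \in S_L$, $\bx[t] = 0$ for leader transitions $t \notin S_L$, and $\bx \ge \mathbf{0}$. Substituting $n = 2n'+par$ for a fresh $n' \ge 0$ eliminates the parity constraint, and substituting $\bx[t] = \bx'[t]+1$ eliminates the lower bounds, leaving a plain system of linear diophantine (in)equalities over the nonnegative integers with coefficients of size polynomial in $|\prot|$. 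By the standard bound on minimal integer solutions of such systems (see e.g.~\cite{IntProg2}), feasibility implies the existence of a solution of polynomial bit-size, which can be guessed and verified.

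Assembling these pieces, the subroutine for a given parity guesses $S_L$ and a polynomially-bounded $(n,\bx)$, checks the reachability condition for $\gr(S_L)$ and the marking equation in polynomial time, and succeeds iff both hold. By construction the leader transitions appearing in $\supp{\bx}$ are exactly $S_L$, so $\gr(\bx) = \gr(S_L)$ and $(C_\init^n,\bx)$ is genuinely compatible; correctness then follows from Lemma~\ref{lem:sym-leader-charac} and Proposition~\ref{prop:even-odd-leader}. I expect the bit-size bound to be the one delicate point, which is exactly why the support is guessed separately: an uncontrolled shrinking of $\bx$ to a small solution could otherwise zero out leader transitions and invalidate the graph condition.
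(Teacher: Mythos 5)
Your proposal is correct and takes essentially the same approach as the paper: for each parity separately, guess the set $S$ of leader transitions forming the leader-support, check in polynomial time that every vertex of $\gr(S)$ is reachable from $\init^L$, and reduce the rest to feasibility of an integer linear program combining the marking equation (with the parametrized number of followers) and the support constraints, invoking Lemma~\ref{lem:sym-leader-charac} and Proposition~\ref{prop:even-odd-leader} for correctness. Your explicit guessing of a polynomial-bit-size solution is just an unfolding of the paper's appeal to the fact that integer programming feasibility is in \NP, so the two arguments coincide.
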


\begin{proof}
	By Proposition~\ref{prop:even-odd-leader}, it suffices
	to decide if there exist an even number $e$ and an odd number $o$ such
	that $C_{\init}^e \xrightarrow{*} C_{\fin}^e$ and 
	$C_{\init}^o \xrightarrow{*} C_{\fin}^o$.
	Suppose we want to check that there exists $k \in \nn$ such that
	$C_{\init}^{2k} \xrightarrow{*} C_{\fin}^{2k}$. 
	By Lemma~\ref{lem:sym-leader-charac}, this is possible if and only if
	there exist~$k \in \nn$ and $\bx \in \nn^T$ such that $(C^{2k}_\init,\bx)$ is compatible
	and $C^{2k}_{\fin} = C^{2k}_{\init} + \incidence \bx$. By definition of compatibility, this 
	is equivalent to saying that there exist $k \in \nn, \ \bx \in \nn^T$ and a subset $S$ of leader transitions
	such that $\supp{\bx} = S$, $C^{2k}_{\fin} = C^{2k}_{\init} + \incidence \bx$ and
	every vertex is reachable from the vertex $\init^L$ in the graph $\gr(S)$.
	
	This characterization immediately suggests the following \NP \ algorithm.
	We first non-deterministically guess a set $S$ of leader 
	transitions and check if every vertex in $\gr(S)$ is
	reachable from $\init^L$. Then, we write a polynomial sized
	non-negative integer linear program as follows: We let $\vect$ 
	denote $|T|$ variables, one for each transition in $T$ and we let $n$ be another variable, with all of them ranging over the non-negative integers. 
	The constraints of the linear program are given by $C_{\fin}^{2n} = C_{\init}^{2n} + \incidence \vect$ and $\vect[t] = 0 \iff t \notin S$.
	Once we have constructed this linear program, we solve it, which we can do in non-deterministic polynomial time~\cite{IntProg}. 
	If there exists a solution, then we accept. Otherwise, we reject. 
	
	From our characterization and the construction of our algorithm, it follows that at least one of the runs of our non-deterministic algorithm
	is accepting if and only if there exists $k \in \nn$ such that $C_{\init}^{2k} \xrightarrow{*} C_{\fin}^{2k}$. 
	Similarly we can check if there exists $l \in \nn$ such that $C_{\init}^{2l+1} \xrightarrow{*} C_{\fin}^{2l+1}$. 
\end{proof}

\subsection{\NP-hardness}~\label{subsec:leader-lower-bound} We now complement our upper bound by proving \NP-hardness of the cut-off problem for symmetric leader protocols. 

\begin{lem}
	The cut-off problem for symmetric leader protocols is \NP-hard.	
\end{lem}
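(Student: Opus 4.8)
The plan is to give a logspace reduction from the \NP-complete problem \textsc{3-Sat}, exploiting the fact that the single leader can play the role of the existential ``guess'' that a leaderless symmetric protocol lacks (which is precisely why the leaderless symmetric case needed only \NC). Given a 3-CNF formula $\phi$ with variables $x_1,\dots,x_n$ and clauses $C_1,\dots,C_m$, I would construct a symmetric leader protocol $\prot_\phi=(\prot^L,\prot^F)$ whose leader graph is the concatenation of $n$ variable gadgets followed by $m$ clause gadgets. Each variable gadget is a small diamond: from its entry state the leader chooses one of two branches, committing $x_i=\true$ or $x_i=\false$, and each leader transition taken also synchronises with a follower, stamping that follower into a state $T_\ell$ recording a literal $\ell$ made true by the commitment. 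Each clause gadget for $C_j$ then lets the leader advance only by synchronising with a follower sitting in some $T_\ell$ with $\ell$ a literal of $C_j$; thus the leader can traverse the whole chain to $\fin^L$ exactly when its committed assignment satisfies every clause.

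For correctness I would use the characterisation of Lemma~\ref{lem:sym-leader-charac} together with Proposition~\ref{prop:even-odd-leader}: $\prot_\phi$ admits a cut-off iff both an even and an odd follower-population can reach the final configuration. The leader graph is laid out as a directed acyclic chain, so the Eulerian/flow condition forced on the leader transitions by the marking equation (net $+1$ at $\fin^L$, $-1$ at $\init^L$, and $0$ elsewhere) confines the support $\supp{\bx}$ of any witnessing vector to a single source-to-sink path, i.e.\ to one consistent truth assignment. I would add idling rules $(\fin^F, a, \fin^F)$ and follower ``collector'' rules so that, once the leader has certified satisfaction, every remaining follower can be shepherded into $\fin^F$ in pairs via Proposition~\ref{prop:important}, with a single leftover follower (the odd case) paired with the leader or absorbed by a dedicated rule. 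The construction is arranged so that if $\phi$ is satisfiable then $C_\init^k\xRightarrow{*}C_\fin^k$ for all sufficiently large $k$ of either parity (a cut-off exists), whereas if $\phi$ is unsatisfiable then the leader can never clear all clause gadgets and $C_\init^k\xRightarrow{*}C_\fin^k$ fails for all but finitely many $k$ of at least one parity (no cut-off). Each gadget has size polynomial in the description of a single variable or clause, so $\prot_\phi$ is computable in logarithmic space.

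The step I expect to be the main obstacle is taming the symmetric, ``static-rule'' semantics so that no cheating is possible. Because $(q,!a,q')\in R$ iff $(q,?a,q')\in R$ and any pair of agents may synchronise, I must ensure that (i) a follower cannot enter a true-literal state $T_\ell$ without a genuine leader commitment to $\ell$, (ii) a follower cannot cross a clause gadget on behalf of a clause whose literals were all left false, and (iii) the leader cannot slip through a variable diamond so as to commit $x_i$ to both polarities or loop to re-stamp followers inconsistently. Points (i)--(iii) must hold \emph{simultaneously} with the parity bookkeeping, since the even case is generically easy (pairs of followers can walk along $\gr(\prot^F)$) and it is really the odd witness that must encode satisfiability. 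Discharging all of this rigorously --- proving that the only runs reaching $C_\fin^k$ are those in which the leader follows one acyclic pass encoding a satisfying assignment, and that such runs exist for both parities precisely when $\phi$ is satisfiable --- is where the real work of the reduction lies.
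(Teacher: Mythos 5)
Your high-level strategy matches the paper's (reduce from 3-SAT, invoke Proposition~\ref{prop:even-odd-leader} so that only the odd-population witness has to encode satisfiability, let the leader's single pass through a chain of gadgets act as the existential guess), but the proof stops exactly where the work begins: you yourself flag points (i)--(iii) as the main obstacle and leave them undischarged, and for your particular gadget they are not a technicality but a genuine breakdown. Under the symmetric semantics every stamping rule comes with its twin, so two followers can stamp each other into any $T_\ell$ without the leader; you anticipated this, but not the parity bookkeeping it forces on the $T_\ell$ states. Because your leader graph is acyclic, the flow argument you invoke shows the leader crosses each clause gadget exactly once, hence consumes exactly \emph{one} follower per clause. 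To reach $C_{\fin}^k$ every $T_\ell$ must end up empty, and every non-leader move in or out of $T_\ell$ involves a pair of followers, so the number of leader stampings into $T_\ell$ and of leader consumptions out of $T_\ell$ must have equal parity. Satisfiability is a covering condition, not a parity condition, and the two misalign: your collector rules face a dilemma. If stranded followers in $T_\ell$ can only leave in pairs (per-literal collector messages), then already a satisfiable one-clause formula fails --- the two-edge variable diamond stamps an even number of followers into the committed literal's state, the single clause consumes one, and an odd remainder is stranded forever, so the ``yes'' direction of the reduction breaks. If instead a stranded follower can leave by pairing with a follower in a different state (a shared collector message), then on an \emph{unsatisfiable} formula pairs of followers self-stamp into $T_\ell$ for whatever literal each clause needs, the leader consumes one of each pair, the leftovers are collected, and the leader reaches $\fin^L$ anyway --- so the ``no'' direction breaks. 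Either way the construction as sketched is incorrect, not merely incomplete.

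The paper's construction shows the missing idea, and it is a different gadget rather than a patch of yours: the chain sits on the \emph{follower} side (one state $q_j$ per clause, rules $(q_{j-1},c_j,q_j)$), the leader \emph{loops} at the committed-literal states $\top_j,\bot_j$ so it can emit each clause message any number of times (exactly the slack your acyclic leader lacks), and a direct rule $(q_0,a,q_m)$ handles the even case. The parity constraint then lands where it is useful, at each clause state: the leader's traversal uses the message $a$ exactly $2n$ times, so an odd number of followers must walk the clause chain, followers can cross a clause state only in pairs unless the leader participates in $c_j$, and hence the leader must send every clause message at least once --- which it can do only from states encoding a satisfying assignment. Decoupling where the leader has slack (its loops) from where parity bites (the follower chain) is precisely what makes the argument go through, and it is absent from your proposal.
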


\begin{proof}
	We prove \NP-hardness by giving a reduction from 3-SAT.
	Let $\varphi = C_1 \land C_2 \land \dots \land C_m$
	be a 3-CNF formula with variables $x_1,\dots,x_n$,
	where each clause $C_j$ is of the form
	$C_j := \ell_{j_1} \lor \ell_{j_2} \lor \ell_{j_3}$
	for some literals $\ell_{j_1},\ell_{j_2},\ell_{j_3}$.
	We now construct a symmetric leader protocol $\prot = (\prot^L,\prot^F)$ as follows.

	\paragraph*{Communication alphabet. }
	First, we specify our alphabet $\Sigma$.
	For each clause $C_j$, we will have a letter $c_j$.
	Further, we will also have a special letter $\alpha$. 
	
	\paragraph*{States of the leader protocol. } For each variable $x_j$, the leader will have
	three states $p_j, \top_j$ and  $\bot_j$. 
	The leader's initial and final states will be $p_0$ and $p_n$ respectively.
	
	\paragraph*{States of the follower protocol. } For each clause $C_j$, the follower will have
	a state $q_j$.
	The follower's initial and final states will be $q_0$ and $q_m$ respectively.
	
	\paragraph*{Rules of the leader protocol. }
	For each $0 \le j \le n-1$, the rules for the leader at the
	state $p_j$ are given by the gadget in Figure~\ref{fig:gadget-leader}.
	Intuitively, if the leader decides to move to 
	$\top_{j+1}$ (resp.\ $\bot_{j+1}$) then she has decided to 
	set the variable $x_{j+1}$ in the formula $\varphi$ to true (resp.\ false). 
	Hence, upon moving
	to $\top_{j+1}$ (resp.\ $\bot_{j+1}$), the leader tries to 
	let the followers know of the clauses which become true 
	because of setting $x_{j+1}$ to true (resp.\ false).

	\begin{figure}[h] 
		\begin{center}
			\tikzstyle{node}=[circle,draw=black,thick,minimum size=5mm,inner sep=0.75mm,font=\normalsize]
			\tikzstyle{edgelabelabove}=[above, align= center]
			\tikzstyle{edgelabelbelow}=[below, align= center]
			\begin{tikzpicture}[->,node distance = 1.25cm,scale=0.8, every node/.style={scale=0.8}]
			\node[node] (li) {$p_j$}; 
			\node[node, above right = of li, xshift = 2cm] (ti) {$\top_{j+1}$};
			\node[node, below right = of li, xshift = 2cm] (bi) {$\bot_{j+1}$};
			\node[node, below right = of ti, xshift = 2cm] (H) {$p_{j+1}$}; 
			
			\draw (li) edge[edgelabelabove,] node{$\alpha$} (ti);
			\draw (li) edge[edgelabelbelow,] node{$\alpha$} (bi);
			
			\draw (ti) edge[edgelabelabove] node{$\alpha$} (H);
			\draw (ti) edge[loop above] node{$c_{k_1},c_{k_2},\dots,c_{k_r}$} (ti);
			
			\draw (bi) edge[edgelabelbelow] node{$\alpha$} (H);
			\draw (bi) edge[loop below] node{$c_{l_1},c_{l_2},\dots,c_{l_w}$} (bi);
			
			\end{tikzpicture}
			\caption{Gadget at state $p_j$. 
				Here $C_{k_1},\dots,C_{k_r}$ denote the set of 
				all clauses in which the literal $x_j$ appears
				and $C_{l_1},\dots,C_{l_w}$ denote the set of all
				clauses in which the literal $\overline{x_j}$ appears.}
			\label{fig:gadget-leader}
		\end{center}
	\end{figure}

	\paragraph*{Rules of the follower protocol. }
	For each $0 \le j \le m-1$, there is a rule $(q_j,c_{j+1},q_{j+1})$. 
	Finally, there is a special rule $(q_0,\alpha,q_m)$. Intuitively, by taking 
	a rule of the form $(q_j,c_{j+1},q_{j+1})$ a follower either moves with another follower
	taking the same rule or the follower has heard from the leader that the clause $c_{j+1}$
	becomes true because of the choices that the leader has made. 
	Further, if a follower takes the special rule $(q_0,\alpha,q_m)$, then either that
	follower is moving with another follower taking the same rule or the follower 
	is helping the leader traverse some gadget of the leader protocol.

	\paragraph*{Intuition behind the construction. } We now give an intuition behind the construction
	of the protocol. By a straightforward argument, we can show that there \emph{always exists}
	an even number $e$ such that $C^e_{\init}$ can reach $C^e_{\fin}$ in this protocol.
	Hence, the interesting part of the proof is to show that the formula $\varphi$ is 
	satisfiable if and only if there exists an \emph{odd} number $o$ such that $C^o_{\init}$ can reach $C^o_{\fin}$.
	Now, assume that we have an odd number $o$ of followers
	which start at $q_0$ and the leader starts at $p_0$. By construction, whenever the leader traverses a gadget,
	two followers would have to take the special rule $(q_0,\alpha,q_m)$ in order for the leader to take
	the two rules associated with the letter $\alpha$ in each gadget. Further, in addition to this,
	if at any given point in time, a follower takes the special rule $(q_0,\alpha,q_m)$ then another
	follower would be forced to take the same rule at that point in time as well. Hence, it follows that
	if $C^o_{\init}$ can reach $C^o_{\fin}$, then only an even number of followers along this run
	can use the special rule $(q_0,\alpha,q_m)$. Since all the followers must reach $q_m$ at the end,
	it must be the case that at least an odd number of followers go from $q_0$ to $q_m$ via the ``long'' path, i.e.,
	using the rules $(q_0,c_1,q_1),(q_1,c_2,q_2),\dots,(q_{m-1},c_m,q_m)$. 
	
	Now, for each rule $(q_i,c_i,q_{i+1})$, either a follower can ``pair up'' with another follower and take this rule or 
	a follower can wait for the leader to take any one of her rules corresponding to the letter $c_i$. Since an odd number of followers
	take this long path, not all of them can pair up with another follower. Hence, it must be
	the case that for each $c_i$, at least one follower waits for the leader to take a rule corresponding to $c_i$.
	Since this can happen only if the leader has made a sequence of choices for the variables $x_1,\dots,x_m$
	which make the clause $c_i$ true, it would then follow that $C^o_{\init}$ can reach $C^o_{\fin}$ if and only
	if the formula $\varphi$ is satisfiable.

	\paragraph*{Proof of correctness of the reduction. } We claim that there is a cut-off for this symmetric leader protocol if and only
	if the formula $\varphi$ is satisfiable. First, note that from the configuration $C^{2n}_{\init}$ we can reach $C^{2n}_{\fin}$.
	Indeed, the following run, which only uses rules corresponding to the letter $\alpha$, is a valid run from $C^{2n}_{\init}$ to $C^{2n}_{\fin}$\ :
		$\multiset{p_0, \ 2n\cdot q_0} \rightarrow \multiset{\top_1, \ (2n-1)\cdot q_0,\ q_m}
		\rightarrow~\multiset{p_1, \ (2n-2)\cdot q_0, \ 2 \cdot q_m}
		\rightarrow\multiset{\top_2, \ (2n-3)\cdot q_0, \ 3 \cdot q_m}
		\rightarrow \multiset{p_2, \ (2n-4)\cdot q_0, \ 4 \cdot q_m}
		\rightarrow \dots \rightarrow \multiset{\top_{n-1}, \ q_0, \ (2n-1) \cdot q_m}
		\rightarrow \multiset{p_n, \ 2n \cdot q_m}		$.

	Therefore, for our reduction to be correct,
	by Proposition~\ref{prop:even-odd-leader},
	it suffices to prove that there exists $k$ 
	such that there is a run from $C^{2k+1}_{\init}$ to $C^{2k+1}_{\fin}$ if and only if the formula $\varphi$ is satisfiable.
	
	\medskip \noindent
	($\Leftarrow$): Suppose $\varphi$ is satisfiable.
	Let $T$ be a satisfying assignment of $\varphi$.
	Let $var(x_j)$ be $\top_j$ if~$T(x_j)$ is true
	and let it be $\bot_j$ otherwise. Let $k := m+n$.
	We now construct a run from $C^{2k+1}_{\init}$ to $C^{2k+1}_{\fin}$.
	By construction, the follower protocol does not have any bad states.
	Hence, by Proposition~\ref{prop:important} and the monotonicity property, from $C^{2k+1}_{\init}$
	we can reach the configuration
	$\mathtt{First} := C^{2n+1}_{\init} + \sum_{1 \le i \le m} \ \multiset{2 \cdot q_i}$.
	Hence, it suffices to construct a run from $\mathtt{First}$ to $C^{2k+1}_{\fin}$.
	
	Let $U$ be the empty set.
	From the configuration $\mathtt{First}$, 
	we construct a run which moves the leader from $p_0$ to $p_n$.
	To begin with, the leader is at the state $p_0$ in the configuration $\mathtt{First}$.
	Now, suppose we are currently in some configuration where the leader is at the state $p_j$ for some index $j$ and we want the leader to go to the state $p_{j+1}$.
	First, we move the leader from $p_j$ to $var(x_{j+1})$ and
	a single follower from $q_0$ to $q_m$ by using the rules
	$(p_j,\alpha,var(x_{j+1}))$ and $(q_0,\alpha,q_m)$ respectively.
	Then as long as the leader is at the state $var(x_{j+1})$, 
	we do the following:
	\begin{itemize}
		\item If there is a rule $(var(x_{j+1}),c_k,var(x_{j+1}))$
		such that $c_k \notin U$, then we keep the
		leader at $var(x_{j+1})$ by using the rule
		$(var(x_{j+1}),c_k, var(x_{j+1}))$ 
		and move one follower from $q_{k-1}$ to $q_{k}$ by using the rule $(q_{k-1},c_{k},q_{k})$. Then, we add the letter $c_k$ to the set $U$.
		\item If there is no such rule, then 
		we move the leader from $var(x_{j+1})$ to $p_{j+1}$ by using the rule 
		$(var(x_{j+1}),\alpha,p_{j+1})$ and move one
		follower from $q_0$ to $q_m$ by using the rule $(q_0,\alpha,q_m)$.
	\end{itemize}
	
	Let us analyze this run a bit more closely. Fix some letter $c_k$. 
	Notice that the number of times a rule corresponding to the message $c_k$ is fired by the leader is at most once. 
	Indeed, if we consider the first point in our run when such a rule is fired, a necessary
	precondition for that to happen is that $c_k$ is not present in the contents of $U$ at that point.
	However, once we fire such a rule, we immediately add $c_k$ to our set $U$, 
	thereby preventing the firing of any such rule in the future. 
	
	Further, since $T$ is a satisfying assignment to $\varphi$, there must be a smallest
	index $j$ such that setting the value of the variable $x_j$ to $T(x_j)$ satisfies the clause $C_k$.
	By construction of the gadgets of the leader protocol, this means
	that when the leader moves to the state $var(x_j)$
	along this run, the letter $c_k$ is not present in the contents of $U$ at that time and
	there is a rule $(var(x_j),c_k,var(x_j))$ in the leader protocol. According to our construction
	of the run, this means that the leader fires the rule $(var(x_j),c_k,var(x_j))$.
	Hence, it follows that for every $1 \le k \le m$, a leader rule corresponding to the letter
	$c_k$ is fired exactly once during this run.

	With this observation, let us examine the effect of this run on the followers.
	For each $0 \le j \le n-1$, for the leader to traverse the gadget from $p_j$ to $p_{j+1}$,
	exactly two followers fire the rule $(q_0,\alpha,q_m)$. In addition to this,
	we know that for each $c_k$, the leader fires exactly one rule corresponding to $c_k$
	during the run. Whenever, the leader fires such a rule, the run also forces a 
	follower to fire the rule $(q_{k-1},c_k,q_k)$. Hence, it follows
	that for each $k$, exactly one follower moves from $q_{k-1}$ to $q_k$.
	Therefore, at the end of this run, 
	we would have reached the configuration
	$\mathtt{Second} := C^{2n+1}_{\fin} + \sum_{1 \le i \le m} \ \multiset{2 \cdot q_i}$.
	Using Proposition~\ref{prop:important} and the monotonicity property, from $\mathtt{Second}$ we can reach $C^{2k+1}_{\fin}$,
	thereby completing the construction of the required run.

	\medskip \noindent
	($\Rightarrow$): Suppose there exists $k$ such that there
	is a run $\rho$ from $C^{2k+1}_{\init}$ to $C^{2k+1}_{\fin}$.
	First, observe that for every $1 \le j \le n$, by construction
	of the protocol, the leader 
	has to visit exactly one of the states $\top_j$ or $\bot_j$ along this run.
	To be more precise, there is exactly one state $s_j \in \{\top_j,\bot_j\}$
	for which there exists a configuration $C_j$ along this run such that $C_j(s_j) > 0$.
	This suggests a natural assignment $T$ for the variables of $\varphi$: Set
	$T(x_j)$ to true if $s_j = \top_j$ and otherwise, set~$T(x_j)$ to false.
	The rest of the proof is devoted to proving that $T$ is a
	satisfying assignment of $\varphi$. To prove this,
	we will actually show that for every $1 \le j \le m$, there
	is an occurrence of a leader rule corresponding to the message $c_j$
	at some point along the run $\rho$. By construction of our assignment $T$
	and the leader protocol, this would immediately imply that $T$ is a satisfying assignment
	for $\varphi$.

	Let $\rho := C^{2k+1}_{\init} \xRightarrow{r_1,r_1'} C_1 \xRightarrow{r_2,r_2'} C_2 \dots \xRightarrow{r_w,r_w'} C^{2k+1}_{\fin}$.
	For every $0 \le j \le n-1$, let $R^j_\alpha = \{(p_j, \alpha, \top_j), (p_j,\alpha,\bot_j), (\top_j,\alpha,p_{j+1}), (\bot_j,\alpha,p_{j+1})\}$. We claim that for every $j$,
	there are exactly two occurrences of rules from $R^j_\alpha$ in $\rho$. First, only at most two rules of $R^j_\alpha$ can appear in $\rho$, since otherwise it would mean that
	the leader visited both $\top_j$ and $\bot_j$, which we have already established as a contradiction. Further, for the leader to move out
	of the state $p_j$ and go to $p_n$, by construction of the protocol, at least two of these rules must be fired. 
	It then follows that if we set $R_\alpha := \bigcup\limits_{0 \le j \le n-1} R^j_\alpha$, then there are exactly $2n$ occurrences of rules from $R_\alpha$ in $\rho$.
	
	Let us now observe the parity of the number of occurrences of the rule $r := (q_0,\alpha,q_m)$ along $\rho$. Notice that for every $i$, the rule $r$ can appear exactly once
	in the pair $(r_i,r_i')$ if and only if the other rule in the pair belongs to $R_\alpha$. Since the number of occurrences of rules from~$R_\alpha$ is exactly $2n$,
	it follows that the number of occurrences of the rule $r$ in $\rho$ is even. Since the initial configuration has an odd number of follower agents
	in $q_0$, it follows that the number of occurrences of the rule $r_{c_1} := (q_0,c_1,q_1)$ is odd and hence non-zero. By definition of the run $\rho$ and the
	follower protocol, for every $i$, the number of times the rule $r_{c_i} := (q_{i-1},c_i,q_i)$ occurs in $\rho$ must be equal to the number of
	times the rule $r_{c_{i+1}} := (q_i, c_{i+1},q_{i+1})$ occurs in $\rho$. It then follows that the all the rules in the set~$\{r_{c_i} : 1 \le i \le m\}$ occur an odd number of 
	times in $\rho$.

	Let us fix some $1 \le j \le m$ and let us consider the rule $r_{c_j}$. Since it appears an odd number of times in $\rho$,
	there must be an index $i$ such that exactly one rule in the pair $(r_i,r_i')$ is $r_{c_j}$. The only possibility for the other rule in that pair
	is a leader rule of the form $(p,c_j,p)$ for some state $p$. Hence, we have shown that there is at least one occurrence of a leader rule
	corresponding to the message $c_j$ along the run $\rho$. Since $j$ was an arbitrary number between 1 and $m$, the proof is complete.
\end{proof}

\section{Conclusion}\label{sec:conclusion}
We have shown that the cut-off problem for Petri nets and rendez-vous protocols is \P-complete. For the special case of symmetric rendez-vous protocols we have proved
that the cut-off problem is in \NC \ for the leaderless case and is \NP-complete in the presence of a leader. Further, we have also studied the bounded-loss cut-off problem
and shown that it is \P-complete and \NL-complete for leaderless rendez-vous and leaderless symmetric protocols respectively. Many of these results follow from two lemmas,
the Scaling and Insertion lemmas, which we believe might be of independent interest. As future work, it might be worth studying other variants of the cut-off problem dealing 
with different types of properties such as liveness specifications.

\bibliographystyle{alphaurl}
\bibliography{refs}

\end{document}